\documentclass[superscriptaddress, amsmath,amssymb,
aps, pra, letterpaper, tightenlines, reprint, notitlepages]{revtex4-2}

\usepackage{graphicx}
\usepackage{floatrow}

\usepackage{dcolumn}
\usepackage{bm}

\usepackage{amsfonts}
\usepackage{dsfont}
\usepackage{bm}
\usepackage{bbold}
\usepackage{xcolor}
\usepackage{lipsum,babel}
\usepackage[normalem]{ulem}

\usepackage[utf8]{inputenc}
\usepackage{amsmath}
\usepackage{physics}
\usepackage{hyperref}
\usepackage{amssymb}
\usepackage{braket}
\usepackage{mathtools}
\usepackage{color}
\usepackage{amsthm}
\usepackage{algorithm}
\usepackage{algorithmic}

\usepackage{lipsum}

\usepackage{multirow}

\DeclareMathOperator{\E}{\mathbb{E}}

\DeclareMathOperator{\Prob}{Pr}

\DeclareMathOperator{\Per}{Per}

\DeclareMathOperator{\poly}{poly}

\DeclareMathOperator{\Haar}{Haar}

\DeclareMathOperator{\sort}{\mathsf{sort}}

\renewcommand{\abs}[1]{\left\lvert #1 \right\rvert}

\theoremstyle{plain}
\newtheorem{theorem}{Theorem}
\newtheorem{corollary}{Corollary}
\newtheorem{proposition}{Proposition}

\newtheorem{definition}{Definition}
\newtheorem{lemma}{Lemma}

\newtheorem{remark}{Remark}

\theoremstyle{definition}

\begin{document}

\title{Polynomial-Time Classical Simulation of BosonSampling under \\ Any Constant-Fraction Photon Loss}

\author{Byeongseon Go}
\email{gbs1997@snu.ac.kr}
\affiliation{NextQuantum Center, Department of Physics and Astronomy, Seoul National University, Seoul 08826, Republic of Korea}
\author{Hyunseok Jeong}
\email{h.jeong37@gmail.com}
\affiliation{NextQuantum Center, Department of Physics and Astronomy, Seoul National University, Seoul 08826, Republic of Korea}

\begin{abstract}

Understanding how photon loss changes the classical complexity of BosonSampling is central to assessing the robustness of near-term photonic quantum advantage.
In this work, we present a polynomial-time classical simulation algorithm for lossy BosonSampling that applies whenever a constant fraction of the input photons survives.
We consider the standard fixed-loss model, where $N$ input photons propagate through an $M$-mode interferometer and exactly $N-n$ photons are lost uniformly at random, leaving $n$ surviving photons at the output.
For any constant surviving fraction $n/N\in(0,1)$, the algorithm samples from a distribution within total variation distance $\varepsilon$ of the fixed-loss BosonSampling distribution with probability at least $1-\delta$ over Haar-random interferometers, in time $\operatorname{poly}(M,N,\varepsilon^{-1},\delta^{-1})$.
We further extend the fixed-loss analysis to the uniform-transmission-loss model and obtain analogous polynomial-time simulability for any fixed transmission rate $\eta\in(0,1)$.
These results substantially extend the known provably simulable regime of $n=O(\sqrt{\varepsilon N})$ surviving photons or $\eta=O(\sqrt{\varepsilon/N})$ transmission rate, and demonstrate that even the experimentally relevant linear-survival regime can admit efficient classical simulation under the conditions considered here.

\end{abstract}

\maketitle


\section{Introduction}\label{sec:introduction}

BosonSampling is a restricted linear-optical model in which $N$ photons are injected into an $M$-mode random interferometer and detected using photon-number-resolving detectors~\cite{aaronson2011computational}.  
Strong complexity-theoretic evidence suggests that BosonSampling is classically intractable~\cite{aaronson2011computational, bouland2022noise, bouland2025exponential, bouland2026complexity, hamilton2017gaussian, deshpande2022quantum, grier2022complexity, go2026computational}, which, together with the experimental feasibility of linear-optical systems, has made it a leading candidate for demonstrating quantum advantage.
Since its original proposal, rapid experimental progress has led to several large-scale photonic implementations reporting quantum-advantage demonstrations~\cite{zhong2020quantum, zhong2021phase, madsen2022quantum, deng2023gaussian, liu2026gaussian}.

However, realistic implementations inevitably suffer from noise, which can substantially reduce the classical complexity of the resulting sampling problem~\cite{renema2018efficient,moylett2019classically,shchesnovich2019noise,brod2020classical,oh2025classical,renema2020simulability,van2024efficient,villalonga2021efficient,bulmer2022boundary,oh2024classical,umanskii2024classical,shi2022effect,bulmer2024simulating, aharonov2023polynomial, schuster2025polynomial, noh2020efficient, gao2018efficient}.
It is therefore not sufficient merely to realize large-scale experiments; one must also establish whether their actual noisy output distributions remain outside the reach of efficient classical algorithms.
A comprehensive assessment of quantum advantage consequently requires both stronger hardness results that incorporate realistic imperfections~\cite{aaronson2016bosonsampling, go2025quantum, go2025sufficient, go2026hardness} and sharper classical algorithms that exploit such imperfections to probe the limits of the claimed advantage.

Photon loss is an unavoidable imperfection in BosonSampling implementations and cannot be intrinsically corrected within its restricted linear-optical framework.
Previous results established polynomial-time classical simulation when the surviving fraction vanishes as the input photon number increases~\cite{oszmaniec2018classical, garcia2019simulating, qi2020regimes, oh2021classical, liu2023simulating, park2025matrix, lim2026classical}.
A central question is whether such guarantees extend to a fixed nonzero surviving fraction, where the number of surviving photons grows \textit{linearly} with the input photon number.
Since an extensive number of photons still survives, polynomial-time simulability in this regime is not immediate. 
At the same time, this regime arises naturally under a fixed overall transmission probability and is directly relevant to current photonic experiments~\cite{zhong2020quantum, zhong2021phase, madsen2022quantum, deng2023gaussian, liu2026gaussian}.
Characterizing its computational complexity is therefore important both for understanding how photon loss reshapes the computational complexity of BosonSampling and for assessing the robustness of photonic quantum advantage.




\begin{figure*}[t]
\includegraphics[width=0.85\linewidth]{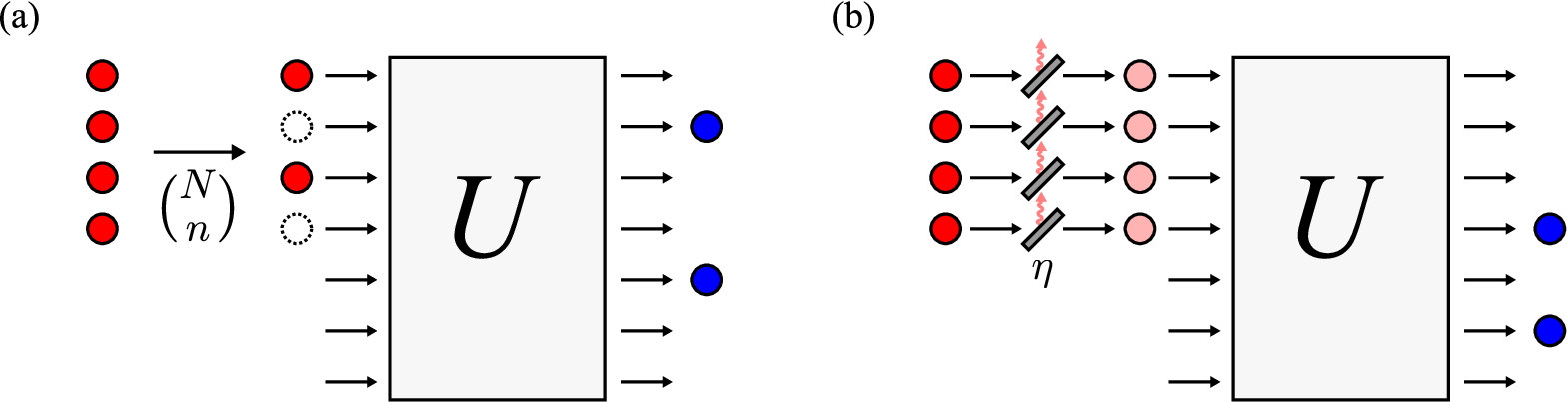}
\caption{Illustration of two representative photon-loss models for BosonSampling.
(a) Fixed-loss model, shown for $N=4$ input photons and $n=2$ surviving photons.
In this model, exactly $n$ out of the $N$ input photons survive, with the surviving subset chosen uniformly at random.
The resulting output distribution is therefore the uniform average over the $\binom{N}{n}$ BosonSampling output distributions corresponding to all possible $n$-photon input subsets of the $N$ input modes.
Our main simulation algorithm in Theorem~\ref{thm:main-informal} is developed for this setting.
(b) Uniform-transmission-loss model, shown for $N=4$ input photons.
Here each of the $N$ input photons is transmitted independently with probability $\eta$, so the surviving photon number follows a binomial distribution.
For the typical surviving-photon number $n\simeq \eta N$, the two loss models are reducible to each other (see, e.g., Ref.~\cite{oszmaniec2018classical} or Sec.~\ref{sec:extensions}). 
Accordingly, our fixed-loss simulation framework extends directly to the uniform-transmission-loss model for the regime $\eta \simeq n/N$.
}
\label{fig: schematics}
\end{figure*}


In this work, we present a polynomial-time classical simulation algorithm for lossy BosonSampling that applies throughout the linear-survival regime.
Our construction focuses on the standard fixed-loss model, in which $N$ input photons propagate through an $M$-mode Haar-random interferometer, exactly $N-n$ photons are lost uniformly at random, and the remaining $n$ photons are measured at the output~\cite{aaronson2016bosonsampling, oszmaniec2018classical, wang2018toward, brod2020classical, oh2023classical} (see, e.g., Fig.~\ref{fig: schematics}(a)).
We initially target the collision-collapsed distribution in Ref.~\cite{oh2023classical}, which retains all collision-free outcomes while coarse-graining all collision outcomes.
This choice is natural under the dilute-mode condition considered here, where collision-free outcomes carry the dominant probability weight.
For any constant surviving fraction $n/N\in(0,1)$, our algorithm samples from a distribution within total variation distance $\varepsilon$ of the target distribution in time $\operatorname{poly}(M,N,\varepsilon^{-1},\delta^{-1})$, with probability at least $1-\delta$ over Haar-random interferometers.

Moreover, while the above result is formulated for the collision-collapsed output distribution, imposing an additional bosonic-birthday-bound condition on the mode number extends the guarantee to the full Fock-output distribution, with the collision events absorbed into the algorithmic error.
The result further extends to the uniform-transmission-loss model depicted in Fig.~\ref{fig: schematics}(b), another standard model of photon loss, for which the resulting simulation again runs in time $\operatorname{poly}(M,N,\varepsilon^{-1},\delta^{-1})$ for any fixed transmission probability $\eta \in (0,1)$.

These results therefore demonstrate that efficient classical simulation can persist even in the linear-survival regime. 
They substantially extend the known regime of provable polynomial-time classical simulability beyond $n=O(\sqrt{\varepsilon N})$ surviving photons or  $\eta=O(\sqrt{\varepsilon/N})$ transmission rate~\cite{oszmaniec2018classical, garcia2019simulating} under the dilute-mode condition.
This extension comes with tunable approximation accuracy, with the runtime remaining polynomial even when the target approximation error decreases inverse-polynomially with system size.
Consequently, in the settings considered here, increasing the input photon number while maintaining a fixed transmission rate does not asymptotically drive the problem outside this classically simulable regime.
This identifies a concrete limitation of fixed-transmission scaling alone as a strategy for achieving scalable photonic quantum advantage.

The remainder of the paper is organized as follows.
Section~\ref{sec:related} summarizes our main results and contributions to the literature.
Section~\ref{sec:preliminaries} defines fixed-loss BosonSampling and introduces the first-quantized representation underlying our sampling algorithm.
Section~\ref{sec:main-result} states the main theorem and outlines its proof.
Section~\ref{sec:hoeffding} derives the decomposition of the output probabilities and establishes the truncation-error bound under the Gaussian measure.
Section~\ref{sec:haar-error} transfers this bound to Haar-random submatrices and derives the corresponding error bound over the output space.
Section~\ref{sec:marginals} develops an efficient procedure for evaluating the marginals required by the sampler.
Section~\ref{sec:sampling} combines these ingredients with the marginal-sampling procedure to prove the main theorem.
Section~\ref{sec:full-fock} extends the analysis to the full Fock-output distribution and establishes the corresponding approximation guarantee.
Section~\ref{sec:extensions} generalizes the fixed-loss analysis to the uniform-transmission-loss model.
Finally, Sec.~\ref{sec:discussion} concludes with a discussion of the implications of our results and open directions.

\section{Overview of contributions}\label{sec:related}

In this section, we summarize our main simulability consequences for lossy BosonSampling and compare them with closely related results. 





\subsection{Our simulation algorithm and its consequences}\label{subsec: our result}

We mainly consider fixed-loss BosonSampling with $N$ input photons sent through an $M$-mode Haar-random interferometer.
As illustrated in Fig.~\ref{fig: schematics}(a), exactly $N-n$ photons are lost uniformly at random, so that $n$ photons survive and are subsequently measured at the output.
We first focus on the collision-collapsed distribution~\cite{oh2023classical}, which retains each collision-free Fock outcome separately while aggregating the entire collision sector into a single outcome.
This coarse-graining is natural in the dilute-mode regime of primary interest, where the number of modes $M$ is parametrically larger than the number of surviving photons $n$ and collision events are therefore rare~\cite{aaronson2011computational, arkhipov2012bosonic}.

Our main theorem can then be summarized as follows.

\begin{theorem}[Informal]\label{thm:main-informal}
Let $1\leq n<N$, let $\varepsilon,\delta\in(0,1)$, and suppose $M=\Omega(n^2N)$.
Choose $d\in\{0,\ldots,n-1\}$ such that
\begin{align}
 \frac{(n/N)^{d+1}}{1-n/N}
 &\leq
 O(\varepsilon^2\delta).
 \label{eq:intro-informal-degree-condition}
\end{align}
Then there exists a randomized classical algorithm that, for at least a $1-\delta$ fraction of Haar-random interferometers, samples from a distribution within total variation distance $\varepsilon$ of the collision-collapsed fixed-loss BosonSampling distribution, with runtime $MnN^3\varepsilon^{-2}\delta^{-1}2^{O(d)}$.
\end{theorem}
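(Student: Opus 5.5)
The plan is to write the fixed-loss output probability as a degree-truncated expansion, bound the truncation error on average, and sample with a marginal-based chain-rule sampler. For a collision-free outcome $T=(t_1,\dots,t_n)$, the fixed-loss probability is
\begin{align}
p(T)=\binom{N}{n}^{-1}\sum_{S\subseteq[N],|S|=n}\abs{\Per(U_{T,S})}^2 .
\end{align}
Equivalently, in first quantization, it is $\binom{N}{n}^{-1}$ times an $n\times n$ ``permanent-like'' object built from the $n\times N$ submatrix $A=U_{T,[N]}$. Concretely, I will use the identity $\sum_S \abs{\Per(A_S)}^2=\sum_{\sigma\in S_n}\prod_i (AA^\dagger)_{i\sigma(i)}$ restricted to injective column assignments, i.e.\ a sum over pairs of injections $[n]\to[N]$ with equal images. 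I then decompose this quantity into orthogonal components indexed by the number $k$ of ``nontrivial'' (off-diagonal, i.e.\ $\sigma(i)\neq i$) pairings, in the spirit of a Hoeffding decomposition. The $k=0$ term is the distinguishable-particle probability $\prod_i (AA^\dagger)_{ii}$, suitably corrected for injectivity.

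First, I will bound the truncation error under a Gaussian model. Rescale $A=G/\sqrt M$ with $G$ i.i.d.\ complex Gaussian. Components of different degree $k$ are orthogonal in $L^2$ of the Gaussian measure. The key computation is that the second moment of the degree-$k$ component, relative to $\E[p]^2$, decays like $(n/N)^k$. Heuristically, each nontrivial transposition requires two rows to share column sets, which costs a factor $n/N$ from the subset averaging. Summing the tail over $k>d$ gives $\sum_{k>d}(n/N)^k=(n/N)^{d+1}/(1-n/N)$. Second, I will transfer this to Haar-random submatrices: for $M=\Omega(n^2N)$, the $n\times N$ block of a Haar unitary is close to Gaussian, and the relevant low moments differ by $1+O(n^2N/M)$ factors. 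Summing over collision-free outcomes and using Markov's inequality then gives, with probability $1-\delta$ over $U$,
\begin{align}
\sum_T\abs{p(T)-p_d(T)}\le \varepsilon .
\end{align}
This uses Cauchy--Schwarz, together with the fact that the number of outcomes times $\E[p]$ equals $1$, and the hypothesis \eqref{eq:intro-informal-degree-condition}. The collision sector is handled as one aggregated outcome whose probability is approximated by $1$ minus the collision-free total.

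Third, I will implement a sampler for $p_d$. The degree-$\le d$ truncation is a sum over partial permutations with at most $d$ moved points. Marginals of $p_d$ on the first $j$ output positions factorize on the unmoved rows, since each such row contributes $\sum_{m}(AA^\dagger)_{ii}$-type sums that can be computed row by row. Evaluating a marginal then costs $2^{O(d)}$ times polynomial factors: enumerate the cycle structure on the $\le d$ moved rows and use inclusion--exclusion or dynamic programming for the injectivity constraint over the $N$ columns, which gives the $N^3$ factor. Since $p_d$ may be slightly negative or unnormalized, I will use the standard marginal-sampling procedure with clipping of negative conditional probabilities. The TV error of this procedure is controlled by a constant times the $\ell_1$ error, and the $\varepsilon^{-2}\delta^{-1}$ runtime factor comes from the accuracy with which marginals must be tracked across the $n$ steps over $M$ modes, giving total time $MnN^3\varepsilon^{-2}\delta^{-1}2^{O(d)}$.

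The main obstacle is the first step: establishing the exact orthogonal decomposition and the clean $(n/N)^k$ decay of the component variances under the Gaussian measure. This requires careful combinatorics over pairs of injections and permutation cycle types. I will first attempt it by computing $\E\abs{\cdot}^2$ through Wick contractions, showing that each degree-$k$ contribution pairs only with itself and picks up a factor $\binom{N-n+\cdot}{\cdot}/\binom{N}{n}$-type ratio per moved point.
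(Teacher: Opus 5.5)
Your Gaussian step actually works as an alternative to the paper's. The terms $T_\sigma:=\sum_{\pi:[n]\hookrightarrow[N]}\prod_i A_{i,\pi(i)}\overline{A_{\sigma(i),\pi(i)}}$ are pairwise orthogonal under the Gaussian measure for distinct $\sigma\in S_n$, and a direct count gives $\sum_{\abs{\mathrm{supp}\,\sigma}>d}\E\abs{T_\sigma}^2/((N)_n)^2\leq e\,R_d(n,N)$.

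The genuine gap is Step 3, and your choice of degree causes it. You count only moved rows and keep the fixed-row factors $\abs{A_{i,\pi(i)}}^2$ uncentered, so every retained term still involves all $n$ rows, coupled through the injectivity of $\pi$. In particular, your $k=0$ term is $\sum_{\abs{S}=n}\Per(W_S)$ with $W_{ic}=\abs{A_{ic}}^2$. This is a permanent of a nonnegative $n\times N$ matrix, i.e. the distinguishable-photon probability. The signed marginal sampler needs pointwise values of $p_d$ at its last step, and these contain this term. There is no row-by-row factorization, and no known polynomial-time exact evaluation (nonnegative permanents are \#P-hard in the worst case). Dropping injectivity to get $\prod_i(AA^\dagger)_{ii}$ is not a small correction either: it inflates the mean by $N^n/(N)_n\approx e^{n^2/(2N)}$, which is exponential when $n/N$ is constant. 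The fix is to center the diagonal factors too, writing $\abs{X_{al}}^2=1+(\abs{X_{al}}^2-1)$, and to count the degree by the number of input columns carrying a centered bilinear $B^{(l)}_{ab}$. An order-$j$ term then involves only $j$ columns and $j$ rows, and the remaining $n-j$ rows collapse into the constant $(n-j)!$. This is exactly the paper's Hoeffding decomposition over column subsets $J$ with weights $\binom nj/\binom Nj$.

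Even after this fix, the $2^{O(d)}$ cost does not follow from enumerating cycle types or from a dynamic program. At order $j$, a collision-free prefix marginal still sums over $j$-subsets of columns, $j$-subsets of physical modes in $[M]$, and two bijections. That is $(MN)^{O(d)}$ terms, which is quasi-polynomial because $d=\Theta(\log(\varepsilon^{-1}\delta^{-1}))$ in the linear-survival regime. The idea you are missing is random color coding of columns and output modes, combined with the square-free MacMahon master theorem. This turns the sum over all mode subsets into one coefficient of an inverse determinant, which is reduced to $N\times N$ via Sylvester and expanded by a finite trace--log series. The collision-free completion count depends only on the number $p$ of selected prefix modes and is tracked by an extra variable $z$.

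The coloring variance then forces averaging $L=\Theta(de^{2d}/(\varepsilon^2\delta))$ independent colorings, and this is where the $\varepsilon^{-2}\delta^{-1}$ in the runtime comes from. Your explanation via ``tracking accuracy across the $n$ steps'' does not hold, since the signed marginal sampler uses exact marginals of a single fixed signed function. Finally, phrase the Haar transfer as pointwise density domination $p_{\mathrm H}\leq C_Hp_{\mathrm G}$ for $M\geq C_0n^2N$, not as a comparison of low moments, since $\abs{p-p_d}^2$ has degree $4n$ in the matrix entries.
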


The theorem is parametrized by the truncation degree $d$, which controls the approximation error and the runtime, providing a tunable tradeoff between them.

The main consequence of Theorem~\ref{thm:main-informal} is polynomial-time classical simulability for every fixed surviving fraction $n/N\in(0,1)$, thus covering the full linear-survival regime.

\begin{corollary}[Polynomial-time simulability in the linear-survival regime]\label{cor:fixed-loss-regimes}
For every fixed surviving fraction $n/N\in(0,1)$, the algorithm of Theorem~\ref{thm:main-informal} runs in time $\operatorname{poly}(M,N,\varepsilon^{-1},\delta^{-1})$.
\end{corollary}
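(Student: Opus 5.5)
The corollary follows from Theorem~\ref{thm:main-informal} by an explicit choice of the truncation degree $d$ and a check that $2^{O(d)}$ is polynomial. Write $r=n/N\in(0,1)$, a fixed constant. Condition~\eqref{eq:intro-informal-degree-condition} asks for
\begin{align}
 r^{d+1}\leq c\,(1-r)\,\varepsilon^2\delta
\end{align}
for the absolute constant $c>0$ hidden in the $O(\cdot)$. Since $0<r<1$, the left-hand side decreases geometrically in $d$. The smallest admissible choice is therefore
\begin{align}
 d^\ast=\left\lceil \frac{\log\!\big(1/(c(1-r)\varepsilon^2\delta)\big)}{\log(1/r)}\right\rceil - 1,
\end{align}
or $d^\ast=0$ if this quantity is negative. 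Because $r$ is fixed, both $\log(1/r)$ and $\log(1/(1-r))$ are constants, and so
\begin{align}
 d^\ast=O\big(\log(1/\varepsilon)+\log(1/\delta)+1\big).
\end{align}

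Substituting $d^\ast$ into the runtime gives
\begin{align}
 2^{O(d^\ast)}=2^{O(\log(1/\varepsilon)+\log(1/\delta)+1)}=\operatorname{poly}(\varepsilon^{-1},\delta^{-1}),
\end{align}
where the exponent of the polynomial depends only on the fixed ratio $r$. The total runtime is then
\begin{align}
 MnN^3\varepsilon^{-2}\delta^{-1}2^{O(d^\ast)}=\operatorname{poly}(M,N,\varepsilon^{-1},\delta^{-1}),
\end{align}
using $n<N$.

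The one genuine obstacle is that the theorem requires $d\in\{0,\ldots,n-1\}$, so I must check that $d^\ast\le n-1$. I would split into two cases.

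\textbf{Case $d^\ast\le n-1$.} The theorem applies directly with $d=d^\ast$, and the bound above finishes the argument.

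\textbf{Case $d^\ast>n-1$.} Since $d^\ast=O(\log(1/\varepsilon)+\log(1/\delta)+1)$, this case forces $n=O(\log(1/(\varepsilon\delta)))$. I would take $d=n-1$, the maximal degree. I expect this to correspond to no truncation at all, i.e.\ exact marginal computation over all degrees, so the resulting approximation error should vanish and the error target is met. In any event the cost factor is $2^{O(n-1)}=2^{O(d^\ast)}=\operatorname{poly}(\varepsilon^{-1},\delta^{-1})$, so the runtime bound is unchanged.

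The point needing care in the second case is whether $d=n-1$ really yields zero truncation error, or at least error no larger than the degree condition would allow. This depends on the structure of the decomposition in Sec.~\ref{sec:hoeffding}, and that is where I would verify it. If the full-degree error does not vanish, a fallback is to note that $n=O(\log(1/(\varepsilon\delta)))$ photons can be simulated exactly by brute force. This costs $\operatorname{poly}(M)\,2^{O(n)}$, which is again $\operatorname{poly}(M,N,\varepsilon^{-1},\delta^{-1})$.

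In both cases the runtime is $\operatorname{poly}(M,N,\varepsilon^{-1},\delta^{-1})$ for every fixed $n/N\in(0,1)$, which is the claim of the corollary.
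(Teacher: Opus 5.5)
Your main argument is the paper's: for fixed $r=n/N$ the degree condition is met by $d=O(1+\log(\varepsilon^{-1}\delta^{-1}))$, as in Eq.~\eqref{eq:degree-choice}. Hence $2^{O(d)}$ is polynomial, and the runtime of Theorem~\ref{thm:main} is $\operatorname{poly}(M,N,\varepsilon^{-1},\delta^{-1})$.

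Your treatment of the edge case $d^\ast>n-1$ contains a concrete error. Taking $d=n-1$ is \emph{not} the untruncated case. By Proposition~\ref{prop:exact-decomposition}, the decomposition runs up to order $j=n$, so $\Phi_{\leq n}=\Phi$. The truncation $\Phi_{\leq n-1}$ drops the top block $\binom{N}{n}^{-1}\sum_{\abs{J}=n}H_J$, whose normalized Gaussian mean-square error is $R_{n-1}(n,N)=1/\binom{N}{n}>0$. For instance, this equals $1/2$ when $n=1$ and $N=2$, which need not be below $\varepsilon^2\delta/C$.

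The correct fallback is $d=n$, which gives $R_n(n,N)=0$. The formal Theorem~\ref{thm:main} builds this in (``otherwise, set $d=n$''), and Proposition~\ref{prop:randomized-marginal-complexity} allows $d=n$. Failure of the condition at $d=n-1$ means $r^{n}>(1-r)\varepsilon^2\delta/C$. This gives $n<\log\!\big(C/((1-r)\varepsilon^2\delta)\big)/\log(1/r)=O(1+\log(\varepsilon^{-1}\delta^{-1}))$. So the factor $2^{O(n)}$ is again polynomial, by the same bound you derived for $d^\ast$, and no separate argument is needed.

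Your brute-force fallback is therefore unnecessary, and as written it is not quite right either. Naive exact simulation that enumerates the $\binom{M+n-1}{n}$ outcomes costs $M^{\Theta(n)}$. For $n=\Theta(\log(\varepsilon^{-1}\delta^{-1}))$ this is only quasi-polynomial. You would need an exact sampler of cost $\operatorname{poly}(M)2^{O(n)}$, for example sampling the surviving subset uniformly and then running an efficient exact $n$-photon sampler. Even then, you would be using a different algorithm, whereas the corollary is a statement about the algorithm of Theorem~\ref{thm:main-informal}.
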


As noted, the target distribution in Theorem~\ref{thm:main-informal} is the collision-collapsed formulation, which merges all collision outcomes rather than resolving individual collision patterns.
Nevertheless, the bosonic birthday bound~\cite{aaronson2011computational, arkhipov2012bosonic, qi2020regimes, go2026computational} guarantees that, under the dilute-mode condition $M=\Omega(n^2N)$ in Theorem~\ref{thm:main-informal}, the total collision probability is sufficiently suppressed so that collision-free outcomes carry the dominant probability weight (see, e.g., Lemma~\ref{lem:bosonic-birthday-bound}).
Furthermore, imposing an additional mode-number condition extends the collision-collapsed approximation guarantee to the full Fock-output distribution, with the total collision probability incorporated as an additional contribution to the algorithmic error.

\begin{corollary}[Full Fock-output approximation]\label{cor:full-fock-summary}
Under the additional mode-number condition $M=\Omega(n^2/(\varepsilon\delta))$, the collision-collapsed sampler of Theorem~\ref{thm:main-informal} yields a full fixed-loss Fock-output approximation guarantee, while preserving asymptotically the same runtime, accuracy, and success-probability guarantees.
\end{corollary}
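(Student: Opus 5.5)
The plan is to compare the collision-collapsed target with the full Fock-output distribution directly, and to charge the difference to the total collision probability. Let $p_U$ be the full fixed-loss Fock-output distribution for interferometer $U$, and let $\tilde p_U$ be its collision-collapsed version. In $\tilde p_U$ every collision-free outcome keeps its probability, and the whole collision sector is merged into one symbol $\perp$. Let $q_U$ be the sampler's output distribution from Theorem~\ref{thm:main-informal}. We convert it into a sampler over full Fock outcomes as follows. Whenever the sampler outputs $\perp$, we replace it by an arbitrary fixed collision pattern, or simply by a fixed outcome, which need only be valid. Call the resulting distribution $q'_U$. By the triangle inequality,
\[
\|q'_U-p_U\|_{\mathrm{TV}}\le \|q_U-\tilde p_U\|_{\mathrm{TV}}+\|\tilde p_U\text{ (post-processed)}-p_U\|_{\mathrm{TV}}.
\]
The first term is at most $\varepsilon$ on a $1-\delta$ fraction of $U$. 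The post-processing is deterministic, so it does not increase TV distance. For the second term, note that $\tilde p_U$ with $\perp$ replaced by the fixed pattern agrees with $p_U$ on all collision-free outcomes. The two distributions differ only in how the collision mass $P_{\mathrm{col}}(U)$ is spread, so this term is at most $P_{\mathrm{col}}(U)$.

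The remaining task is to bound $P_{\mathrm{col}}(U)$ with high probability over Haar $U$. Here I would invoke the bosonic birthday bound, Lemma~\ref{lem:bosonic-birthday-bound}. Because fixed loss is a uniform mixture over $n$-photon input subsets, the Haar-averaged collision probability for each subset is $O(n^2/M)$. So $\E_U[P_{\mathrm{col}}(U)]=O(n^2/M)$ holds for the mixture as well, by linearity. Markov's inequality then gives $P_{\mathrm{col}}(U)\le O(n^2/(M\delta))$ except on a set of $U$ of measure at most $\delta$. Under $M=\Omega(n^2/(\varepsilon\delta))$ this is at most $\varepsilon$.

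A union bound over the two failure events then gives total error $2\varepsilon$ with success probability $1-2\delta$. Rescaling $\varepsilon,\delta$ by constants recovers the stated guarantees, and changes the runtime $MnN^3\varepsilon^{-2}\delta^{-1}2^{O(d)}$ and the degree condition only by constant factors. This is why the asymptotic runtime, accuracy and success probability are preserved.

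The main obstacle is making sure the form of Lemma~\ref{lem:bosonic-birthday-bound} really applies to the fixed-loss mixture. The photons are distinguishable across subsets only in the sense of a classical mixture, so linearity suffices. One should still check that the lemma's bound is stated for $n$-photon inputs in $M$ modes without extra $N$-dependence. If the lemma instead bounds the collision probability only on average over the Haar measure in first-moment form, the Markov step above handles it. If it already gives a high-probability bound, we use it directly.
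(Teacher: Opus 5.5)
Your proposal is correct and follows the paper's proof of Corollary~\ref{cor:full-fock-output}: you relabel the dummy collision symbol as a fixed Fock outcome, bound the extra error by $p_{\mathrm{coll}}(U)$ via data processing and the triangle inequality, control $p_{\mathrm{coll}}(U)$ by averaging the bosonic birthday bound over surviving subsets and applying Markov's inequality, and finish with a union bound. The only cosmetic differences are that the paper runs the collapsed sampler at $(\varepsilon/2,\delta/2)$ from the outset instead of rescaling afterward, and that the birthday lemma's hypothesis $M>2n^2$ is already implied by $M\geq C_0n^2N$.
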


Lastly, although Theorem~\ref{thm:main-informal} is formulated for the fixed-loss setting, the same result extends directly to the conventional uniform-transmission-loss model in Fig.~\ref{fig: schematics}(b).
The extension follows by first sampling the number of surviving photons from the corresponding binomial distribution and then applying the fixed-loss algorithm conditioned on the sampled photon-number sector.

\begin{corollary}[Uniform-transmission loss]\label{cor:uniform-transmission-summary}
For the uniform-transmission-loss model, where each of the $N$ photons is independently transmitted with probability $\eta$, Theorem~\ref{thm:main-informal} implies polynomial-time classical simulability for every fixed transmission probability $\eta\in(0,1)$ under the stated dilute-mode condition. 
\end{corollary}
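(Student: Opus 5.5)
We reduce the uniform-transmission-loss model to the fixed-loss model sector by sector and call the algorithm of Theorem~\ref{thm:main-informal} on the sampled sector. Under uniform transmission, the output state is $\sum_{n=0}^{N}\binom{N}{n}\eta^n(1-\eta)^{N-n}\rho_n$. Here $\rho_n$ is the fixed-loss state with exactly $n$ survivors, because conditioned on $n$ survivors the surviving subset is uniformly random. The output distribution $q_\eta$ is therefore the binomial mixture $\sum_n B_{N,\eta}(n)\,p_n$, where $p_n$ is the fixed-loss output distribution with $n$ survivors (collision-collapsed, or full Fock under the condition of Corollary~\ref{cor:full-fock-summary}). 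The sampler draws $n\sim B_{N,\eta}$ exactly in $O(N)$ time, outputs the vacuum if $n=0$, and otherwise runs the fixed-loss sampler for sector $n$ with accuracy $\varepsilon/2$ and failure parameter $\delta'$. By convexity of total variation distance, the error is at most $\sum_n B_{N,\eta}(n)\,\mathrm{TV}(\tilde p_n,p_n)$.

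We restrict to a typical window $n\in[1,n_{\max}]$ with $n_{\max}=\lceil \eta' N\rceil$, where $\eta'=(1+\eta)/2<1$. By Hoeffding, $\Pr[n>n_{\max}]\le e^{-(1-\eta)^2N/2}$. For small $N$ this tail is not below $\varepsilon/4$, but then $N=O(\log(1/\varepsilon)/(1-\eta)^2)$, and the constraint $n<N$ matters only at $n=N$. The case $n=N$ is ordinary lossless BosonSampling with $N=O_\eta(\log 1/\varepsilon)$ photons, which can be sampled exactly in $\mathrm{poly}(M,1/\varepsilon)$ time by brute-force permanents. Alternatively, we choose $\eta'$ so that the tail is at most $\varepsilon/4$ once $N\ge C_\eta\log(1/\varepsilon)$. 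Either way, every sector we must treat with the fixed-loss algorithm satisfies $n/N\le\eta'<1$.

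In each such sector, condition~\eqref{eq:intro-informal-degree-condition} holds with $d+1=\lceil \log(C(1-\eta')^{-1}\varepsilon^{-2}\delta'^{-1})/\log(1/\eta')\rceil$. Since $\eta'$ is a constant, $2^{O(d)}=\mathrm{poly}(\varepsilon^{-1},\delta'^{-1})$. If this $d$ exceeds $n-1$, the sector is small and we use $d=n-1$, i.e.\ no truncation, which is $2^{O(n)}$ and again polynomial. The dilute-mode condition $M=\Omega(n^2N)$ is implied by $M=\Omega(N^3)$ uniformly over all sectors. The runtime is therefore $MN^4\,\mathrm{poly}(\varepsilon^{-1},\delta'^{-1})$.

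The main obstacle is the Haar-randomness guarantee, since the good set of unitaries in Theorem~\ref{thm:main-informal} depends on $n$. We do not want a union bound over $N$ sectors, which would force $\delta'=\delta/N$; that is polynomial and acceptable, but we try a tighter route first. Let $e_n(U)=\mathrm{TV}(\tilde p_n,p_n)$. The proof of Theorem~\ref{thm:main-informal} bounds $\mathbb E_U[e_n(U)^2]$, or a Markov-type quantity, by $O(\varepsilon^2\delta')$. By linearity, $\mathbb E_U\big[\sum_n B_{N,\eta}(n)e_n(U)\big]$ is then controlled by the same bound, and a single Markov inequality over the mixture gives failure probability $\delta$. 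If only the black-box statement is available, we fall back to the union bound with $\delta'=\delta/N$, which still yields $\mathrm{poly}(M,N,\varepsilon^{-1},\delta^{-1})$ for every fixed $\eta\in(0,1)$.
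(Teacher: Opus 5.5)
Your proposal follows essentially the paper's own argument (Theorem~\ref{thm:uniform-transmission} and Corollary~\ref{cor:constant-transmission}). It samples the survivor number from $\operatorname{Bin}(N,\eta)$ and runs the fixed-loss sampler in that sector. It discards sectors above a cutoff $\rho N$ with $\rho\in(\eta,1)$, paying an exponentially small binomial tail, combines the per-sector Haar moment bounds into one Markov step, and treats the regime $N=O_\eta(\log\varepsilon^{-1})$ separately. Four local points need repair.

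First, the no-truncation fallback is $d=n$, not $d=n-1$. $\Phi_{\leq n-1}$ still drops the order-$n$ component, so $R_{n-1}(n,N)=1/\binom{N}{n}$, which is already $1/N$ at $n=N-1$. You therefore need the $d=n$ fallback of the formal Theorem~\ref{thm:main}, where $R_n(n,N)=0$; the informal statement only allows $d\le n-1$.

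Second, in your small-$N$ branch the sectors with $\eta'N<n<N$ do not satisfy $n/N\le\eta'$, contrary to your claim. They are covered by that $d=n$ fallback at cost $2^{O(N)}=\operatorname{poly}(\varepsilon^{-1})$. The sector $n=N$ needs a sequential sampler of cost $2^{O(N)}\operatorname{poly}(M,N)$. One option is the paper's full-degree procedure, which Proposition~\ref{prop:randomized-marginal-complexity} permits at $n=N$ when $M\ge C_0N^3$. Enumerating all $\binom{M+N-1}{N}$ outcomes does not work, since it can cost $M^{\Theta(\log\varepsilon^{-1})}$.

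Third, to get failure probability $\delta'$ from a single Markov step, aggregate squares as the paper does. For binomial weights $w_n$ on the treated sectors with $\sum_n w_n\le1$, Jensen gives $\bigl(\sum_n w_ne_n\bigr)^2\le\sum_n w_ne_n^2$. Linearity of first moments only yields $\E_U\bigl[\sum_n w_ne_n\bigr]=O(\varepsilon\sqrt{\delta'})$, hence failure $O(\sqrt{\delta'})$. That is still polynomial if you take $\delta'=\delta^2$, and your union-bound fallback with $\delta'=\delta/N$ is also valid.

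Fourth, take $n_{\max}=\lfloor\eta'N\rfloor$ rather than the ceiling, so that $n/N\le\eta'$ actually holds in the large-$N$ branch.
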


The precise statements and assumptions are given later in Theorem~\ref{thm:main}, Sec.~\ref{sec:full-fock}, and Sec.~\ref{sec:extensions}.

\subsection{Simulation approaches in the linear-survival regime}

The classical simulation of lossy BosonSampling with a constant surviving fraction has been considered in previous work~\cite{renema2018classical, moylett2019classically}.
The remaining gap was a \textit{provably efficient} sampling algorithm that simultaneously achieves a polynomial runtime up to the linear-survival regime and an explicit total-variation-distance guarantee for the final output distribution.
Reference~\cite{renema2018classical} showed that higher-order many-photon interference can be truncated with a small approximation error, and combined this approximation with a Metropolis sampler.
This construction falls short of providing either a polynomial-time procedure for estimating output probability to inverse-polynomial accuracy or a direct sampler with an explicit bound on the total runtime required to achieve a prescribed total variation distance.
Reference~\cite{moylett2019classically} instead developed a direct sampler with an explicit approximation-error bound and runtime analysis, but its total runtime remains exponential in the linear-survival regime.

We fill this gap by developing ($i$) a new low-degree truncation method based on a Hoeffding decomposition of the probabilities~\cite{hoeffding1992class, rempala2002central, le2025hoeffding}, and ($ii$) a polynomial-time procedure for evaluating the prefix marginals of the resulting truncated approximation, including for inverse-polynomial error parameters, based on randomized color coding~\cite{alon1995color, alon2008biomolecular}.
The latter provides the key ingredient needed to invoke the marginal-sampling framework~\cite{bremner2017achieving, aharonov2023polynomial, oh2023classical, takahashi2021classically, cirstoiu2024fourier, schuster2025polynomial}.
Together, these ingredients yield a direct classical sampler with explicit guarantees on both the final total variation distance and total runtime, which remains polynomial throughout the linear-survival regime even for inverse-polynomial total variation distance.

Separately, polynomial-time simulation is known for constant-depth interferometers below an architecture-dependent transmission threshold~\cite{oh2025classical}; this shallow-depth setting is distinct from the dense-interferometer setting considered here.

\subsection{Provably efficient samplers in stronger-loss regimes}

We next turn to provably efficient algorithms that apply under stronger photon loss.
For the fixed-loss model, Ref.~\cite{oszmaniec2018classical} derived a polynomial-time algorithm for $n=O(\sqrt{\varepsilon N})$, where $\varepsilon$ is the target total variation distance.
For the uniform-transmission-loss model, Ref.~\cite{garcia2019simulating} obtained analogous polynomial-time guarantees, achieving total variation distance $\varepsilon$ whenever the transmission rate satisfies $\eta=O(\sqrt{\varepsilon/N})$.
These results provide guarantees for arbitrary interferometers and
require no dilute-mode assumption.



Our algorithm reaches the linear-survival regime for any fixed surviving fraction $n/N\in(0,1)$, with runtime $\operatorname{poly}(M,N,\varepsilon^{-1},\delta^{-1})$.
The uniform-transmission-loss extension likewise has polynomial runtime for every fixed transmission rate $\eta\in(0,1)$.
Importantly, the admissible survivor-number scaling, and likewise the transmission-rate scaling, is now independent of the target approximation error, while the cost of increased accuracy is absorbed into the computational runtime.
These guarantees are probabilistic over Haar-random interferometers and are established under the dilute-mode condition.


In summary, the previous sampling algorithms~\cite{oszmaniec2018classical, garcia2019simulating} apply uniformly to arbitrary interferometers and impose no dilute-mode condition, while their polynomial-time simulability is guaranteed for $O(\sqrt{\varepsilon N})$ surviving photons.
By contrast, our algorithm reaches substantially larger surviving-photon regimes, with survivor-number scaling independent of the target approximation error, while its guarantee is probabilistic over Haar-random interferometers and requires the dilute-mode condition.


\subsection{The low-degree sampling framework}

A viable route to classical simulation of noisy sampling problems is the low-degree sampling framework, which combines a low-degree approximation of the output distribution with marginal evaluation procedure to construct a proper classical sampler.
For noisy random circuit sampling, Ref.~\cite{aharonov2023polynomial} derived a low-degree approximation of the output distribution by truncating high-order Fourier terms, then converted it into a sampler by developing a direct marginal evaluation procedure and combining it with the marginal sampling method of Ref.~\cite{bremner2017achieving}.
Reference~\cite{oh2023classical} subsequently adapted this framework to BosonSampling with Gaussian circuit noise, where the relevant low-degree structure arises from an expansion in a scalar noise parameter, deriving both a low-degree approximation of the noisy output distribution and a direct marginal evaluation procedure.

Extending this framework to obtain a polynomial-time simulation algorithm for photon loss faces two technical barriers.
First, the output distribution is now described by an averaged permanent statistic~\cite{aaronson2016bosonsampling} (see Eq.~\eqref{eq:fock-fixed-loss}), rather than by a polynomial dependence on a scalar noise parameter.
Second, even once an appropriate low-degree truncation is identified, the required truncation degree $d$ to achieve a prescribed approximation error grows logarithmically in the linear-survival regime $n/N \in (0,1)$.
Consequently, even a direct marginal-evaluation procedure of the type developed in Ref.~\cite{oh2023classical}, with complexity $N^{O(d)}$, would yield quasipolynomial rather than polynomial runtime.
This constitutes the principal computational bottleneck to obtaining a polynomial-time sampler in the linear-survival regime.

Our main technical contribution is to overcome both obstacles.
We first derive a Hoeffding-type decomposition of the fixed-loss BosonSampling output distribution~\cite{hoeffding1992class, rempala2002central, le2025hoeffding}, revealing a low-degree hierarchy in which higher-order components are systematically suppressed.
We then develop a \textit{randomized} marginal evaluation procedure based on color coding~\cite{alon1995color, alon2008biomolecular}, reducing the dependence on the truncation degree from $N^{O(d)}$ to $2^{O(d)}$.
The additional randomness introduces controlled contributions to the total-variation-distance error and failure probability, both of which are incorporated into the overall algorithmic error budget.
Combined with the signed-marginal method of Ref.~\cite{bremner2017achieving}, these ingredients yield a polynomial-time classical sampling algorithm for fixed-loss BosonSampling with polynomial dependence on the system size and inverse error parameters.

\section{Fixed-loss BosonSampling in first quantization}\label{sec:preliminaries}

This section introduces the basic notation and clarifies the sample spaces used throughout the analysis.

We distinguish three representations of the output space. 
The first is the physical Fock-output representation, where an outcome is written as a \textit{sorted} string of occupied modes.
The second is the collision-collapsed Fock-output space used in the main theorem, where all collision outcomes are merged into a single dummy symbol, following Ref.~\cite{oh2023classical}. 
The third is the \textit{ordered} first-quantized output string representation used by the sampler, which enables us to express the prefix marginals since it does not have the restriction of sorting. 
The first-quantized representation is auxiliary: pushing it forward by the sorting map recovers the physical Fock-output distribution.

We write $[m]:=\{1,\ldots,m\}$ for a positive integer $m$. 
Let $M$ be the number of optical modes, and let $N$ be the number of input photons. 
The input photons are injected into the first $N$ input modes of an $M$-mode interferometer, characterized by a unitary matrix $U\in\mathrm U(M)$; we use the terms ``interferometer'' and ``unitary matrix'' interchangeably when referring to $U$.
Throughout, $U$ is Haar-random, and we denote by $\Haar(M)$ the Haar measure on $\mathrm U(M)$.
Let $n\leq N$ be the number of surviving photons, so that exactly $N-n$ photons are lost.
The set of surviving photons is chosen uniformly among all $\binom{N}{n}$ subsets of the $N$ input photons~\cite{aaronson2016bosonsampling, oszmaniec2018classical}.



\subsection{Fock-output distribution}

We denote the set of $n$-photon Fock outcomes by
\begin{align}
 \mathcal T_{M,n}
 &:={}
 \left\{
 \bm t = (t_1,\ldots, t_n) \in [M]^n:
 t_1 \leq t_2 \leq \ldots\leq t_n
\right\},
 \label{eq:fock-outcome-set}
\end{align}
Hence, $\bm t$ denotes a sorted list of mode labels occupied by the photons, with repeated entries indicating collisions.
For $\bm t\in\mathcal T_{M,n}$ and $S\subseteq[N]$ with $\abs{S}=n$, let $U_{\bm t,S}$ denote the $n\times n$ matrix obtained from $U$ by selecting the columns in $S$ and the rows in $\bm t$. 
If a mode label appears multiple times in $\bm t$, the corresponding row of $U$ is repeated with the same multiplicity.
For $k\in[M]$, define the occupation multiplicity $s_k(\bm t):=\abs{\{ j \in[n]:t_j = k\}}$, and set $\bm t!:=\prod_{k=1}^{M}s_k(\bm t)!$.

The physical fixed-loss Fock-output distribution is then written by~\cite{aaronson2016bosonsampling}
\begin{align}
 P_U^{N,n}(\bm t)
 &=
 \frac{1}{\binom{N}{n}}
 \sum_{\substack{S \subseteq [N] \\ |S| = n} }
 \frac{\abs{\Per(U_{\bm t,S})}^2}{\bm t!}  ,
 \label{eq:fock-fixed-loss}
\end{align}
which averages the squared permanents over all possible surviving input subsets.

\subsection{Collision-collapsed Fock space}

For the collision-collapsed formulation used in the main theorem, we denote the set of possible collision-free Fock outcomes by
\begin{align}
 \mathcal T_{M,n}^{\mathrm{cf}}
 &:={}
 \left\{
 \bm t \in \mathcal T_{M,n}:
 t_1 < t_2 < \ldots < t_n
\right\}.
 \label{eq:collision-free-fock-set}
\end{align}
Following the approach of Ref.~\cite{oh2023classical}, we adjoin a dummy collision symbol $c\notin\mathcal T_{M,n}^{\mathrm{cf}}$ and define the collision-collapsing map
\begin{align}\label{eq:collision-collapse-definition}
 \mathsf C:\mathcal T_{M,n}
 &\longrightarrow
 \mathcal T_{M,n}^{\mathrm{cf}}\cup\{ c\}
\end{align}
as follows:
\begin{align}
 \mathsf C(\bm t)
 &:={}
 \begin{cases}
 \bm t,&\bm t\in\mathcal T_{M,n}^{\mathrm{cf}},\\
 c,&\bm t\notin\mathcal T_{M,n}^{\mathrm{cf}}.
 \end{cases}
 \label{eq:fock-collision-collapse-map}
\end{align}
We then define the collision-collapsed fixed-loss Fock distribution, the target distribution in our main result, by
\begin{align}
 P_{U,\mathrm{coll}}^{N,n}
 &:={}
 (\mathsf C)_{\#}P_U^{N,n}.
\label{eq:collapsed-fock-target}
\end{align}
Hence, collision-free Fock outcomes are retained, while all collision outcomes are identified with the single symbol $c$.

\subsection{First quantization}

Our sampling algorithm is \textit{sequential}, in the sense that it samples one output-mode label at a time using prefix marginals.  
Therefore, it is convenient to lift the Fock-output representation $\bm t\in\mathcal T_{M,n}$ to an ordered list of mode labels, one for each photon.
Concretely, we write
\begin{align}
\bm r
=
(r_1,\ldots,r_n)
\in[M]^n,
\label{eq:ordered-string}
\end{align}
and refer to $\bm r$ as a first-quantized output string.
This terminology is analogous to first-quantized $n$-particle notation, in which a many-particle configuration is represented by assigning one coordinate (mode label) to each particle.
The ordering is purely auxiliary and does not assign physical labels to the photons; sorting removes this bookkeeping order and recovers the bosonic Fock outcome.

We also define the set of collision-free first-quantized output strings as
\begin{align}
[M]_{\neq}^{n} 
&:={}
\left\{
\bm r=(r_1,\ldots,r_n)\in[M]^n:
r_a\neq r_b\text{ if }a\neq b
\right\}.
\label{eq:collision-free-strings}
\end{align}

To relate $\bm r\in[M]^n$ to the Fock outcome $\bm t\in\mathcal T_{M,n}$ in Eq.~\eqref{eq:fock-outcome-set}, we define the sorting map
\begin{align}\label{eq:sort-definition}
    \sort:[M]^n \rightarrow \mathcal T_{M,n}.
\end{align}
Hence, $\sort(\bm r)$ is obtained by rearranging its entries in nondecreasing order.
One can easily check that $\sort$ maps the collision-free string set $[M]_{\neq}^{n} $ in Eq.~\eqref{eq:collision-free-strings} onto the collision-free Fock outcome set $\mathcal T_{M,n}^{\mathrm{cf}}$ in Eq.~\eqref{eq:collision-free-fock-set}.

For a first-quantized output string $\bm r \in [M]^n$, we define the scaled output submatrix
\begin{align}
A_{\bm r}
&:={}
\sqrt M\,U_{\bm r,[N]}
\in\mathbb C^{n\times N},
\label{eq:scaled-output-matrix}
\end{align}
such that $(A_{\bm r})_{i,j}=\sqrt M\,U_{r_i,j}$.

The central object underlying the fixed-loss model is the average over surviving input subsets of the squared permanent contributions appearing in Eq.~\eqref{eq:fock-fixed-loss}.
Specifically for any $B\in\mathbb C^{n\times N}$, define the averaged permanent statistic
\begin{align}
\Phi(B)
&:={}
\frac{1}{\binom{N}{n}}
\sum_{\substack{S \subseteq [N] \\ |S| = n}}
\abs{\Per(B_S)}^2,
\label{eq:phi-definition}
\end{align}
where $B_S$ is a submatrix of $B$ obtained by taking columns in $S$. 
Define the ordered-string mass function $p_U$ by evaluating $\Phi$ on $A_{\bm r}$ and applying the normalization factor:
\begin{align}
p_U(\bm r)
&:={}
\frac{\Phi(A_{\bm r})}{n!M^n}.
\label{eq:ordered-probability}
\end{align}
Here, the factor $M^n$ comes from the scaling $A_{\bm r}=\sqrt M\,U_{\bm r,[N]}$. Also, the factor $n!$ spreads each physical Fock-output mass over its ordered-string representatives in $[M]^n$.
Fix any $\bm r$ with $\sort(\bm r)=\bm t$; the squared permanent is the same for every such ordering.
Specifically, given that the number of distinct orderings of the multiset $\bm t$ is $n! / \bm t!$, the total mass assigned by $p_U$ to all first-quantized output strings with the same Fock outcome $\bm t \in \mathcal T_{M,n}$ is
\begin{align}
\sum_{\bm r:\, \sort(\bm r)= \bm t}p_U(\bm r)
&=
\frac{n!}{\bm t!}
\frac{1}{n!M^n}
\frac{1}{\binom{N}{n}}
\sum_{\substack{S \subseteq [N] \\ |S| = n}}
\abs{\Per(A_{\bm r,S})}^2
\nonumber\\
&=
\frac{n!}{\bm t!}
\frac{1}{n!}
\frac{1}{\binom{N}{n}}
\sum_{\substack{S \subseteq [N] \\ |S| = n}}
\abs{\Per(U_{\bm t,S})}^2
\nonumber\\
&=
P_U^{N,n}(\bm t), 
\label{eq:ordered-pushforward}
\end{align}
where we used $\Per(U_{\bm r,S}) = \Per(U_{\bm t,S})$ for $\sort(\bm r)= \bm t$.  
Therefore, pushing forward $p_U$ under the sorting map $\sort$ recovers exactly the physical fixed-loss distribution $P_U^{N,n}$.
Since $p_U$ is nonnegative and its pushforward has total mass one, $p_U$ is a probability distribution on $[M]^n$.


Throughout the paper, for probability distributions $p$ and $q$ on a common finite space, the total variation distance is defined as 
\begin{align}
 d_{\mathrm{TV}}(p,q)
 &:={}
 \frac12\sum_x\abs{p(x)-q(x)}
 =
 \frac12\norm{p-q}_1.
 \label{eq:tv-definition}
\end{align}
Since total variation distance cannot increase under deterministic postprocessing, pushing forward by the sorting map in Eq.~\eqref{eq:sort-definition} gives, for any probability distribution $q$ on $[M]^n$,
\begin{align}
 d_{\mathrm{TV}}({\sort}_\#q,P_U^{N,n})
 &\leq
 d_{\mathrm{TV}}(q,p_U).
 \label{eq:pushforward-contraction}
\end{align}
Therefore, by sorting the output string, any sampling algorithm with bounded total variation distance from the first-quantized distribution $p_U$ automatically yields a sampler within the same total variation distance from the physical Fock-output distribution $P_U^{N,n}$.


\section{Main result and proof overview}\label{sec:main-result}

This section presents our main result for the classical simulation of lossy BosonSampling.
We first state the main theorem and then derive its immediate consequences.
We then provide an overview of the proof, emphasizing the crucial ingredients that lead to the final sampling algorithm.


\subsection{Main result and its implications}

We begin by presenting the main algorithmic result of this work.
Note that the target distribution is the collision-collapsed fixed-loss distribution $P_{U,\mathrm{coll}}^{N,n}$ defined in Eq.~\eqref{eq:collapsed-fock-target}.

\begin{theorem}[Classical algorithm for fixed-loss BosonSampling]\label{thm:main}
There exist constants $C_0, C > 0$ such that the following holds. 
Let $1 \leq n < N$, let $\varepsilon,\delta \in (0,1)$, and suppose $M\geq C_0n^2N$.
Choose any truncation degree $d\in\{0,\ldots,n-1\}$ satisfying
\begin{align}
 \frac{(n/N)^{d+1}}{1-n/N}
 &\leq
 \frac{\varepsilon^2\delta}{C}
 \label{eq:main-degree-condition}
\end{align}
if such a value exists; otherwise, set $d=n$.
Then there exists a randomized classical algorithm that runs in $MnN^3\varepsilon^{-2}\delta^{-1}2^{O(d)}$ arithmetic operations and samples from a distribution $\mathcal A_d(U)$ satisfying
\begin{align}
 \Prob_{U\sim\Haar(M)}\!\left[
 d_{\mathrm{TV}}\!\left(\mathcal A_d(U),P_{U,\mathrm{coll}}^{N,n}\right)
 \leq\varepsilon
\right]
 &\geq
 1-\delta.
 \label{eq:main-average-case}
\end{align}
\end{theorem}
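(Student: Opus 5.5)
The proof follows the low-degree sampling template: build a degree-$d$ approximation $q_d$ of the ordered-string mass function $p_U$ in Eq.~\eqref{eq:ordered-probability}, bound its $\ell_1$ error on collision-free strings for typical $U$, and convert $q_d$ into a sampler through prefix marginals.

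\textbf{Step 1: Hoeffding decomposition.} View $\Phi(B)$ in Eq.~\eqref{eq:phi-definition} as a U-statistic over the random subset $S\subseteq[N]$. Expanding $\abs{\Per(B_S)}^2=\sum_{\sigma,\tau}\prod_i B_{i,\sigma(i)}\overline{B_{i,\tau(i)}}$, each term depends only on the column set $T=\sigma([n])\cup\tau([n])$, through column pairs $j=\sigma(i)$, $j'=\tau(i)$ with $j\neq j'$. Group terms by the number $k$ of rows on which $\sigma(i)\neq\tau(i)$; this is a degree-$k$ ``off-diagonal'' component $\Phi_k$. The term $k=0$ is a permanent of the nonnegative matrix $(\abs{B_{ij}}^2)$, and higher $k$ carry the interference. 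Averaging over $S$ weights a pattern with $|T|=n+k$ by $\binom{N-n-k}{n-k}\big/\binom{N}{n}\cdot\binom{N}{n}^{-1}$-type ratios. I expect this to give an attenuation factor of order $(n/N)^k$ relative to the ideal permanent. Define $\Phi_{\le d}=\sum_{k\le d}\Phi_k$ and $q_d(\bm r)=\Phi_{\le d}(A_{\bm r})/(n!M^n)$.

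\textbf{Step 2: Error bound.} Replace $A_{\bm r}$ by an i.i.d.\ complex Gaussian $n\times N$ matrix $G$. The components $\Phi_k$ are then orthogonal in $L^2$, and their second moments can be computed from permanent moment formulas. The target bound is
\begin{align}
\E\abs{\Phi(G)-\Phi_{\le d}(G)}^2\le O(1)\,(n!)^2\sum_{k>d}(n/N)^k=O\!\left((n!)^2\frac{(n/N)^{d+1}}{1-n/N}\right).
\end{align}
Next, transfer this to Haar submatrices. Since $M\ge C_0n^2N$, the law of $\sqrt M U_{\bm r,[N]}$ for $\bm r\in[M]^n_{\neq}$ has density within a constant factor of the Gaussian on the relevant event (the standard Aaronson--Arkhipov truncation argument). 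Summing over $\bm r$ and applying Cauchy--Schwarz gives
\begin{align}
\E_U\sum_{\bm r\in[M]^n_{\neq}}\abs{p_U(\bm r)-q_d(\bm r)}\le O\!\left(\sqrt{(n/N)^{d+1}/(1-n/N)}\right).
\end{align}
Markov's inequality then yields an error of at most $\varepsilon/2$ with probability $1-\delta/2$, by Eq.~\eqref{eq:main-degree-condition}. The collision sector is handled by assigning mass $1-\sum_{\bm r\in[M]^n_{\neq}}q_d$ to $c$. This contributes at most the same error, since the true mass of $c$ equals one minus the collision-free mass.

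\textbf{Step 3: Marginals and sampling.} Prefix marginals $\sum_{r_{j+1},\dots,r_n}q_d(\bm r)$, with suffixes restricted to be collision-free up to a correction, reduce to sums over structures with at most $d$ distinguished off-diagonal columns. Sums over the free suffix rows factor through $\sum_r U_{r,a}\overline{U_{r,b}}=\delta_{ab}$, or the row sums over remaining modes. The diagonal part is a permanent of a nonnegative-weighted structure that collapses under the unitarity sums to an elementary-symmetric computation. The bottleneck is the sum over ordered $d$-tuples of distinct columns, which naively costs $N^{O(d)}$. Here I would use color coding: randomly color the $N$ columns with $d'=O(d)$ colors, compute the colorful sum by dynamic programming over color subsets in $2^{O(d)}\poly(N)$ time, and rescale by $d'^{d'}/d'!$. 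Averaging $O(\varepsilon^{-2}\delta^{-1})$ independent colorings gives an unbiased estimator whose variance is controlled by Chebyshev's inequality, yielding the stated runtime $MnN^3\varepsilon^{-2}\delta^{-1}2^{O(d)}$. Finally, feed these estimated signed marginals to the Bremner--Montanaro--Shepherd sampler, which clips negative values and tolerates $\ell_1$ marginal errors additively. The errors and failure probabilities are then allocated as $\varepsilon/2,\delta/2$.

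The main obstacle is Step 2's moment computation: showing that the cross terms organize into orthogonal components with the geometric decay $(n/N)^k$ uniformly in $n$. The Haar-to-Gaussian transfer of a degree-$2n$ polynomial is a secondary difficulty. It needs $M\gtrsim n^2N$ rather than the usual $n^2$-type condition, because all $N$ columns enter.
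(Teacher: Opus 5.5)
\textbf{The gap is in Step 3, and it comes from the decomposition you chose in Step 1.} Grouping $\abs{\Per(B_S)}^2$ by the number $k$ of rows with $\sigma(i)\neq\tau(i)$ keeps the uncentered diagonal factors $\abs{B_{i,\sigma(i)}}^2$ at every order. So $\Phi_{\leq d}$ is not low-degree in any useful sense. Already $\Phi_0(B)=\binom{N}{n}^{-1}\sum_{S}\Per\bigl((\abs{B_{i,c}}^2)_{i\in[n],\,c\in S}\bigr)$ is a rectangular permanent of the nonnegative $n\times N$ matrix $(\abs{B_{i,c}}^2)$. Each order-$k$ piece multiplies an off-diagonal monomial by an analogous permanent on the remaining $n-k$ rows.

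Unitarity collapses the diagonal factors of free suffix rows, since $\sum_r\abs{U_{r,c}}^2=1$, but it does not collapse those of the prefix rows. Summed over all completions, the length-$\ell$ prefix marginal of the $k=0$ part alone equals $(N)_\ell^{-1}$ times the rectangular permanent of $(\abs{U_{a_i,c}}^2)_{i\in[\ell],\,c\in[N]}$, with $\ell$ up to $n$. That quantity is \#P-hard for general nonnegative matrices, has no known efficient exact algorithm here, and does not reduce to an elementary-symmetric computation. Color coding the at most $d$ off-diagonal columns does not touch it. You therefore have neither pointwise values of $q_d$ nor the exact marginal oracle that Lemma~\ref{lem:signed-marginal-sampler} requires.

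Step 2 is not where the argument breaks. Your heuristic is wrong, since $\sigma([n])=\tau([n])=S$ forces $\abs{T}=n$. The decay does hold for your decomposition, though. The weight on each off-diagonal monomial has mean $(n-k)!\binom nk/\binom Nk$, where $\binom nk/\binom Nk$ is the probability that the $k$ interfering columns all survive. Its second moment is at most $e$ times its squared mean, which gives $\E\abs{\Phi_k}^2\leq e\,(n!)^2\binom nk/\binom Nk$.

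\textbf{The missing idea is to center every column factor, diagonal included,} replacing $X_{a,l}\overline{X_{b,l}}$ by $X_{a,l}\overline{X_{b,l}}-\delta_{ab}$. This is the M\"obius/Hoeffding decomposition over column subsets $J$ in Lemma~\ref{lem:centered-expansion} and Proposition~\ref{prop:exact-decomposition}. An order-$j$ component then involves only $j$ columns and $j$ output positions; the unselected columns are replaced by their mean and absorbed into $(n-j)!$. A collision-free prefix marginal then becomes a completion count times the coefficients $[x_1\cdots x_jy_1\cdots y_jz^p]$ of an inverse determinant, via the square-free MacMahon master theorem. Sylvester's identity reduces that determinant to size $N$. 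Because the selected suffix modes range over subsets of $[M]$ and must be distinct, the paper colors output modes as well as input columns.

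Two smaller fixes are also needed:
\begin{enumerate}
\item Draw the colorings once and hold them fixed, so every query returns an exact marginal of one signed function $\widehat q_{d,U}$. Independently estimated marginals with per-query Chebyshev control do not meet the hypothesis of the signed-marginal lemma. The coloring error should instead enter $\E_{U,\mathrm{col}}[\widehat\Delta_{\mathrm{cf}}(U)^2]$ through unbiasedness and Jensen.
\item Apply Markov's inequality to the squared collision-free $\ell_1$ error, which Cauchy--Schwarz already gives you, not to its first moment. With $R_d\leq\varepsilon^2\delta/C$, first-moment Markov only yields failure probability $O(\sqrt{\delta})$.
\end{enumerate}
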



While the target distribution in Theorem~\ref{thm:main} is collision-collapsed, a full Fock-output approximation guarantee is stated later in Sec.~\ref{sec:full-fock} under an additional mode-number lower bound.

To satisfy Eq.~\eqref{eq:main-degree-condition}, it suffices to choose the truncation degree as
\begin{align}
d
=
O\left(1 + \frac{\log\!\left(\varepsilon^{-1}\delta^{-1}(1-n/N)^{-1}\right)}{\log(N/n)}\right).
\label{eq:degree-choice}
\end{align}
The key consequence of Eq.~\eqref{eq:degree-choice} appears in the linear-survival regime, characterized by a fixed surviving fraction $n/N\in(0,1)$.
Since both $\log(N/n)$ and $(1-n/N)^{-1}$ are then constants, the required degree scales only logarithmically with the inverse error parameters,
$d=O\!\left(1+\log(\varepsilon^{-1}\delta^{-1})\right)$.
Substituting this choice into the runtime bound of Theorem~\ref{thm:main} yields a runtime polynomial in $M,N,\varepsilon^{-1}$, and $\delta^{-1}$, with a polynomial degree that may depend on the fixed surviving fraction $n/N \in (0,1)$.
Thus, polynomial-time classical simulation persists even for inverse-polynomial approximation error and failure probability throughout the linear-survival regime.

\begin{figure*}[t]
\includegraphics[width=0.9\linewidth]{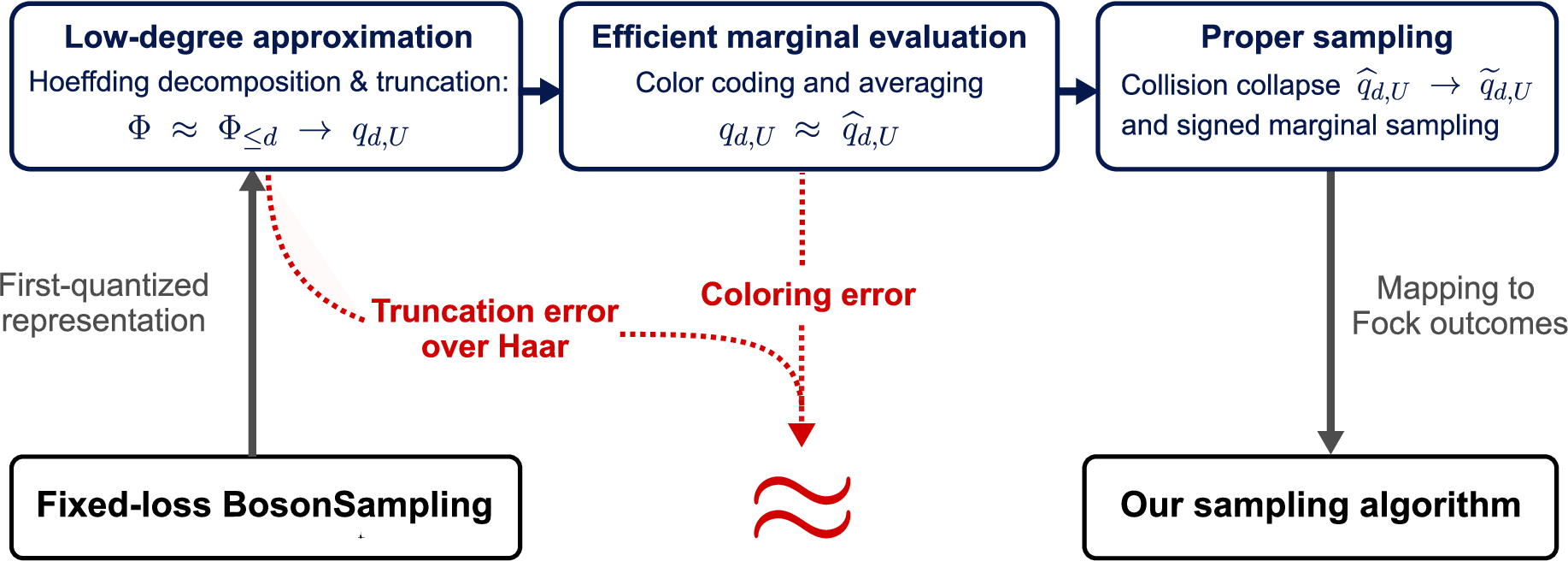}
\caption{
Schematic overview of the proof of Theorem~\ref{thm:main}.
Along the solid path, Hoeffding decomposition and truncation of $\Phi$ yield the signed approximation $q_{d,U}$.
Averaging independent color-coded functions then gives $\widehat q_{d,U}$ with $\E_{\mathrm{col}}[\widehat q_{d,U}]=q_{d,U}$.
For a fixed collection of colorings, the prefix marginals of $\widehat q_{d,U}$ can be evaluated exactly and efficiently.
Collision collapse yields the normalized signed function $\widetilde q_{d,U}$, to which we apply the signed marginal sampling lemma of Ref.~\cite{bremner2017achieving}.
Mapping the resulting ordered strings to Fock outcomes gives the output distribution $\mathcal A_d(U)$, which includes the color-coding randomness.
The red dotted paths show how the Haar truncation bound and the additional coloring error combine to give the final error guarantee in Eq.~\eqref{eq:main-average-case}.
The large $\approx$ represents this guarantee between $\mathcal A_d(U)$ and the collision-collapsed target $P_{U,\mathrm{coll}}^{N,n}$.
}
\label{fig: outline}
\end{figure*}

\subsection{Proof overview of Theorem~\ref{thm:main}}

We now provide an overview of the proof of Theorem~\ref{thm:main}, following the construction illustrated in Fig.~\ref{fig: outline}.
The proof first establishes a low-degree approximation of the output distribution and then develops the marginal-evaluation procedure needed to convert this approximation into an efficient sampler.

The proof starts with the averaged permanent statistic $\Phi$ in Eq.~\eqref{eq:phi-definition}.
In Sec.~\ref{sec:hoeffding}, we derive a Hoeffding-type decomposition of $\Phi$ into components indexed by subsets of input columns~\cite{hoeffding1992class, rempala2002central, le2025hoeffding}.
Each component depends only on a subset of $j$ input columns and appears with coefficient $\binom nj/\binom Nj$, the probability that those columns belong to the uniformly chosen surviving input subset.
Truncating the decomposition at order $d$ defines $\Phi_{\leq d}$.
Under the Gaussian measure, distinct components are orthogonal, giving the mean-square truncation error, normalized by $(n!)^2$,
\begin{align}
R_d(n,N)
&=\sum_{j=d+1}^{n}\frac{\binom nj}{\binom Nj}
\leq\frac{(n/N)^{d+1}}{1-n/N}.
\label{eq:R_d-proof-outline}
\end{align}

In Sec.~\ref{sec:haar-error}, we evaluate $\Phi_{\leq d}$ on the scaled Haar submatrices and divide by $n!M^n$ to obtain the signed approximation $q_{d,U}$ to the first-quantized distribution $p_U$ (Definition~\ref{def:signed-ordered-approximation}).
For collision-free strings, the selected rows are distinct, and under $M\geq C_0n^2N$ as stated in Theorem~\ref{thm:main}, we can apply Haar--Gaussian density domination in Lemma~\ref{lem:rectangular-domination} to bound the mean-square pointwise error by $C_HR_d(n,N)/M^{2n}$ for a constant $C_H$.
Applying Cauchy--Schwarz and summing over at most $M^n$ such strings then gives a Haar-averaged squared collision-free $L^1$ error of at most $C_HR_d(n,N)$.

For a fixed surviving fraction $n/N\in(0,1)$, $R_d(n,N)$ decreases geometrically with the truncation degree.
This approximation bound alone, however, does not directly provide an efficient sampling algorithm.
The sequential sampling procedure requires evaluating the prefix marginals of $q_{d,U}$, obtained by summing over all collision-free completions of a given prefix~\cite{bremner2017achieving}.
Even after truncation, the approximation still contains contributions from all input-column subsets of size at most $d$, leaving $N^{O(d)}$ such subsets to enumerate in a direct evaluation (see, e.g., the marginal evaluation in Ref.~\cite{oh2023classical}).
In the linear-survival regime with constant $n/N\in(0,1)$, the truncation degree $d$ needed to achieve the prescribed accuracy and failure probability scales logarithmically with the inverse error parameters, as indicated by Eq.~\eqref{eq:R_d-proof-outline}.
Consequently, improving the approximation guarantee also increases the exponent of $N$ in the direct evaluation cost, which constitutes the remaining computational bottleneck.

To overcome this bottleneck, in Sec.~\ref{sec:marginals}, we develop a randomized marginal-evaluation procedure based on color coding~\cite{alon1995color, alon2008biomolecular}.
Specifically, at each retained order $j$, we assign one of $j$ colors independently to every input column and physical output mode, and retain a contribution only when its selected columns and modes each have distinct colors.
These color conditions let us combine the retained contributions via a determinant coefficient formula, without enumerating the selected subsets separately.
Reweighting each retained contribution by the inverse of its inclusion probability preserves $q_{d,U}$ in expectation.
We then average the signed functions obtained from sufficiently many independent colorings to construct $\widehat q_{d,U}$ with controlled additional approximation error.
For a fixed collection of colorings used in this average, we obtain the prefix marginals of $\widehat q_{d,U}$ exactly by averaging the corresponding prefix marginals of the individual colored functions.
Accounting for all colorings in the average, this procedure requires $N^3\varepsilon^{-2}\delta^{-1}2^{O(d)}$ arithmetic operations per prefix query after $MN^2\varepsilon^{-2}\delta^{-1}2^{O(d)}$ preprocessing, thus making the exponent of $N$ independent of $d$.

We next use this signed approximation to construct a proper sampling algorithm.
Note that $\widehat q_{d,U}$ may take negative values and need not have total mass one, so we cannot use it directly as a probability distribution.
To apply the signed marginal sampling lemma~\cite{bremner2017achieving} (see Lemma~\ref{lem:signed-marginal-sampler}), we perform collision collapse: we retain the collision-free values of $\widehat q_{d,U}$ and assign the remaining signed mass to a single dummy outcome, yielding the normalized signed function $\widetilde q_{d,U}$.
Throughout the sequential sampling procedure, we hold the full collection of colorings used to construct $\widehat q_{d,U}$ fixed, so all queried prefix marginals are exact marginals of the same normalized signed function $\widetilde q_{d,U}$.
The signed marginal sampling lemma then produces a proper probability distribution over ordered first-quantized strings.
Pushing this distribution forward to the collision-collapsed Fock space yields the final output distribution $\mathcal A_d(U)$ of the algorithm, which incorporates the color-coding randomness used to construct $\widehat q_{d,U}$.

The collision-free approximation error controls the error at the dummy collision outcome as in Lemma~\ref{lem:collapse-error}, and ultimately the error of the proper sampler through Lemma~\ref{lem:signed-marginal-sampler}.
Combining this bound with the additional error introduced by the randomized marginal evaluation gives
\begin{align}
\E_U\!\left[
d_{\mathrm{TV}}\!\left(\mathcal A_d(U),P_{U,\mathrm{coll}}^{N,n}\right)^2
\right]
&\leq 4C_HR_d(n,N)+\frac{\varepsilon^2\delta}{2}.
\label{eq:proof-overview-sampling-second-moment}
\end{align}
The second term is the additional error introduced by color coding, reduced to this scale by averaging sufficiently many independent colorings.
The degree condition in Theorem~\ref{thm:main} bounds the first term by $\varepsilon^2\delta/2$ for a suitable constant $C$.
Hence, the Haar expectation of the squared total variation distance is at most $\varepsilon^2\delta$.
Markov's inequality then yields $d_{\mathrm{TV}}\!\left(\mathcal A_d(U),P_{U,\mathrm{coll}}^{N,n}\right)\leq\varepsilon$ with probability at least $1-\delta$ over $U \sim \Haar(M)$, giving Eq.~\eqref{eq:main-average-case}.

Finally, the signed marginal sampler makes $O(Mn)$ prefix-marginal queries.
Combining this query complexity with the marginal-evaluation cost, including preprocessing, yields a total arithmetic runtime of $MnN^3\varepsilon^{-2}\delta^{-1}2^{O(d)}$ per output sample, as stated in Theorem~\ref{thm:main}.


\section{Hoeffding decomposition of the averaged permanent statistic}
\label{sec:hoeffding}

This section carries out the first step of the proof roadmap, namely, constructing the low-degree truncation of the averaged permanent statistic $\Phi$ in Eq.~\eqref{eq:phi-definition}.  
We show that $\Phi$ admits a Hoeffding-type decomposition (see, e.g., Refs.~\cite{hoeffding1992class, rempala2002central, le2025hoeffding}) into canonical components indexed by subsets of input columns. 
The order-$j$ component appears with coefficient $\binom{n}{j}/\binom{N}{j}$, and this coefficient provides the suppression underlying the truncation when $N$ is large compared with $n$.


This section is divided into two parts.
We first derive a Hoeffding decomposition of $\Phi(B)$, valid for all complex matrices $B\in\mathbb C^{n\times N}$, and then introduce its degree-truncated version by keeping only components of order at most $d$.
We then specialize the input to a matrix $X \sim\mathcal N_{\mathbb C}(0,1)^{n\times N}$ with i.i.d.\ complex Gaussian entries and compute the error induced by discarding the higher-order Hoeffding components beyond $d$.


\subsection{Exact algebraic decomposition}

We begin with lower-order statistics indexed by input columns. 
Specifically, for $B \in \mathbb C^{n \times N}$ and $L\subseteq[N]$ with $\abs{L}\leq n$, define
\begin{align}
 G_L(B)
 &:={}
 (n-|L|)!
 \sum_{\substack{I\subseteq[n]\\\abs{I}=\abs{L}}}
 \abs{\Per(B_{I,L})}^2,
 \label{eq:G-definition}
\end{align}
with the conventions $\Per(\varnothing)=1$ and $G_\varnothing(B)=n!$; see Fig.~\ref{fig: hoeffding}(a) for a schematic illustration.

For $J\subseteq[N]$ with $\abs{J}\leq n$, define
\begin{align}
 H_J(B)
 &:={}
 \sum_{L\subseteq J}
 (-1)^{\abs{J}-\abs{L}}G_L(B),
 \label{eq:H-definition}
\end{align}
which is indeed the M\"obius transform of the set-indexed family $L\mapsto G_L(B)$ on the subset lattice of $J$~\cite{rota1964foundations, stanley2011enumerative}. 
A schematic illustration of $H_J$ is given in Fig.~\ref{fig: hoeffding}(b).


The next lemma is the M\"obius inversion formula on the lattice of subsets of $S\subseteq[N]$.

\begin{lemma}[M\"obius inversion]
\label{lem:boolean-inversion}
For every $S\subseteq[N]$ with $\abs{S}\leq n$, it follows that 
\begin{align}
 G_S(B)
 &=
 \sum_{J\subseteq S}H_J(B).
 \label{eq:boolean-inversion}
\end{align}
\end{lemma}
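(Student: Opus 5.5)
The plan is to substitute the definition of $H_J$ from Eq.~\eqref{eq:H-definition} into the right-hand side of Eq.~\eqref{eq:boolean-inversion} and exchange the order of summation. No property of the permanent or of $G_L$ beyond its being a set-indexed family is needed. Note first that every $J\subseteq S$ satisfies $\abs{J}\leq\abs{S}\leq n$, so each $H_J(B)$ is well defined, and likewise every $L\subseteq J$ has $\abs{L}\leq n$, so each $G_L(B)$ appearing is well defined.

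Writing out the double sum, we have
\begin{align}
\sum_{J\subseteq S}H_J(B)
&=
\sum_{J\subseteq S}\sum_{L\subseteq J}(-1)^{\abs{J}-\abs{L}}G_L(B)
=
\sum_{L\subseteq S}G_L(B)\sum_{J:\,L\subseteq J\subseteq S}(-1)^{\abs{J}-\abs{L}} .
\end{align}
The exchange is legitimate because the index set $\{(L,J):L\subseteq J\subseteq S\}$ is finite.

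The inner sum is evaluated by parametrizing $J=L\cup K$ with $K\subseteq S\setminus L$. Setting $m:=\abs{S\setminus L}$, this gives
\begin{align}
\sum_{K\subseteq S\setminus L}(-1)^{\abs{K}}=\sum_{k=0}^{m}\binom{m}{k}(-1)^k=(1-1)^m=\begin{cases}1,& L=S,\\ 0,& L\subsetneq S.\end{cases}
\end{align}
Hence only the term $L=S$ survives, and the right-hand side equals $G_S(B)$, as claimed.

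There is no real obstacle in this argument. The only point requiring care is the size constraint $\abs{\cdot}\leq n$ in the definitions, which is handled by the remark above. The convention $0^0=1$ in the binomial identity covers the case $L=S$, and the conventions $\Per(\varnothing)=1$ and $G_\varnothing(B)=n!$ ensure the $L=\varnothing$ term is included consistently.
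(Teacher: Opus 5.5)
Your proof is correct and follows essentially the same route as the paper: you substitute the definition of $H_J$, interchange the finite double sum over pairs $L\subseteq J\subseteq S$, and evaluate the inner sum by writing $J=L\cup K$ and using $(1-1)^{\abs{S\setminus L}}$, so only the $L=S$ term survives. Your extra remark that every $J\subseteq S$ and $L\subseteq J$ has size at most $n$, so all $G_L(B)$ and $H_J(B)$ are defined, is a small check the paper leaves implicit.
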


\begin{proof}
Summing Eq.~\eqref{eq:H-definition} over $J \subseteq S$, we obtain
\begin{align}
\sum_{J\subseteq S}H_J(B)
&=
\sum_{J\subseteq S}
\sum_{L\subseteq J}
(-1)^{\abs{J}-\abs{L}}G_L(B).
\label{eq:boolean-inversion-expanded}
\end{align}
The sum ranges over all pairs $(L,J)$ satisfying
$L\subseteq J\subseteq S$.  Reversing the order of summation gives
\begin{align}
\sum_{J\subseteq S}H_J(B)
&=
\sum_{L\subseteq S}
G_L(B)
\sum_{\substack{J:\,L\subseteq J\subseteq S}}
(-1)^{\abs{J}-\abs{L}}.
\label{eq:boolean-inversion-reordered}
\end{align}
For fixed $L\subseteq S$, every intermediate set $J$ can be written uniquely as
$J=L\cup R$, where $R\subseteq S\setminus L$.  Hence
\begin{align}
\sum_{\substack{J:\,L\subseteq J\subseteq S}}
(-1)^{\abs{J}-\abs{L}}
&=
\sum_{R\subseteq S\setminus L}
(-1)^{\abs{R}}
\nonumber \\
&=
\sum_{r = 0}^{\abs{S} - \abs{L}}
\binom{\abs{S} - \abs{L}}{r}(-1)^r
\nonumber \\
&=
(1-1)^{\abs{S}-\abs{L}}
\nonumber \\
&=
\begin{cases}
1,& L=S,\\
0,& L\subsetneq S.
\end{cases}
\label{eq:boolean-inner-cancellation}
\end{align}
Only the term $L=S$ remains in Eq.~\eqref{eq:boolean-inversion-reordered}, and therefore
\begin{align}
\sum_{J\subseteq S}H_J(B)
&=
G_S(B).
\end{align}
This proves the lemma.
\end{proof}

\begin{figure*}[t]
\includegraphics[width=0.93\linewidth]{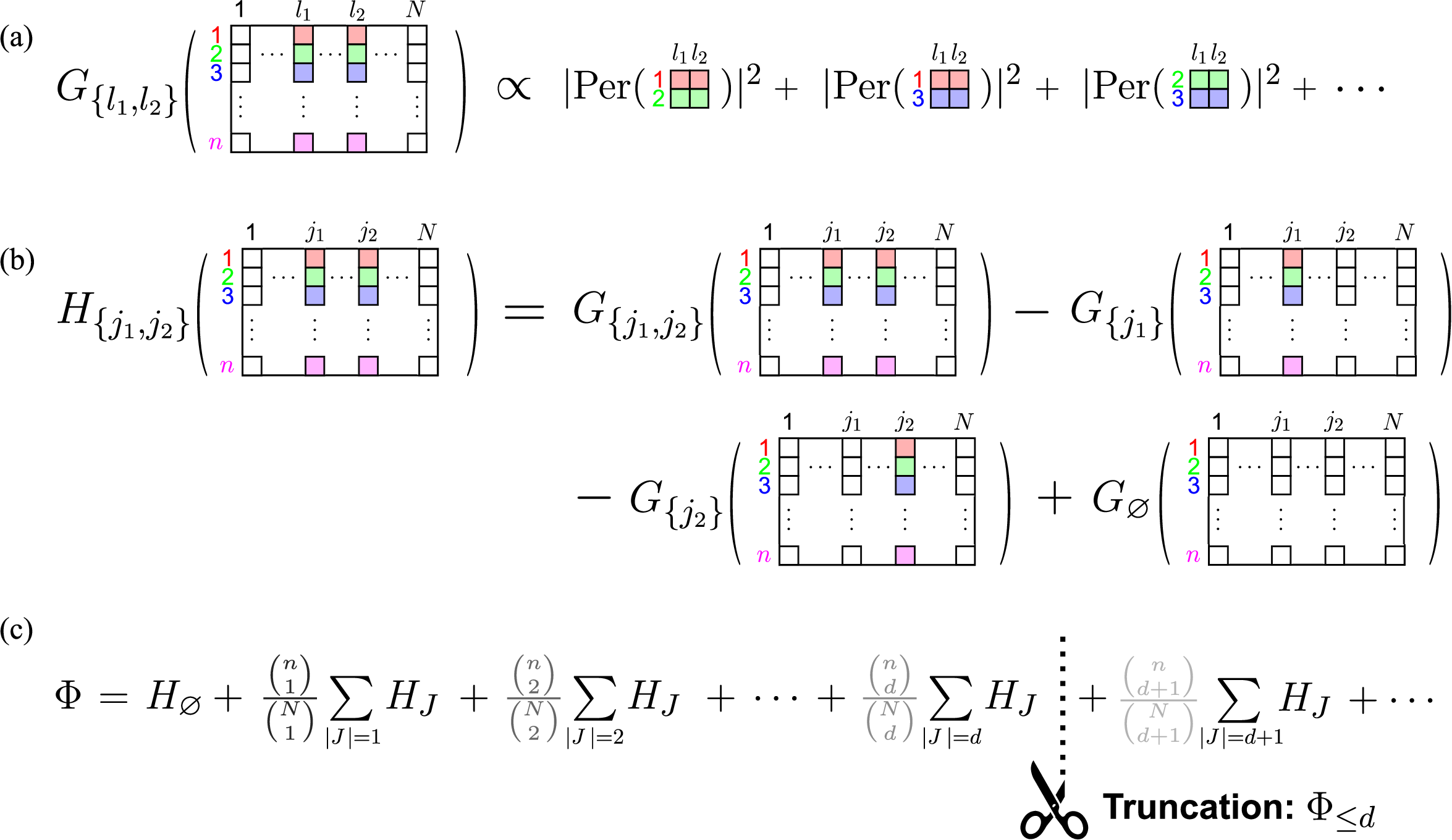}
\caption{Schematic illustration of $G_L$, $H_J$, the Hoeffding decomposition of $\Phi$, and its degree-$d$ truncation.
(a) For $L=\{l_1,l_2\}$, $G_L(B)$ is obtained, up to the prefactor $(n-\abs{L})!$, by summing the squared permanents of the submatrices $B_{I,L}$ over all row subsets $I\subseteq[n]$ with $\abs{I}=\abs{L}$.
(b) For $J=\{j_1,j_2\}$, the component $H_J(B)$ is obtained from $\{G_L:L\subseteq J\}$ by M\"obius transform,
$H_{\{j_1,j_2\}}=G_{\{j_1,j_2\}}-G_{\{j_1\}}-G_{\{j_2\}}+G_{\varnothing}$.
(c) Grouping the Hoeffding decomposition of $\Phi$ by the order $j=\abs{J}$ yields blocks weighted by $\binom{n}{j}/\binom{N}{j}$.
For $n<N$, these coefficients decrease with $j$, as indicated schematically by the fading prefactors, which motivates the truncation $\Phi_{\le d}$ obtained by keeping the terms with $\abs{J}\le d$ and discarding the higher-order tail.
}
\label{fig: hoeffding}
\end{figure*}

Applying this inversion identity to the averaged permanent statistic $\Phi$ in Eq.~\eqref{eq:phi-definition} yields the following Hoeffding decomposition.

\begin{proposition}[Hoeffding decomposition of the averaged permanent statistic]
\label{prop:exact-decomposition}
For every $B\in\mathbb C^{n\times N}$, with $H_J$ defined by Eq.~\eqref{eq:H-definition}, the averaged permanent statistic $\Phi$ in Eq.~\eqref{eq:phi-definition} admits a decomposition
\begin{align}
 \Phi(B)
 &=
 n!+
 \sum_{j=1}^n
 \frac{\binom{n}{j}}{\binom{N}{j}}
 \sum_{\substack{J\subseteq[N]\\\abs{J}=j}}
 H_J(B).
 \label{eq:exact-decomposition}
\end{align}
\end{proposition}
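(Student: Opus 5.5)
The plan is to rewrite each squared permanent $\abs{\Per(B_S)}^2$ for $\abs{S}=n$ in terms of the lower-order statistics $G_L$, apply Lemma~\ref{lem:boolean-inversion}, and then exchange the order of summation. The key observation is that when $\abs{S}=n$, the definition in Eq.~\eqref{eq:G-definition} gives $G_S(B)=0!\,\abs{\Per(B_{[n],S})}^2=\abs{\Per(B_S)}^2$, because the only row subset $I\subseteq[n]$ with $\abs{I}=n$ is $[n]$ itself. Hence
\begin{align}
\Phi(B)=\frac{1}{\binom{N}{n}}\sum_{\substack{S\subseteq[N]\\ \abs{S}=n}} G_S(B).
\end{align}

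Next, invoke Lemma~\ref{lem:boolean-inversion} for each such $S$ to write $G_S(B)=\sum_{J\subseteq S}H_J(B)$, and swap the sums:
\begin{align}
\Phi(B)=\frac{1}{\binom{N}{n}}\sum_{\substack{J\subseteq[N]\\ \abs{J}\leq n}} H_J(B)\,\#\{S:\ J\subseteq S\subseteq[N],\ \abs{S}=n\}.
\end{align}
The count equals $\binom{N-j}{n-j}$ for $\abs{J}=j$. The routine identity
\begin{align}
\binom{N-j}{n-j}\Big/\binom{N}{n}=\binom{n}{j}\Big/\binom{N}{j}
\end{align}
follows by expanding factorials, since both sides equal $\frac{n!\,(N-j)!}{N!\,(n-j)!}$. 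This produces the weight $\binom{n}{j}/\binom{N}{j}$ on each order-$j$ block.

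Finally, isolate the $j=0$ term. Here the weight is $1$ and $H_\varnothing(B)=G_\varnothing(B)=n!$ by the stated convention, which yields the constant $n!$ in Eq.~\eqref{eq:exact-decomposition}.

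There is no real obstacle in this argument. The only points needing care are confirming the identification $G_S=\abs{\Per(B_S)}^2$ at full size $\abs{S}=n$, which uses the $(n-\abs{L})!=1$ prefactor, and the binomial double-counting identity.
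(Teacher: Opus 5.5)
Your proposal is correct and follows the paper's proof essentially step for step: you identify $G_S=\abs{\Per(B_S)}^2$ for $\abs{S}=n$, apply Lemma~\ref{lem:boolean-inversion}, count the $\binom{N-j}{n-j}$ supersets of each $J$, convert the ratio to $\binom{n}{j}/\binom{N}{j}$, and separate the $j=0$ term $H_\varnothing=G_\varnothing=n!$.
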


\begin{proof}
If $\abs{S}=n$, then the definition of $G_S(B)$ contains only the row subset $I=[n]$.  Hence
\begin{align}
G_S(B)
=
(n-n)!
\sum_{\substack{I\subseteq[n]\\\abs{I}=n}}
\abs{\Per(B_{I,S})}^2
=
\abs{\Per(B_S)}^2.
\label{eq:G-top-order}
\end{align}
Using Lemma~\ref{lem:boolean-inversion} in the definition of $\Phi$ gives
\begin{align}
\Phi(B)
&=
\frac{1}{\binom{N}{n}}
\sum_{\substack{S\subseteq[N]\\\abs{S}=n}}
G_S(B)
=
\frac{1}{\binom{N}{n}}
\sum_{\substack{S\subseteq[N]\\\abs{S}=n}}
\sum_{J\subseteq S}H_J(B).
\label{eq:phi-double-subset-sum}
\end{align}
We reorganize the double sum by first fixing $J$.  If $\abs{J}=j$, then the number of $n$-element sets $S$ satisfying $J\subseteq S\subseteq[N]$ is $\binom{N-j}{n-j}$, so it holds that 
\begin{align}
\Phi(B)
&=
\sum_{j=0}^{n}
\frac{\binom{N-j}{n-j}}{\binom{N}{n}}
\sum_{\substack{J\subseteq[N]\\\abs{J}=j}}
H_J(B).
\label{eq:decomp-counting}
\end{align}
The coefficient can then be rewritten as
\begin{align}
\frac{\binom{N-j}{n-j}}{\binom{N}{n}}
&=
\frac{
\dfrac{(N-j)!}{(n-j)!(N-n)!}
}{
\dfrac{N!}{n!(N-n)!}
}
\nonumber\\
&=
\frac{n!(N-j)!}{(n-j)!N!}
\nonumber\\
&=
\frac{\binom{n}{j}}{\binom{N}{j}}.
\label{eq:decomposition-binomial-identity}
\end{align}
The contribution with $j=0$ is $H_\varnothing(B)=G_\varnothing(B)=n!$.  Separating this term from Eq.~\eqref{eq:decomp-counting} proves Eq.~\eqref{eq:exact-decomposition}.
\end{proof}

Importantly, for $n<N$, the coefficient $\binom{n}{j}/\binom{N}{j}$ decreases with the component order $j$ (see Fig.~\ref{fig: hoeffding}(c) for illustration).
This motivates our truncation-based simulation algorithm, in which we retain only the low-order components of Eq.~\eqref{eq:exact-decomposition}.
For later convenience, we define the truncation explicitly.

\begin{definition}[Degree-$d$ truncation of $\Phi$]\label{def:truncated-phi}
For $0\leq d\leq n$, we define the degree-$d$ truncation $\Phi_{\leq d}$ of the averaged permanent statistic $\Phi$ by retaining only components of order at most $d$ in Proposition~\ref{prop:exact-decomposition}, such that 
\begin{align}
 \Phi_{\leq d}(B)
 &:={}
 n!+
 \sum_{j=1}^d
 \frac{\binom{n}{j}}{\binom{N}{j}}
 \sum_{\substack{J\subseteq[N]\\\abs{J}=j}}
 H_J(B).
 \label{eq:truncated-phi}
\end{align}
\end{definition}


\subsection{Gaussian truncation error}

We now quantify the truncation error between the averaged permanent statistic $\Phi$ and its truncated version $\Phi_{\leq d}$ when the input matrix is Gaussian.
Specifically, let $X=(Z_1,\ldots,Z_N) \sim\mathcal N_{\mathbb C}(0,1)^{n\times N}$ have i.i.d.\ complex Gaussian entries; equivalently, the columns $Z_1,\ldots,Z_N$ are mutually independent with $Z_l\sim\mathcal N_{\mathbb C}(0,1)^{n\times 1}$ for every $l\in[N]$.
For later use, for a column index $l\in[N]$ and row indices $a,b\in[n]$, we define the centered bilinear form
\begin{align}
 B_{ab}^{(l)}
 &:={}
 X_{a,l}\overline{X_{b,l}}-\delta_{ab}.
 \label{eq:centered-bilinear}
\end{align}
By construction, $B_{ab}^{(l)}$ is centered with respect to the column $Z_l$, namely,
\begin{align}\label{eq:centered-bilinear-centered}
    \E_{Z_l}[B_{ab}^{(l)}] = 0.
\end{align}

We first turn $H_J$ in Eq.~\eqref{eq:H-definition} into an explicit product of centered bilinears $B_{ab}^{(l)}$ in Eq.~\eqref{eq:centered-bilinear}.


\begin{lemma}[Centered-bilinear expansion]
\label{lem:centered-expansion}
Let $X=(Z_1,\ldots,Z_N) \sim\mathcal N_{\mathbb C}(0,1)^{n\times N}$ have i.i.d.\ complex Gaussian entries, and let $B_{ab}^{(l)}$ be defined as in Eq.~\eqref{eq:centered-bilinear}.
Fix $J\subseteq[N]$.
Then
\begin{align}
H_J(X)
&=
(n-\abs{J})!
\sum_{\substack{R\subseteq[n]\\\abs{R}=\abs{J}}}
\sum_{\sigma,\tau\in\operatorname{Bij}(J,R)}
\prod_{l\in J}
B_{\sigma(l),\tau(l)}^{(l)},
\label{eq:centered-expansion}
\end{align}
where $\operatorname{Bij}(J,R)$ denotes the set of all bijections from $J$ to $R$.
In particular, for every $l\in J$,
\begin{align}
\E_{Z_l}[H_J(X)]
&=
0.
\label{eq:canonical-degeneracy}
\end{align}
\end{lemma}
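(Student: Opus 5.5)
The plan is to expand $G_L(X)$ itself in terms of centered bilinears and then recognize the M\"obius transform in Eq.~\eqref{eq:H-definition} as the operation that extracts the terms in which every column of $J$ carries a centered factor. The identity is purely algebraic in the entries of $X$; Gaussianity enters only through the normalization $\E\abs{X_{a,l}}^2=1$, which makes the subtraction of $\delta_{ab}$ the centering.

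\textbf{Step 1: a uniform formula for $G_L$.} For fixed $L$ with $\abs{L}=k$, expand $\abs{\Per(X_{I,L})}^2$ as a double sum over bijections $\sigma,\tau:L\to I$:
\begin{align}
\abs{\Per(X_{I,L})}^2=\sum_{\sigma,\tau\in\operatorname{Bij}(L,I)}\prod_{l\in L}X_{\sigma(l),l}\overline{X_{\tau(l),l}}.
\end{align}
Then substitute $X_{a,l}\overline{X_{b,l}}=B^{(l)}_{ab}+\delta_{ab}$ and expand the product over $l\in L$ into a sum over subsets $K\subseteq L$. On $K$ we keep the factors $B^{(l)}$, and on $L\setminus K$ we keep $\delta_{\sigma(l)\tau(l)}$, which forces $\sigma=\tau$ on $L\setminus K$.

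Next, organize the sum by the image set $R=\sigma(K)$. Since $\sigma$ and $\tau$ agree on $L\setminus K$, we have $\tau(K)=\sigma(K)=R$ as well. Count the remaining choices for fixed $K$, $R$ with $\abs{R}=\abs{K}=m$ and fixed bijections $\sigma|_K,\tau|_K\in\operatorname{Bij}(K,R)$:
\begin{itemize}
\item choose the remaining $k-m$ rows of $I$ from $[n]\setminus R$, in $\binom{n-m}{k-m}$ ways;
\item choose the common bijection on $L\setminus K$, in $(k-m)!$ ways.
\end{itemize}
Multiplying by the prefactor $(n-k)!$ gives
\begin{align}
(n-k)!\binom{n-m}{k-m}(k-m)!=(n-m)!.
\end{align}
This yields
\begin{align}
G_L(X)=\sum_{K\subseteq L}T_K,\qquad T_K:=(n-\abs{K})!\sum_{\substack{R\subseteq[n]\\\abs{R}=\abs{K}}}\sum_{\sigma,\tau\in\operatorname{Bij}(K,R)}\prod_{l\in K}B^{(l)}_{\sigma(l),\tau(l)}.
\end{align}
The key point is that $T_K$ depends only on $K$ and not on the ambient set $L$. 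As a check, for $L=\varnothing$ this gives $T_\varnothing=n!$, matching the convention $G_\varnothing=n!$.

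\textbf{Step 2: M\"obius inversion.} The relation $G_L=\sum_{K\subseteq L}T_K$ holds for every $L\subseteq J$. The inclusion--exclusion cancellation in Eq.~\eqref{eq:boolean-inner-cancellation}, used in the proof of Lemma~\ref{lem:boolean-inversion}, therefore applies in the reverse direction:
\begin{align}
\sum_{L\subseteq J}(-1)^{\abs{J}-\abs{L}}G_L=\sum_{K\subseteq J}T_K\sum_{K\subseteq L\subseteq J}(-1)^{\abs{J}-\abs{L}}=T_J.
\end{align}
Since the left-hand side is $H_J(X)$ by Eq.~\eqref{eq:H-definition}, this gives Eq.~\eqref{eq:centered-expansion}.

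\textbf{Step 3: canonical degeneracy.} Fix $l\in J$. Every monomial in $T_J$ contains exactly one factor depending on the column $Z_l$, namely $B^{(l)}_{\sigma(l),\tau(l)}$. The other factors depend only on the columns $Z_{l'}$ with $l'\neq l$, which are independent of $Z_l$. Taking $\E_{Z_l}$ conditionally on those columns and using Eq.~\eqref{eq:centered-bilinear-centered} kills each term, which gives Eq.~\eqref{eq:canonical-degeneracy}.

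The main obstacle is the combinatorial bookkeeping in Step 1. One must check that, after the $\delta$ factors collapse $\sigma=\tau$ off $K$, the sum over $I\supseteq R$ together with the free bijection on $L\setminus K$ produces exactly the factor $(n-\abs{K})!$, independently of $\abs{L}$. Without that independence the inversion in Step 2 would fail.
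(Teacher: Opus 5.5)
Your proof is correct, and it reaches the identity by a different mechanism than the paper.

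\emph{The paper's route.} It introduces the Gaussian partial-expectation operators $P_l$, which average over column $Z_l$. It proves the projection identity $\bigl(\prod_{l\in J\setminus L}P_l\bigr)G_J=G_L$ by counting the common injections of $J\setminus L$ into $[n]\setminus R_L$ that extend a fixed pair $(\sigma_L,\tau_L)$. It then recognizes the M\"obius sum as $\bigl(\prod_{l\in J}(\operatorname{Id}-P_l)\bigr)G_J$, and notes that each $\operatorname{Id}-P_l$ turns the factor $X_{\sigma(l),l}\overline{X_{\tau(l),l}}$ into $B^{(l)}_{\sigma(l),\tau(l)}$.

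\emph{Your route.} You substitute $X_{a,l}\overline{X_{b,l}}=B^{(l)}_{ab}+\delta_{ab}$ directly into every $G_L$. This gives the forward decomposition $G_L=\sum_{K\subseteq L}T_K$ with $T_K$ independent of $L$, which you invert using the cancellation from Lemma~\ref{lem:boolean-inversion}. The combinatorial core is the same count: your $\binom{n-m}{k-m}(k-m)!=(n-m)!/(n-k)!$ is the paper's $(n-\abs{L})!/(n-\abs{J})!$, with the paper's pair $(L,J)$ playing the role of your $(K,L)$.

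\emph{What each buys.} The paper's version identifies $H_J$ as the Hoeffding projection $\prod_{l\in J}(\operatorname{Id}-P_l)G_J$. This ties the construction to the standard projection picture behind the name, and makes the degeneracy \eqref{eq:canonical-degeneracy} immediate. Your version uses no measure at all for \eqref{eq:centered-expansion}. It therefore shows transparently that this is a polynomial identity valid for every complex $n\times N$ matrix, with the Gaussian assumption entering only in the final degeneracy step. That is exactly the form in which the identity is reused in Appendix~\ref{app:randomized-marginals}, where it is applied to rows of $\sqrt M\,U$ rather than to a Gaussian matrix. The paper's argument also yields a pointwise identity, but it routes it through Gaussian averages that play no role in the final statement.
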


\begin{proof}
The case $J=\varnothing$ gives $H_\varnothing=G_\varnothing=n!$, which agrees with Eq.~\eqref{eq:centered-expansion}.  
Hence, assume $\abs{J} \geq 1$.


For each $l \in J$, let $P_l$ be the operator that takes the expectation over the $l$th column $Z_l$, while keeping all other columns fixed.
Then, for every $L\subseteq J$,  it holds that 
\begin{align}
\left(
\prod_{l\in J\setminus L}P_l
\right)
G_J(X)
&=
G_L(X).
\label{eq:partial-expectation-GJ}
\end{align}
We now show how Eq.~\eqref{eq:partial-expectation-GJ} holds. 
Expand both permanents in $G_J$, which yields 
\begin{equation}
G_J(X)
=
(n-\abs{J})!
\sum_{\substack{R\subseteq[n]\\\abs{R}=\abs{J}}}
\sum_{\sigma,\tau\in\operatorname{Bij}(J,R)}
\prod_{l\in J}X_{\sigma(l),l}\overline{X_{\tau(l),l}}. 
\label{eq:appendix-GJ-expansion}
\end{equation}
Here, $\operatorname{Bij}(J,R)$ denotes the set of all bijections from $J$ to $R$, which encode the row-column matchings in the permanent expansion; these play the role of permutations when the row and column index sets are distinct.
Then, because the entries of the column $Z_l$ are standard complex Gaussian, it holds for every $l\in J\setminus L$ that 
\begin{align}\label{eq:orthogonality-of-Gaussian-entries}
\E\!\left[X_{\sigma(l),l}\overline{X_{\tau(l),l}}\right]
=\delta_{\sigma(l),\tau(l)}.
\end{align}
Thus, when applying $P_l$ to $G_J(X)$, each term in Eq.~\eqref{eq:appendix-GJ-expansion} survives only when $\sigma$ and $\tau$ agree on $J \setminus L$.
Fix an $|L|$-element row set $R_L$ and bijections $\sigma_L,\tau_L:L\to R_L$.  Extending them to surviving $\sigma, \tau$ amounts to choosing one common injection from $J \setminus L$ into $[n]\setminus R_L$, where the number of choices is ${(n-\abs{L})!}/{(n-\abs{J})!}$.
Multiplying by the prefactor $(n-\abs{J})!$ in Eq.~\eqref{eq:appendix-GJ-expansion} leaves $(n-\abs{L})!$, and the remaining sum is exactly $G_L$ (replacing $R \rightarrow R_L$ and $\sigma, \tau \rightarrow \sigma_L,\tau_L$ in Eq.~\eqref{eq:appendix-GJ-expansion}), yielding Eq.~\eqref{eq:partial-expectation-GJ}.

Now let $Q_l:=\operatorname{Id}-P_l$.  
By Eq.~\eqref{eq:orthogonality-of-Gaussian-entries}, applying $Q_l$ to the factor
$X_{\sigma(l),l}\overline{X_{\tau(l),l}}$ in the expansion of $G_J$ in Eq.~\eqref{eq:appendix-GJ-expansion} replaces it by the centered bilinear form $B_{\sigma(l),\tau(l)}^{(l)}$.
Combining this observation with the expansion $\prod_{l\in J}Q_l=\prod_{l\in J}(\operatorname{Id}-P_l)$ and the identity in Eq.~\eqref{eq:partial-expectation-GJ}, we obtain
\begin{align}
H_J(X)
&=
\sum_{L\subseteq J}
(-1)^{\abs{J}-\abs{L}}G_L(X)
\nonumber\\
&=
\sum_{L\subseteq J}
(-1)^{\abs{J}-\abs{L}}
\left(
\prod_{l\in J\setminus L}P_l
\right)
G_J(X)
\nonumber\\
&=
\left(
\prod_{l\in J}Q_l
\right)
G_J(X) 
\nonumber \\
&=
(n-\abs{J})!
\sum_{\substack{R\subseteq[n]\\\abs{R}=\abs{J}}}
\sum_{\sigma,\tau\in\operatorname{Bij}(J,R)}
\prod_{l\in J}
B_{\sigma(l),\tau(l)}^{(l)}, 
\label{eq:centering-operator}
\end{align}
which is exactly Eq.~\eqref{eq:centered-expansion}.


Finally, every summand in Eq.~\eqref{eq:centered-expansion} contains the centered form $B_{\sigma(l),\tau(l)}^{(l)}$ for each $l\in J$.
By Eq.~\eqref{eq:centered-bilinear-centered}, this factor has zero expectation with respect to column $l$.
Therefore, taking the expectation over any one column $l\in J$ annihilates every summand, yielding Eq.~\eqref{eq:canonical-degeneracy}.

\end{proof}

We next examine the inner products of these components. 

\begin{lemma}[Orthogonality and second moment]
\label{lem:component-norm}
Let $X=(Z_1,\ldots,Z_N) \sim\mathcal N_{\mathbb C}(0,1)^{n\times N}$ have i.i.d.\ complex Gaussian entries. 
For any distinct subsets $J,K\subseteq[N]$ with $\abs{J},\abs{K}\leq n$, the components $H_J$ and $H_K$ are orthogonal:
\begin{align}
 \E_X[H_J(X)\overline{H_K(X)}]
 &=
 0.
 \label{eq:subset-orthogonality}
\end{align}
Moreover, the second moment of each component is
\begin{align}
 \E_X[\abs{H_J(X)}^2]
 &=
 \frac{(n!)^2}{\binom{n}{|J|}}.
 \label{eq:component-norm}
\end{align}
\end{lemma}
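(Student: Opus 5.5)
The plan is to use the two facts already established in Lemma~\ref{lem:centered-expansion}: canonical degeneracy, Eq.~\eqref{eq:canonical-degeneracy}, gives orthogonality, and the centered-bilinear expansion, Eq.~\eqref{eq:centered-expansion}, gives the second moment. Throughout, $H_J(X)$ is a polynomial in the Gaussian entries, so all moments are finite and Fubini applies to iterated expectations over independent columns.

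\emph{Orthogonality.} Let $J\neq K$. Without loss of generality there is some $l\in J\setminus K$; otherwise $K\setminus J\neq\varnothing$, and we conjugate and swap roles. By Eq.~\eqref{eq:H-definition}, $H_K(X)$ depends only on the columns $\{Z_k:k\in K\}$, since every $G_L$ with $L\subseteq K$ does. In particular $H_K$ does not depend on $Z_l$. Conditioning on all columns other than $Z_l$ and using independence of the columns gives
\begin{align}
\E_X[H_J(X)\overline{H_K(X)}]=\E\!\left[\overline{H_K(X)}\,\E_{Z_l}[H_J(X)]\right]=0,
\end{align}
by Eq.~\eqref{eq:canonical-degeneracy}. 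This proves Eq.~\eqref{eq:subset-orthogonality}.

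\emph{Second moment.} Let $j=\abs{J}$ and expand $\abs{H_J}^2=H_J\overline{H_J}$ using Eq.~\eqref{eq:centered-expansion}. This gives a double sum over pairs $(R,\sigma,\tau)$ and $(R',\sigma',\tau')$, with prefactor $((n-j)!)^2$. Each summand is $\prod_{l\in J}B^{(l)}_{\sigma(l),\tau(l)}\overline{B^{(l)}_{\sigma'(l),\tau'(l)}}$. Since distinct $l$ involve independent columns, its expectation factorizes into $\prod_{l\in J}\E[B^{(l)}_{ab}\overline{B^{(l)}_{a'b'}}]$.

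The key single-column computation is a Wick/Isserlis identity for i.i.d.\ standard complex Gaussians:
\begin{align}
\E[X_a\overline{X_b}\,\overline{X_{a'}}X_{b'}]=\delta_{ab}\delta_{a'b'}+\delta_{aa'}\delta_{bb'}.
\end{align}
Subtracting the means, one obtains $\E[B^{(l)}_{ab}\overline{B^{(l)}_{a'b'}}]=\delta_{aa'}\delta_{bb'}$. Hence a summand survives only if $\sigma(l)=\sigma'(l)$ and $\tau(l)=\tau'(l)$ for all $l\in J$, i.e.\ $\sigma=\sigma'$ and $\tau=\tau'$. This forces $R=\sigma(J)=R'$, and each surviving term contributes exactly $1$.

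It then remains to count the triples $(R,\sigma,\tau)$: there are $\binom{n}{j}$ choices of $R$ and $(j!)^2$ pairs of bijections. Therefore
\begin{align}
\E_X[\abs{H_J(X)}^2]=((n-j)!)^2\binom{n}{j}(j!)^2=\frac{(n!)^2}{\binom{n}{j}},
\end{align}
using $(n-j)!\,j!=n!/\binom nj$. The case $J=\varnothing$ is consistent, since $H_\varnothing=n!$.

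The only step needing care is the fourth-moment identity for complex Gaussians, specifically that $\E[X_a^2]=0$ kills the would-be third pairing. Everything else is bookkeeping.
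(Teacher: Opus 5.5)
Your proposal is correct and follows essentially the same route as the paper's proof. You obtain orthogonality by choosing $l\in J\setminus K$ and invoking the canonical degeneracy $\E_{Z_l}[H_J]=0$, and you obtain the second moment from the Wick fourth-moment identity, the resulting orthonormality of the centered bilinears, and the count $\binom{n}{j}(j!)^2((n-j)!)^2$.
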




\begin{proof}
For each fixed column $l$, it holds that 
\begin{align}
 \E\!\left[
 X_{a,l}\overline{X_{b,l}}\,\overline{X_{a',l}}X_{b',l}
\right]
 &=
 \delta_{ab}\delta_{a'b'}
 +
 \delta_{aa'}\delta_{bb'},
 \label{eq:complex-fourth-moment}
\end{align}
by applying Wick's theorem.
Consequently, the centered-bilinear form in Eq.~\eqref{eq:centered-bilinear} satisfies
\begin{align}
&\E\!\left[
B_{ab}^{(l)}\overline{B_{a'b'}^{(l)}}
\right] \nonumber \\
&=
\E\!\left[
\left(
X_{a,l}\overline{X_{b,l}}-\delta_{ab}
\right)
\left(
\overline{X_{a',l}}X_{b',l}-\delta_{a'b'}
\right)
\right]
\nonumber\\
&=
\left(
\delta_{ab}\delta_{a'b'}
+
\delta_{aa'}\delta_{bb'}
\right)
-
\delta_{a'b'}\delta_{ab}
-
\delta_{ab}\delta_{a'b'}
+
\delta_{ab}\delta_{a'b'}
\nonumber\\
&=
\delta_{aa'}\delta_{bb'}.
\label{eq:bilinear-orthonormal}
\end{align}
Thus, for each fixed column $l$, the centered bilinears ${B_{ab}^{(l)}}$ form an orthonormal family.

Using Eq.~\eqref{eq:centered-expansion}, each summand in $H_J(X)$ is indexed by a row set $R\subseteq[n]$ with $\abs{R}=\abs{J}$ and two bijections $\sigma,\tau:J\to R$.  
Expanding $\abs{H_J(X)}^2$ and taking expectation over $X$, the independence of the columns and the orthonormality relation in Eq.~\eqref{eq:bilinear-orthonormal} imply that two product terms are orthogonal unless the corresponding bijections agree pointwise on every $l \in J$.
Specifically, there are $\binom{n}{\abs{J}}(\abs{J}!)^2$ such products, each with factor $(n-\abs{J})!^2$. 
Therefore,
\begin{align}
\E_X[\abs{H_J(X)}^2]
=
(n-\abs{J})!^2\binom{n}{\abs{J}}(\abs{J}!)^2
=
\frac{(n!)^2}{\binom{n}{\abs{J}}}.
\label{eq:norm-count}
\end{align}
This gives Eq.~\eqref{eq:component-norm}.

It remains to prove orthogonality for different subsets.
If $J\neq K$, then at least one of $J\setminus K$ and $K\setminus J$ is nonempty.
By exchanging the roles of $J$ and $K$ if necessary, choose $l\in J\setminus K$. Then $H_K$ is independent of the $l$th column $Z_l$, while Eq.~\eqref{eq:canonical-degeneracy} gives
\begin{align}
 \E_{Z_l}[H_J]
 &=
 0.
\end{align}
Conditioning on all columns except $Z_l$ gives Eq.~\eqref{eq:subset-orthogonality}.
\end{proof}

We can now compute the truncation error exactly.  The next proposition is the quantitative low-degree approximation statement used in the rest of the paper.

\begin{proposition}[Exact Gaussian truncation error]
\label{prop:gaussian-truncation}
Let $X \sim\mathcal N_{\mathbb C}(0,1)^{n\times N}$ have i.i.d.\ complex Gaussian entries.  
For every $0\leq d\leq n$, it holds that
\begin{align}
 \frac{
 \E_X\abs{\Phi(X)-\Phi_{\leq d}(X)}^2
 }{(n!)^2}
 &=
 \sum_{j=d+1}^n
 \frac{\binom{n}{j}}{\binom{N}{j}}
 =:
 R_d(n,N).
 \label{eq:exact-truncation-error}
\end{align}
Moreover, if $N>n$, then
\begin{align}
 R_d(n,N)
 &\leq
 \frac{(n/N)^{d+1}}{1-n/N}.
 \label{eq:geometric-truncation-error}
\end{align}
\end{proposition}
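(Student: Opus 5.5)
The plan is to write the truncation error as the high-order tail of the Hoeffding decomposition and then use orthogonality to compute its second moment term by term. By Proposition~\ref{prop:exact-decomposition} and Definition~\ref{def:truncated-phi},
\begin{align}
\Phi(X)-\Phi_{\leq d}(X)
=
\sum_{j=d+1}^{n}\frac{\binom{n}{j}}{\binom{N}{j}}\sum_{\substack{J\subseteq[N]\\\abs{J}=j}}H_J(X).
\end{align}
We then expand $\abs{\cdot}^2$ as a double sum over pairs $(J,K)$ and take the expectation over $X$. Lemma~\ref{lem:component-norm} says that the cross terms with $J\neq K$ vanish, since distinct subsets give orthogonal components. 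The diagonal terms equal $(n!)^2/\binom{n}{j}$ each. There are $\binom{N}{j}$ subsets of size $j$, so the order-$j$ contribution is
\begin{align}
\left(\frac{\binom nj}{\binom Nj}\right)^2\binom Nj\frac{(n!)^2}{\binom nj}=(n!)^2\frac{\binom nj}{\binom Nj}.
\end{align}
Summing over $j$ from $d+1$ to $n$ and dividing by $(n!)^2$ gives Eq.~\eqref{eq:exact-truncation-error}. The coefficients are real, so complex conjugation causes no difficulty. The expectations are finite, because every $H_J$ is a polynomial in Gaussian entries.

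For the geometric bound in Eq.~\eqref{eq:geometric-truncation-error}, write the coefficient as a product:
\begin{align}
\frac{\binom nj}{\binom Nj}=\prod_{i=0}^{j-1}\frac{n-i}{N-i}.
\end{align}
Each factor satisfies $(n-i)/(N-i)\leq n/N$, because $n\leq N$ implies $N(n-i)\leq n(N-i)$, which reduces to $-Ni\leq -ni$. Hence $\binom nj/\binom Nj\leq (n/N)^j$. Summing the geometric series from $j=d+1$ to infinity, which is valid since $n/N<1$, gives $(n/N)^{d+1}/(1-n/N)$.

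I do not expect a real obstacle here, because the substantive work is already contained in Lemma~\ref{lem:component-norm}. The only step needing care is the bookkeeping of the squared coefficient against the component norm and the subset count. One should also confirm that the case $d=n$ gives an empty sum, so that the error is exactly $0$.
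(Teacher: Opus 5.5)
Your proposal is correct and follows essentially the same route as the paper: you write the error as the Hoeffding tail, kill the cross terms with Lemma~\ref{lem:component-norm}, and count $\binom{N}{j}$ components of norm $(n!)^2/\binom{n}{j}$ at each order. You then bound the binomial ratio factorwise by $(n/N)^j$ and sum the geometric series.
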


\begin{proof}
By Proposition~\ref{prop:exact-decomposition} and the definition of $\Phi_{\leq d}$, we have 
\begin{align}
 \Phi(X)-\Phi_{\leq d}(X)
 &=
 \sum_{j=d+1}^n
 \frac{\binom{n}{j}}{\binom{N}{j}}
 \sum_{\substack{J\subseteq[N]\\\abs{J}=j}}H_J(X).
 \label{eq:truncation-expansion}
\end{align}
When we take the squared modulus and average over $X$, the orthogonality statement in Lemma~\ref{lem:component-norm} removes all cross terms between distinct subsets $J$.
Therefore,
\begin{align}
 \E\abs{\Phi(X)-\Phi_{\leq d}(X)}^2
 &=
 \sum_{j=d+1}^n
 \left(
 \frac{\binom{n}{j}}{\binom{N}{j}}
\right)^2
 \sum_{\substack{J\subseteq[N]\\\abs{J}=j}}
 \E\abs{H_J(X)}^2
 \nonumber\\
 &=
 \sum_{j=d+1}^n
 \left(
 \frac{\binom{n}{j}}{\binom{N}{j}}
\right)^2
 \binom{N}{j}
 \frac{(n!)^2}{\binom{n}{j}}
 \nonumber\\
 &=
 (n!)^2
 \sum_{j=d+1}^n
 \frac{\binom{n}{j}}{\binom{N}{j}}.
 \label{eq:truncation-orthogonal-sum}
\end{align}
This proves Eq.~\eqref{eq:exact-truncation-error}.

For the geometric bound, assume $N>n$.  
For every $1 \leq j\leq n$,
\begin{align}
 \frac{\binom{n}{j}}{\binom{N}{j}}
 =
 \prod_{s=0}^{j-1}
 \frac{n-s}{N-s}
 \leq
 \left(\frac{n}{N}\right)^j.
 \label{eq:binomial-ratio-geometric}
\end{align}
Summing the resulting geometric series gives
\begin{align}
R_d(n,N)
\leq
\sum_{j=d+1}^{n}
\left(\frac{n}{N}\right)^j
\leq
\frac{(n/N)^{d+1}}{1-n/N}.
\end{align}
This proves Eq.~\eqref{eq:geometric-truncation-error}.
\end{proof}


\section{From Gaussian truncation to Haar-average error}
\label{sec:haar-error}

In this section, we transfer the exact Gaussian truncation estimate of Proposition~\ref{prop:gaussian-truncation}  to the Haar-random interferometer used in the sampling problem.
The ideal ordered first-quantized string distribution is $p_U$ defined in Eq.~\eqref{eq:ordered-probability}, while the new object $q_{d,U}$, defined below, is its signed approximation obtained by evaluating $\Phi_{\leq d}$ on the scaled Haar submatrices $A_{\bm r} = \sqrt M \,U_{\bm r,[N]}$ associated with a first-quantized output string $\bm r\in[M]^{n}$. 
The main result of this section is an $L^1$-error bound for $p_U-q_{d,U}$ restricted to collision-free output strings, expressed in terms of the Gaussian truncation error $R_d(n,N)$ obtained in Proposition~\ref{prop:gaussian-truncation}.




We first define the signed approximation to the first-quantized distribution $p_U$ in Eq.~\eqref{eq:ordered-probability} as follows.  

\begin{definition}[Signed approximation of $p_U$]\label{def:signed-ordered-approximation}
For each output string $\bm r\in[M]^n$, we define the signed function $q_{d,U}$ pointwise by
\begin{align}
 q_{d,U}(\bm r)
 &:={}
 \frac{\Phi_{\leq d}(A_{\bm r})}{n!M^n},
 \label{eq:qd-definition}
\end{align}
where $A_{\bm r}=\sqrt M\,U_{\bm r,[N]}$, and $\Phi_{\leq d}$ denotes the degree-$d$ truncation in Definition~\ref{def:truncated-phi}.
\end{definition} 
Note that $q_{d,U}$ is a signed function and may take negative values; it is therefore not, in general, a probability distribution. 



We also define the collision-free $L^1$ error of the signed approximation $q_{d,U}$ relative to $p_U$:
\begin{align}
 \Delta_{\mathrm{cf}}(U)
 &:={}
 \sum_{\bm r\in [M]_{\neq}^{n} }
 \abs{
 p_U(\bm r)-q_{d,U}(\bm r)
 },
 \label{eq:Delta-cf-definition}
\end{align}
where $[M]_{\neq}^{n} $ is the set of collision-free strings defined in Eq.~\eqref{eq:collision-free-strings}.


\subsection{From Gaussians to rectangular Haar submatrices}

We first establish a bridge between the Gaussian ensemble used in Sec.~\ref{sec:hoeffding} and the Haar-random interferometers used in the sampling algorithm.

\begin{lemma}[Rectangular Haar--Gaussian density domination]
\label{lem:rectangular-domination}
There exist constants $C_0, C_H>0$ such that the following holds.  
Let $1\leq n\leq N\leq M$ and suppose
\begin{align}
    M \geq C_0 n^2N.
\end{align}
Let $p_{\mathrm H}$ be the density of $\sqrt M\,U_{R,[N]}$,
where $U\sim\Haar(M)$ and $R\subseteq[M]$ is a fixed set of size $n$.
Let $p_{\mathrm G}$ be the density of an $n\times N$ matrix
with i.i.d.\ complex Gaussian entries.
Then
\begin{align}
p_{\mathrm H}(Y)
&\leq C_Hp_{\mathrm G}(Y),
\label{eq:rectangular-Haar-density-domination}
\end{align}
for every $Y\in\mathbb C^{n\times N}$.
\end{lemma}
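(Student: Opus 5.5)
The plan is to compare the two densities explicitly and bound their ratio uniformly in $Y$. The starting point is the classical formula for truncations of Haar unitaries (Collins; Życzkowski--Sommers; Petz--Réffy). Since $M\geq C_0n^2N$ forces $n+N\leq M$, the block $V=U_{R,[N]}$ has a density on $\{V\in\mathbb C^{n\times N}:\norm{V}_{\mathrm{op}}<1\}$ proportional to $\det(I_n-VV^\dagger)^{M-n-N}$. The normalizing constant is an explicit ratio of products of Gamma functions, obtainable for instance by integrating in the singular-value coordinates with a Jacobi/Selberg integral. Rescaling $Y=\sqrt M\,V$ gives
\begin{align}
\frac{p_{\mathrm H}(Y)}{p_{\mathrm G}(Y)}
=
\kappa_{M,n,N}\,
\prod_{i=1}^{n}
\exp\!\Big[(M-n-N)\log\!\big(1-\tfrac{x_i}{M}\big)+x_i\Big],
\end{align}
where $x_i\in[0,M)$ are the eigenvalues of $YY^\dagger$. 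The prefactor $\kappa_{M,n,N}=\pi^{nN}M^{-nN}c_{M,n,N}$ collects the Haar normalization $c_{M,n,N}$ and the Gaussian normalization $\pi^{-nN}$. The ratio is zero outside the support.

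The $Y$-dependence now factorizes over eigenvalues, so the supremum is taken coordinatewise. The function $f(x)=(M-n-N)\log(1-x/M)+x$ is concave on $[0,M)$ and is maximized at $x^\ast=n+N$. Expanding the logarithm gives
\begin{align}
\sup_{Y} \log\frac{p_{\mathrm H}(Y)}{p_{\mathrm G}(Y)}
\leq \log\kappa_{M,n,N}+n f(n+N),
\qquad
f(n+N)=\frac{(n+N)^2}{2M}+O\!\Big(\frac{(n+N)^3}{M^2}\Big).
\end{align}
Taken alone, $nf(n+N)\approx nN^2/(2M)$, and this is \emph{not} $O(1)$ under $M\geq C_0n^2N$. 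So the proof must extract a matching negative contribution from $\log\kappa_{M,n,N}$; this is the main obstacle.

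To handle it, I would write $c_{M,n,N}$ as the product $\prod_{i=1}^{n}\Gamma(M-i+1)/\big[\Gamma(M-N-i+1)\,\pi^{N}\big]$, or whichever equivalent form the Selberg evaluation produces. Each factor is then expanded using $\log\Gamma(M-i+1)-\log\Gamma(M-N-i+1)=N\log M+\sum_{k}\log(1-k/M)$ over the appropriate range of $k$. The product telescopes against $M^{-nN}\pi^{nN}$, leaving
\begin{align}
\log\kappa_{M,n,N}
=\sum_{i=1}^{n}\sum_{k=i}^{N+i-1}\log\!\Big(1-\frac{k}{M}\Big)
=-\frac{nN^2+n^2N}{2M}+O\!\Big(\frac{nN^3}{M^2}\Big)+O\!\Big(\frac{nN}{M}\Big).
\end{align}
Adding $nf(n+N)=\frac{n(n+N)^2}{2M}+\dots$ cancels the dangerous $nN^2/(2M)$ term. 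What remains is $O(n^2N/M)$ plus cubic corrections of size $O(nN^3/M^2)$, and both are $O(1/C_0)$ once $M\geq C_0n^2N$, using $N\leq M$.

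The care goes into keeping every second-order term exactly, so that the cancellation is exact at order $nN^2/M$ rather than merely up to constants. I would also verify that the maximizer $x^\ast=n+N$ lies inside the support $[0,M)$, which holds for large $C_0$. The boundary $x\to M$ needs no treatment, since $f\to-\infty$ there. With these checks the ratio is bounded by $C_H=\exp(O(1/C_0))$ uniformly in $Y$, which is the bound in the statement.
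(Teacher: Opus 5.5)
\textbf{Gap in the remainder estimate.} Your overall route matches the paper's: the explicit truncated-Haar density, the normalization written as $\prod_{i=1}^{n}\prod_{k=i}^{N+i-1}(1-k/M)$, maximization of $f$ at $x^\ast=n+N$ for each eigenvalue, and a cancellation between $\log\kappa_{M,n,N}$ and $nf(n+N)$. The gap is in the last step. You check the cancellation only at order $1/M$, then claim the leftover ``cubic corrections of size $O(nN^3/M^2)$'' are $O(1/C_0)$. They are not. Under $M\geq C_0n^2N$ you only get $nN^3/M^2\leq N/(C_0^2n^3)$, which is unbounded when $N\gg n^3$, and $N\leq M$ does not help. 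These terms genuinely occur in each piece separately. Take $n=1$ and $M=C_0N$. The $M^{-2}$ term of $\log\kappa$ is $-\frac{1}{2M^2}\sum_{k=1}^{N}k^2\approx -N/(6C_0^2)$, and that of $f(N+1)$ is $(N+1)^3/(6M^2)\approx N/(6C_0^2)$. Both diverge as $N\to\infty$, and only their difference, about $1/(4C_0^2)$, is bounded. The same happens at every order $k\geq2$: each piece contributes terms of order $nN^{k+1}/M^k$. The bound therefore closes only if the cancellation holds at \emph{every} order, not just the first. Your estimate does suffice when $N\leq Cn^3$ (for example $n=\Theta(N)$), but the lemma is stated for all $1\leq n\leq N$.

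\textbf{How the paper closes it.} The paper expands both pieces completely, using $f(a)=\sum_{k\geq1}a^{k+1}/(k(k+1)M^k)$ with $a=n+N$. It shows that the $k$th coefficient satisfies $\frac{na^{k+1}}{k+1}-\sum_{j=0}^{n-1}\sum_{r=1}^{N}(j+r)^k\leq n^2a^k$. This follows by comparing $\sum_{r}(j+r)^k$ with $\int_0^N(j+x)^k\,dx$ and applying the mean value theorem. Summing over $k$ gives $\log(p_{\mathrm H}/p_{\mathrm G})\leq n^2\log\frac{1}{1-a/M}\leq 4n^2N/M\leq 4/C_0$.

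A series-free version of the same argument also works. First, $f(a)=-\int_0^a\log(1-x/M)\,dx$. Second, $x\mapsto\log(1-x/M)$ is decreasing, so $\sum_{r=1}^{N}\log\bigl(1-\frac{j+r}{M}\bigr)\leq\int_j^{j+N}\log(1-x/M)\,dx$. Hence each $j$ contributes at most $\int_{[0,j]\cup[j+N,a]}\abs{\log(1-x/M)}\,dx$. This integral is at most $n\log\frac{1}{1-a/M}$, since the two intervals have total length $n$. Summing over $j$ gives the same bound $n^2\log\frac{1}{1-a/M}$.
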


We defer the proof to Appendix~\ref{app:haar-details}, as it requires a direct comparison between the exact density of rectangular truncations of Haar-random unitary matrices and that of matrices with i.i.d.\ complex Gaussian entries, which lies beyond the main scope of this work.

\subsection{Pointwise error for one collision-free string}


Next, we use the density comparison in Lemma~\ref{lem:rectangular-domination} to control the squared truncation error between $p_U(\bm r)$ and $q_{d,U}(\bm r)$ for a fixed output string $\bm r$, with the expectation taken over $U\sim\Haar(M)$.
Here, we restrict to collision-free first-quantized output strings $\bm r \in [M]_{\neq}^{n}$ in Eq.~\eqref{eq:collision-free-strings}.
This ensures that the rows of $A_{\bm r}$ are distinct and hence that Lemma~\ref{lem:rectangular-domination} applies.



\begin{lemma}[Pointwise Haar truncation error]
\label{lem:pointwise-haar-error}
Let $C_0,C_H>0$ be the constants in Lemma~\ref{lem:rectangular-domination}, and let $ M \geq C_0 n^2N $.
For every $\bm r\in[M]_{\neq}^{n} $,
\begin{align}
 \E_{U\sim\Haar(M)}
 \abs{
 p_U(\bm r)-q_{d,U}(\bm r)
 }^2
 &\leq
 \frac{C_HR_d(n,N)}{M^{2n}}.
 \label{eq:pointwise-haar-error}
\end{align}
\end{lemma}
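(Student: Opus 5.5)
We prove the bound by writing the pointwise error as a fixed function of the scaled submatrix $A_{\bm r}$ and then moving the Haar average to a Gaussian average. By Eq.~\eqref{eq:ordered-probability} and Definition~\ref{def:signed-ordered-approximation}, for every $\bm r\in[M]^n$,
\begin{align}
 p_U(\bm r)-q_{d,U}(\bm r)
 =
 \frac{\Phi(A_{\bm r})-\Phi_{\leq d}(A_{\bm r})}{n!M^n},
\end{align}
so that
\begin{align}
 \E_{U}\abs{p_U(\bm r)-q_{d,U}(\bm r)}^2
 =
 \frac{1}{(n!)^2M^{2n}}\,
 \E_U\abs{F(A_{\bm r})}^2 ,
\end{align}
where $F:=\Phi-\Phi_{\leq d}$ is a fixed (polynomial, hence measurable) function on $\mathbb C^{n\times N}$ that does not depend on $U$. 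It therefore suffices to show $\E_U\abs{F(A_{\bm r})}^2\leq C_H(n!)^2R_d(n,N)$.

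For $\bm r\in[M]_{\neq}^n$ the entries are distinct, so the row set $R=\{r_1,\ldots,r_n\}$ has size $n$. The matrix $A_{\bm r}$ is obtained from $\sqrt M\,U_{R,[N]}$ by a row permutation $\pi$ placing rows in the order of $\bm r$. We remove this permutation at the level of $F$. Permuting the rows of $B$ leaves every permanent $\Per(B_S)$ unchanged, since the permanent is invariant under row permutations. Hence $\Phi$ is row-permutation invariant. The functions $G_L$ sum over all row subsets $I$, so each $G_L$ is invariant as well, and therefore so are $H_J$ and $\Phi_{\leq d}$. Consequently $F(A_{\bm r})=F(\sqrt M\,U_{R,[N]})$. (Alternatively, Haar invariance under left multiplication by a permutation matrix gives the same conclusion directly.)

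With the permutation removed, the matrix $Y=\sqrt M\,U_{R,[N]}$ has density $p_{\mathrm H}$. Lemma~\ref{lem:rectangular-domination} applies under the assumption $M\geq C_0n^2N$. Since $\abs{F}^2\geq 0$, the density domination gives
\begin{align}
 \E_U\abs{F(A_{\bm r})}^2
 =\int \abs{F(Y)}^2p_{\mathrm H}(Y)\,dY
 \leq C_H\int\abs{F(Y)}^2p_{\mathrm G}(Y)\,dY
 = C_H\,\E_X\abs{\Phi(X)-\Phi_{\leq d}(X)}^2 .
\end{align}
Proposition~\ref{prop:gaussian-truncation} evaluates the Gaussian expectation as $(n!)^2R_d(n,N)$. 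Substituting into the display above yields Eq.~\eqref{eq:pointwise-haar-error}.

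The only step needing care is the reduction from the ordered string $\bm r$ to a fixed row set. This is where collision-freeness is essential: for repeated rows, $A_{\bm r}$ would not have a density on $\mathbb C^{n\times N}$, and Lemma~\ref{lem:rectangular-domination} would not apply. Nonnegativity of the integrand $\abs{F}^2$ is what allows the pointwise density bound to transfer to the expectations.
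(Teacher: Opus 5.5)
Your proposal is correct and follows essentially the same route as the paper's proof. In both, you write the error as $(\Phi-\Phi_{\leq d})(A_{\bm r})/(n!M^n)$, bound the Haar integral of the nonnegative integrand by $C_H$ times its Gaussian integral via Lemma~\ref{lem:rectangular-domination}, and then evaluate that Gaussian integral with Proposition~\ref{prop:gaussian-truncation}. Your explicit handling of the row ordering, through row-permutation invariance of $\Phi$ and $\Phi_{\leq d}$ or through Haar left-invariance, is a small clarification that the paper leaves implicit when it asserts that $A_{\bm r}$ has density $p_{\mathrm H}$.
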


\begin{proof}

Because $\bm r\in[M]_{\neq}^{n}$ is collision-free, the submatrix $A_{\bm r}=\sqrt M \,U_{\bm r,[N]}$ is obtained by selecting $n$ distinct rows of the Haar-random unitary $U$. 
Hence, $A_{\bm r}$ follows the Haar-submatrix density $p_{\mathrm H}$ from Lemma~\ref{lem:rectangular-domination}.

By the definitions of $p_U$ and $q_{d,U}$, we have 
\begin{align}
 p_U(\bm r)-q_{d,U}(\bm r)
 &=
 \frac{
 \Phi(A_{\bm r})-\Phi_{\leq d}(A_{\bm r})
 }{n!M^n}.
 \label{eq:p-minus-q-as-phi-error}
\end{align}
Therefore,
\begin{align}
&\E_{U\sim\Haar(M)}
 \abs{
 p_U(\bm r)-q_{d,U}(\bm r)
 }^2
 \nonumber\\
&=
 \frac{1}{(n!)^2M^{2n}}
 \int_{\mathbb C^{n\times N}}
 \abs{
 \Phi(X)-\Phi_{\leq d}(X)
 }^2
 p_{\mathrm H}(X)\,dX
 \nonumber\\
&\leq
 \frac{C_H}{(n!)^2M^{2n}}
 \int_{\mathbb C^{n\times N}}
 \abs{
 \Phi(X)-\Phi_{\leq d}(X)
 }^2
 p_{\mathrm G}(X)\,dX
 \nonumber\\
&=
 \frac{C_H}{(n!)^2M^{2n}}
 \E_{X\sim p_{\mathrm G}}
 \abs{
 \Phi(X)-\Phi_{\leq d}(X)
 }^2.
 \label{eq:haar-to-gaussian-error-transfer}
\end{align}
Moreover, Proposition~\ref{prop:gaussian-truncation} gives
\begin{align}
 \E_{X\sim p_{\mathrm G}}
 \abs{
 \Phi(X)-\Phi_{\leq d}(X)
 }^2
 &=
 (n!)^2R_d(n,N).
 \label{eq:gaussian-truncation-used}
\end{align}
Substituting Eq.~\eqref{eq:gaussian-truncation-used} into Eq.~\eqref{eq:haar-to-gaussian-error-transfer} proves Eq.~\eqref{eq:pointwise-haar-error}.
\end{proof}

\subsection{Global error over collision-free strings}

Finally, we convert the pointwise second-moment bound in Lemma~\ref{lem:pointwise-haar-error} into an $L^1$-error bound over the collision-free output set $[M]_{\neq}^{n}$. 


\begin{proposition}[Collision-free $L^1$ error]
\label{prop:collision-free-L1}
Let $C_0,C_H>0$ be the constants in Lemma~\ref{lem:rectangular-domination}, and let $ M \geq C_0 n^2N $. 
Then the collision-free $L^1$ error $\Delta_{\mathrm{cf}}(U)$ in Eq.~\eqref{eq:Delta-cf-definition} satisfies
\begin{align}
 \E_{U\sim\Haar(M)}
 \left[
 \Delta_{\mathrm{cf}}(U)^2
\right]
 &\leq
 C_HR_d(n,N).
 \label{eq:collision-free-L1-second-moment}
\end{align}
Consequently, for every $\varepsilon>0$, it holds that 
\begin{align}
 \Prob_{U\sim\Haar(M)}
 \left[
 \Delta_{\mathrm{cf}}(U)>\varepsilon
\right]
 &\leq
 \frac{C_HR_d(n,N)}{\varepsilon^2}.
 \label{eq:collision-free-L1-high-prob}
\end{align}
\end{proposition}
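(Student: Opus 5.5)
The plan is to pass from the pointwise second-moment bound of Lemma~\ref{lem:pointwise-haar-error} to the $L^1$ sum by a Cauchy--Schwarz inequality over the collision-free index set, and then to apply Markov's inequality to $\Delta_{\mathrm{cf}}(U)^2$.

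First, for each fixed $U$, apply Cauchy--Schwarz to the sum in Eq.~\eqref{eq:Delta-cf-definition}. This gives
\begin{align}
\Delta_{\mathrm{cf}}(U)^2
\leq
\abs{[M]_{\neq}^{n}}
\sum_{\bm r\in[M]_{\neq}^{n}}
\abs{p_U(\bm r)-q_{d,U}(\bm r)}^2 .
\end{align}
The counting factor satisfies $\abs{[M]_{\neq}^{n}}=M!/(M-n)!\leq M^n$.

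Second, take the Haar expectation. By linearity it passes inside the finite sum, and each term is controlled by Lemma~\ref{lem:pointwise-haar-error}. That lemma applies because every $\bm r$ in the sum is collision-free and $M\geq C_0n^2N$ holds by hypothesis. This yields
\begin{align}
\E_U\left[\Delta_{\mathrm{cf}}(U)^2\right]
\leq
M^n\cdot M^n\cdot\frac{C_HR_d(n,N)}{M^{2n}}
=
C_HR_d(n,N),
\end{align}
which is Eq.~\eqref{eq:collision-free-L1-second-moment}. The factors of $M^n$ cancel exactly, which is why the scaling $A_{\bm r}=\sqrt M\,U_{\bm r,[N]}$ and the normalization in Eq.~\eqref{eq:ordered-probability} were chosen.

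Third, for the tail bound, note that $\Delta_{\mathrm{cf}}(U)>\varepsilon$ holds exactly when $\Delta_{\mathrm{cf}}(U)^2>\varepsilon^2$, since $\Delta_{\mathrm{cf}}$ is nonnegative. Markov's inequality applied to the nonnegative variable $\Delta_{\mathrm{cf}}(U)^2$ then gives Eq.~\eqref{eq:collision-free-L1-high-prob}.

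No step is a real obstacle. The only point needing care is that collision strings are excluded from the sum, since Lemma~\ref{lem:pointwise-haar-error} and the density domination do not apply when rows repeat. This is automatic because $\Delta_{\mathrm{cf}}$ is defined only over $[M]_{\neq}^n$.
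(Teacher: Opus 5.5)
Your proposal is correct and matches the paper's proof: Cauchy--Schwarz over $[M]_{\neq}^{n}$, then the counting bound $(M)_n\leq M^n$ applied twice to cancel the pointwise scale $M^{-2n}$ from Lemma~\ref{lem:pointwise-haar-error}, then Markov's inequality on $\Delta_{\mathrm{cf}}(U)^2$. You have also correctly noted that restricting to collision-free strings is what makes the pointwise Haar--Gaussian lemma applicable.
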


\begin{proof}
Applying the Cauchy--Schwarz inequality to the sum defining $\Delta_{\mathrm{cf}}(U)$ gives
\begin{align}
\Delta_{\mathrm{cf}}(U)^2
&=
\left(
\sum_{\bm r\in[M]_{\neq}^{n} }
\abs{p_U(\bm r)-q_{d,U}(\bm r)}
\right)^2 
\nonumber \\
&\leq
\abs{[M]_{\neq}^{n} }
\sum_{\bm r\in[M]_{\neq}^{n} }
\abs{p_U(\bm r)-q_{d,U}(\bm r)}^2.
\label{eq:cs-collision-free-error}
\end{align}
Here, the number of collision-free strings in $[M]_{\neq}^{n} $ is given by a falling factorial:
\begin{align}
 \abs{[M]_{\neq}^{n} }
 &=
 (M)_n
 =
 M(M-1)\cdots(M-n+1)
 \leq
 M^n.
 \label{eq:number-collision-free-strings}
\end{align}
Taking expectation of $\Delta_{\mathrm{cf}}(U)^2$ over $U \sim \Haar(M)$ and applying Lemma~\ref{lem:pointwise-haar-error} gives
\begin{align}
 \E_U[\Delta_{\mathrm{cf}}(U)^2]
 &\leq
 (M)_n
 \sum_{\bm r\in[M]_{\neq}^{n} }
 \E_U\left[\abs{p_U(\bm r)-q_{d,U}(\bm r)}^2\right]
 \nonumber\\
 &\leq
 (M)_n^2
 \frac{C_HR_d(n,N)}{M^{2n}}
 \nonumber\\
 &\leq
 C_HR_d(n,N).
 \label{eq:Delta-cf-second-moment-proof}
\end{align}
The two collision-free counting factors contribute at most $(M)_n^2\leq M^{2n}$ and cancel the pointwise scale $M^{-2n}$, so no union bound over the collision-free strings is required.
This proves Eq.~\eqref{eq:collision-free-L1-second-moment}.  
Moreover, Markov's inequality applied to $\Delta_{\mathrm{cf}}(U)^2$ gives
\begin{align}
 \Prob_U
 \left[
 \Delta_{\mathrm{cf}}(U)> \varepsilon
\right]
 &=
 \Prob_U
 \left[
 \Delta_{\mathrm{cf}}(U)^2> \varepsilon^2
\right]
 \nonumber\\
 &\leq
 \frac{\E_U[\Delta_{\mathrm{cf}}(U)^2]}{\varepsilon^2}
 \nonumber\\
 &\leq
 \frac{C_HR_d(n,N)}{\varepsilon^2}.
\end{align}
This proves Eq.~\eqref{eq:collision-free-L1-high-prob}.
\end{proof}

\section{Marginal evaluation and the collision-collapsed signed function}
\label{sec:marginals}

The previous section bounded the Haar-averaged squared $L^1$ error between $p_U$ and $q_{d,U}$ over the collision-free outcome space $[M]_{\neq}^{n}$.
This approximation bound alone, however, does not directly provide an efficient sampling algorithm.
The sequential sampling procedure used in our proof, adapted from Ref.~\cite{bremner2017achieving}, requires efficiently computable prefix marginals of a normalized signed function.

This section proceeds as follows.
We first define the collision-free prefix marginals and explain the difficulty of evaluating them directly.
We then construct a randomized signed approximation with efficiently computable marginals.
Lastly, we extend this randomized approximation to a normalized signed function by assigning the remaining signed mass to a dummy collision outcome.

\subsection{Prefix marginals of the signed approximation}

For a collision-free prefix $\mathbf a=(a_1,\ldots,a_\ell)\in[M]_{\neq}^{\ell}$ with $0\leq\ell\leq n$, define
\begin{align}
 m_{d,U}(\mathbf a)
 &:={}
 \sum_{\substack{r_{\ell+1},\ldots,r_n:\\
 (\mathbf a,r_{\ell+1},\ldots,r_n)\in[M]_{\neq}^{n}}}
 q_{d,U}(\mathbf a,r_{\ell+1},\ldots,r_n),
 \label{eq:Qcf-marginal-identity}
\end{align}
where $q_{d,U}$ is the signed approximation in Definition~\ref{def:signed-ordered-approximation}.
Here, $\mathbf a$ records the first $\ell$ output-mode labels already fixed by the sequential sampler, while the sum ranges over all collision-free choices of the remaining $n-\ell$ labels.
For $\ell=0$, we write $\mathbf a=\varnothing$, and $m_{d,U}(\varnothing)$ is the total collision-free signed mass, which need not equal one.
For $\ell=n$, the marginal is simply the pointwise value $q_{d,U}(\mathbf a)$.

Directly evaluating Eq.~\eqref{eq:Qcf-marginal-identity} involves $(M-\ell)_{n-\ell}\leq M^n$ collision-free completions, even when $d$ is small.
Although the completion sum can be reduced to an explicit summation over the physical output modes associated with the selected terms, together with combinatorial counting of the remaining completions, enumerating the selected input-column and physical-mode sets remains costly.
Specifically, at order $j$, the centered expansion in Lemma~\ref{lem:centered-expansion} selects an input-column set $J\subseteq[N]$ and a set $R\subseteq[n]$ of output-string positions.
As the output string ranges over completions of a prefix, the corresponding physical-mode set $\{r_a:a\in R\}$ ranges over subsets of $[M]$.
Separately enumerating the input and output mode sets can therefore involve up to $\binom Mj\binom Nj$ choices at order $j$, leading to a direct-evaluation cost of the form $(MN)^{O(d)}$.
For a fixed surviving fraction $n/N\in(0,1)$, a truncation degree logarithmic in the inverse error parameters suffices, as indicated by Eq.~\eqref{eq:degree-choice}.
The exponent of the direct-evaluation cost therefore grows with the truncation degree, constituting the remaining computational bottleneck.

\subsection{Randomized prefix-marginal evaluation}
\label{subsec:randomized-marginals}

We overcome this obstacle using random color coding~\cite{alon1995color, alon2008biomolecular}.
Specifically, for each retained order $1\leq j\leq d$ in $q_{d,U}$, we randomly assign $j$ colors to every input column and physical output mode.
We then retain only contributions whose selected input columns have distinct colors and whose selected physical output modes likewise have distinct colors, allowing these choices to be summed efficiently.
Reweighting the retained contributions yields an unbiased signed estimator of $q_{d,U}$, and averaging independent estimators gives $\widehat q_{d,U}$ with controlled additional error.
Here, one coloring refers to the full collection of color assignments for all $1\leq j\leq d$, held fixed across all output strings and prefix queries.
We write $\E_{\mathrm{col}}$ for expectation over the colorings used in this average, with $U$ held fixed.

Let $\widehat m_{d,U}(\mathbf a)$ denote the prefix marginal obtained by replacing $q_{d,U}$ with $\widehat q_{d,U}$ in Eq.~\eqref{eq:Qcf-marginal-identity}.
We also define the collision-free $L^1$ error of $\widehat q_{d,U}$ relative to $p_U$ by
\begin{align}
 \widehat\Delta_{\mathrm{cf}}(U)
 &:={}
 \sum_{\bm r\in[M]_{\neq}^{n}}
 \abs{p_U(\bm r)-\widehat q_{d,U}(\bm r)}.
 \label{eq:randomized-Delta-cf-definition}
\end{align}
The following proposition gives the evaluation cost and the error bound for this randomized approximation.

\begin{proposition}[Randomized marginal-evaluation complexity]
\label{prop:randomized-marginal-complexity}
Let $1\leq n\leq N\leq M$, $d\in\{0,\ldots,n\}$, and $\varepsilon,\delta\in(0,1)$.
For every $U\in\mathrm U(M)$, there is a randomized construction of a real signed function $\widehat q_{d,U}$ on $[M]_{\neq}^{n}$ such that
$\E_{\mathrm{col}}[\widehat q_{d,U}(\bm r)]=q_{d,U}(\bm r)$ for every collision-free string $\bm r$.
For each fixed collection of colorings, after $MN^2\varepsilon^{-2}\delta^{-1}2^{O(d)}$ arithmetic operations of preprocessing, every collision-free prefix marginal of $\widehat q_{d,U}$ can be evaluated exactly in $N^3\varepsilon^{-2}\delta^{-1}2^{O(d)}$ arithmetic operations.
Moreover, if $M\geq C_0n^2N$, then
\begin{align}
 \E_{U,\mathrm{col}}\!\left[\widehat\Delta_{\mathrm{cf}}(U)^2\right]
 &\leq C_HR_d(n,N)+\frac{\varepsilon^2\delta}{8},
 \label{eq:randomized-collision-free-error}
\end{align}
where $U\sim\Haar(M)$ and $C_0,C_H$ are the constants in Lemma~\ref{lem:rectangular-domination}.
\end{proposition}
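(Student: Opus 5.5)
The plan has four parts: rewrite $q_{d,U}$ in terms of the $G_L$, build a color-coded unbiased estimator, evaluate its prefix marginals by summing per color class, and bound the variance. The error bound \eqref{eq:randomized-collision-free-error} comes out exactly by working pointwise, without a triangle inequality.

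\textbf{Rewriting $q_{d,U}$.} First I would express the order-$j$ blocks through the $G_L$ rather than the $H_J$. By Eq.~\eqref{eq:H-definition},
\begin{align}
\sum_{\abs{J}=j}H_J(B)=\sum_{l=0}^{j}(-1)^{j-l}\binom{N-l}{j-l}\sum_{\abs{L}=l}G_L(B).
\end{align}
Hence $q_{d,U}(\bm r)$ is a fixed linear combination, with coefficients depending only on $n,N,d$, of the quantities
\begin{align}
T_l(\bm r)=(n-l)!\sum_{\abs{L}=l}\sum_{\abs{I}=l}\abs{\Per(A_{\bm r;I,L})}^2,\qquad 0\le l\le d.
\end{align}
For collision-free $\bm r$, each summand depends only on the physical mode set $Q=\{r_a:a\in I\}$. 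So a prefix marginal of $T_l$ equals a sum over pairs $(L,Q)$ of $\abs{\Per(\sqrt M U_{Q,L})}^2$, times a combinatorial multiplicity. That multiplicity counts the collision-free completions realizing $Q$, and depends only on $\ell$, $n$, $M$, and $\abs{Q\cap\mathbf a}$.

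\textbf{Color-coded estimator.} For each $l\le d$, I color every column in $[N]$ and every mode in $[M]$ independently and uniformly with $l$ colors. I keep only pairs $(L,Q)$ in which both $L$ and $Q$ are colorful, and reweight each kept pair by $(l^l/l!)^2$. Since $l!/l^l$ is the probability that a fixed $l$-set is colorful, this is unbiased for each $\bm r$, and so $\E_{\mathrm{col}}[\widehat q_{d,U}(\bm r)]=q_{d,U}(\bm r)$.

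\textbf{Evaluating prefix marginals.} Colorfulness lets me index the rows by colors. Expanding $\abs{\Per}^2=\sum_{\sigma,\tau}\prod(\cdot)$, I first sum over the choice of one mode per color class, giving
\begin{align}
W^{(a)}_{c,c'}=\sum_{x\text{ of color }a}U_{x,c}\overline{U_{x,c'}}.
\end{align}
I compute these $N\times N$ matrices in $MN^2$ preprocessing per coloring. Prefix modes are split off and treated separately, which also tracks the dependence on $\abs{Q\cap\mathbf a}$. The remaining sum, over colorful column sets and the pair of color bijections $(\sigma,\tau)$, is a permanent-like contraction over $l$ colors. I would evaluate it by Ryser/inclusion--exclusion over color subsets, equivalently by extracting a multilinear coefficient of a determinant as the paper indicates. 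That costs $2^{O(l)}$ times $N^3$ matrix operations. Averaging over $K$ independent colorings multiplies both costs by $K$. Choosing $K=2^{O(d)}\varepsilon^{-2}\delta^{-1}$ gives the stated complexities.

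\textbf{Error bound.} Conditional on $U$, unbiasedness gives, pointwise,
\begin{align}
\E_{\mathrm{col}}\abs{p_U(\bm r)-\widehat q_{d,U}(\bm r)}^2=\abs{p_U(\bm r)-q_{d,U}(\bm r)}^2+\tfrac1K\operatorname{Var}_{\mathrm{col}}(Y(\bm r)),
\end{align}
where $Y$ is the single-coloring estimator. I then apply Cauchy--Schwarz over $[M]_{\neq}^n$ exactly as in Proposition~\ref{prop:collision-free-L1}. Lemma~\ref{lem:pointwise-haar-error} handles the first term and yields $C_HR_d(n,N)$, with no cross term.

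\textbf{Main obstacle.} The hard step is the pointwise variance bound
\begin{align}
\E_U\operatorname{Var}_{\mathrm{col}}(Y(\bm r))\le \frac{2^{O(d)}}{M^{2n}}.
\end{align}
My first attempt would bound the variance by the second moment, which carries the reweight factor $e^{O(l)}$. I would then apply Lemma~\ref{lem:rectangular-domination} to pass to Gaussian $X$. There I would bound $\E_X\big[(\sum_{L,Q}\abs{\Per}^2)^2\big]$ term by term via Wick's theorem, using the coefficient suppression $\binom nl/\binom Nl$ and the orthogonality structure of Lemma~\ref{lem:component-norm}. The concern is that the alternating binomial coefficients $\binom{N-l}{j-l}$ from rewriting $H_J$ in terms of $G_L$ might be too large. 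If so, I would color-code the centered form of Lemma~\ref{lem:centered-expansion} directly instead, replacing $W^{(a)}$ by its centered version. That keeps the estimator a sum of orthogonal components whose second moment is $(n!)^2\,2^{O(d)}\sum_{j\le d}\binom nj/\binom Nj\le (n!)^2 2^{O(d)}$. With that in hand, the choice of $K$ makes the variance contribution at most $\varepsilon^2\delta/8$.
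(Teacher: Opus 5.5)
The concern in your last paragraph is real. It rules out the $G_L$ route rather than merely complicating it, so the centered expansion has to be your construction from the start, not a fallback.

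\textbf{Why the $G_L$ route fails.} Write $\Phi_{\leq d}=\sum_{l\leq d}c_l\sum_{\abs{L}=l}G_L$. Using $\binom{N-l}{j-l}\binom{N}{l}=\binom{N}{j}\binom{j}{l}$, one gets $c_l\binom{N}{l}=(-1)^{d-l}\binom{n}{l}\binom{n-l-1}{d-l}$.

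\begin{itemize}
\item Under the Gaussian measure $\E\,G_L=n!$. So the level-$l$ block has mean $(-1)^{d-l}\binom{n}{l}\binom{n-l-1}{d-l}\,n!$, while $\E\,\Phi_{\leq d}=n!$. The levels cancel only in aggregate.
\item The summands $\abs{\Per}^2$ are nonnegative with nonzero mean. Coloring each level independently therefore turns these large means into variance.
\item Take $d=2$. Levels $0$ and $1$ carry no coloring randomness ($G_\varnothing$ is constant, and with one color every singleton is colorful). The level-$2$ block is $\frac{\binom n2}{\binom N2}\sum_{\abs{L}=2}G_L$.
\item With two output colors, $2\cdot\mathbf 1[\{a,b\}\text{ colorful}]=1-\epsilon_a\epsilon_b$ for i.i.d.\ signs $\epsilon_a$. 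The output-coloring fluctuation is then $-\sum_{a<b}w_{ab}\epsilon_a\epsilon_b$, where $w_{ab}=2(n-2)!\frac{\binom n2}{\binom N2}\sum_{L\ \mathrm{colorful}}\abs{\Per(X_{\{a,b\},L})}^2$ has mean exactly $n!$.
\item This fluctuation contributes $\sum_{a<b}\E[w_{ab}^2]\geq\binom n2(n!)^2$ to $\E_X\operatorname{Var}_{\mathrm{col}}$. By contrast, the entire single-coloring variance of the centered estimator at $d=2$ is $3\frac{\binom n2}{\binom N2}(n!)^2\leq 3(n!)^2$.
\item For general $d$ the same computation gives at least $\binom n2\binom{n-3}{d-2}^2(n!)^2$, which grows like $n^{2d-2}$ for fixed $d$.
\end{itemize}

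Your number of colorings would then need a factor $n^{\Theta(d)}$ instead of $2^{O(d)}$. That is quasipolynomial at $d=\Theta(\log(\varepsilon^{-1}\delta^{-1}))$ with inverse-polynomial $\varepsilon$, so it does not give the stated complexity. In the centered form, every indexed product $\prod_{l\in J}B^{(l)}_{\sigma(l),\tau(l)}$ has mean zero and the products are orthonormal, so random deletion costs only their second moments.

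\textbf{The centered route is the paper's proof.} Once you commit to the centered expansion, the rest of your plan matches the paper's argument step for step:
\begin{itemize}
\item coloring physical modes, so each term depends only on $\Omega=\{r_a:a\in R\}$;
\item retention probability $(j!/j^j)^2$;
\item completion counts that depend only on $\abs{\Omega\cap\{a_1,\ldots,a_\ell\}}$;
\item the cross term killed by pointwise unbiasedness;
\item the Haar transfer done for each fixed collection of colorings;
\item Cauchy--Schwarz over $[M]^n_{\neq}$.
\end{itemize}

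Two refinements. First, the variance bound needs orthogonality of the individual products across different $(J,R,\sigma,\tau)$, not just of the $H_J$, because the reweighting acts per pair $(J,R)$. The proof of Lemma~\ref{lem:component-norm} supplies this. It gives the exact per-coloring value $\sum_{j\leq d}\bigl(\frac{j^{2j}}{(j!)^2}-1\bigr)\frac{\binom nj}{\binom Nj}\leq de^{2d}$ in units of $M^{-2n}$.

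Second, your centered $W^{(a)}$ is legitimate. This is not obvious, since the centering in $B^{(l)}_{uv}$ sits on the mode indices. Under the square-zero rules, $\prod_u(1+s_u\sum_l y_{h(l)})=\prod_\alpha(1+y_{h(\alpha)}\sum_u s_u)$, with $s_u$ as in the paper; both sides equal $\sum_k k!\,e_k(s)\,e_k(y_h)$. So the diagonal prefactor in Eq.~\eqref{eq:colored-first-order-reduction} can be dropped if every $V_{u,\beta}\overline{V_{u,\alpha}}$ in the reduced matrix is replaced by $V_{u,\beta}\overline{V_{u,\alpha}}-\delta_{\alpha\beta}$. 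This is the paper's Sylvester reduction in different packaging. The object you extract coefficients from is then an inverse determinant (via MacMahon's master theorem), evaluated through traces of matrix powers with nilpotent variables, not a determinant.
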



\begin{proof}[Proof sketch]

Assume first that $1\leq d\leq n$.
We apply color coding~\cite{alon1995color, alon2008biomolecular} to the centered expansion of $q_{d,U}$ obtained from Lemma~\ref{lem:centered-expansion}.
Specifically, for each $1\leq j\leq d$, we independently assign a uniform color from $[j]$ to every input column and physical output mode, and retain only the terms that select columns from different colors and output modes of different colors.
Dividing each retained term by its inclusion probability gives a signed estimate with expectation $q_{d,U}$.
We define $\widehat q_{d,U}$ as the average of the signed estimates from $L$ independent colorings.
By linearity, it follows that $\E_{\mathrm{col}}[\widehat q_{d,U}(\bm r)]=q_{d,U}(\bm r)$ for every collision-free string $\bm r$.

For a fixed coloring, the marginal formula uses sums over output modes of products involving two input columns.
The two column indices range independently over $[N]$, giving $N^2$ ordered pairs.
At each retained order, accumulating the contributions of the $M$ output modes for these pairs, separately by output color, costs $O(MN^2)$.
For a prefix query, we adjust the contributions of at most $n$ prefix modes, each requiring updates for $N^2$ pairs, at cost $O(nN^2)$.
The remaining calculation uses $N\times N$ matrix multiplications, giving the $N^3$ factor.
To exclude repeated colors, we keep separate contributions according to the sets of input and output colors already used.
With at most $d$ colors of each type, there are at most $2^d\times2^d=4^d$ such pairs of sets, and storing and combining their contributions incurs a $2^{O(d)}$ overhead~\cite{koutis2008faster}.
Summing over the retained orders and using $n\leq N$ gives $N^3 2^{O(d)}$ arithmetic operations per query after $MN^2 2^{O(d)}$ preprocessing.

We finally bound the coloring error.
For each fixed collision-free string, the orthogonality of the centered products and the inclusion probabilities give a mean-square coloring error of at most $M^{-2n}2^{O(d)}$, where the expectation is taken over the Gaussian matrix and the coloring.
Averaging $L$ independent estimates divides this bound by $L$.
Moreover, pointwise unbiasedness removes the cross term with the truncation error after averaging over the colorings.
Applying the Haar--Gaussian comparison and the collision-free Cauchy--Schwarz estimate of Sec.~\ref{sec:haar-error} then gives
\begin{align}
\E_{U,\mathrm{col}}\!\left[\widehat\Delta_{\mathrm{cf}}(U)^2\right]
&\leq C_HR_d(n,N)+\frac{C_H2^{O(d)}}{L}.
\label{eq:coloring-average-budget-sketch}
\end{align}
Choosing $L=\varepsilon^{-2}\delta^{-1}2^{O(d)}$ makes the second term at most $\varepsilon^2\delta/8$, which proves Eq.~\eqref{eq:randomized-collision-free-error}.
Since each marginal of $\widehat q_{d,U}$ is the average of the exact marginals for the same $L$ fixed colorings, multiplying both costs by $L$ gives the stated complexity bounds.
For $d=0$, no coloring is needed, and the constant approximation has directly computable marginals.
Appendix~\ref{app:randomized-marginals} gives the full construction and proof.

\end{proof}

\subsection{Collision collapse}

Proposition~\ref{prop:randomized-marginal-complexity} gives efficiently computable collision-free marginals of $\widehat q_{d,U}$, but its total collision-free mass need not equal one.
To apply the signed marginal sampling lemma, we extend $\widehat q_{d,U}$ to a normalized signed function on an enlarged ordered-string space.
We retain the collision-free values of $\widehat q_{d,U}$ and assign the remaining signed mass to one dummy outcome representing the entire collision sector of $p_U$.
Throughout this subsection, fix $U$ and the full collection of colorings used to construct $\widehat q_{d,U}$.

Concretely, to retain the product-space structure needed for sequential sampling, we adjoin a symbol $0\notin[M]$ and represent the dummy outcome by the string $0^n$.
Define the collision-collapsed signed function $\widetilde q_{d,U}$ on $\{0,1,\ldots,M\}^n$ by
\begin{align}
 \widetilde q_{d,U}(\bm r)
 &:={}
 \begin{cases}
 \widehat q_{d,U}(\bm r),&\bm r\in[M]_{\neq}^{n},\\[1mm]
 1-\widehat m_{d,U}(\varnothing),&\bm r=0^n,\\[1mm]
 0,&\text{otherwise}.
 \end{cases}
 \label{eq:q-tilde-definition}
\end{align}
The corresponding ideal collision-collapsed ordered-string distribution is
\begin{align}
 \widetilde p_U(\bm r)
 &:={}
 \begin{cases}
 p_U(\bm r),&\bm r\in[M]_{\neq}^{n},\\[1mm]
 p_{\mathrm{coll}}(U),&\bm r=0^n,\\[1mm]
 0,&\text{otherwise},
 \end{cases}
 \label{eq:p-tilde-definition}
\end{align}
where
\begin{align}
 p_{\mathrm{coll}}(U)
 &:={}
 \sum_{\bm r\in[M]^n\setminus[M]_{\neq}^{n}}p_U(\bm r).
 \label{eq:collision-probability}
\end{align}
Note that $\widetilde q_{d,U}$ has total mass one but may take negative values, including at the dummy outcome.
The following lemma shows that normalization preserves the efficient evaluation of exact prefix marginals.

\begin{lemma}[Normalization and prefix marginals]
\label{lem:q-tilde-marginals}
For each fixed collection of colorings, $\widetilde q_{d,U}$ in Eq.~\eqref{eq:q-tilde-definition} is a normalized signed function.
After $MN^2\varepsilon^{-2}\delta^{-1}2^{O(d)}$ arithmetic operations of preprocessing, every prefix marginal of $\widetilde q_{d,U}$ can be evaluated exactly in $N^3\varepsilon^{-2}\delta^{-1}2^{O(d)}$ arithmetic operations.
\end{lemma}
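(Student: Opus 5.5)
The plan is to reduce everything to Proposition~\ref{prop:randomized-marginal-complexity} by a case analysis on the prefix. First, normalization. By definition, the total mass of $\widetilde q_{d,U}$ is the sum of its collision-free values plus the dummy value:
\begin{align}
\sum_{\bm r}\widetilde q_{d,U}(\bm r)
=\sum_{\bm r\in[M]_{\neq}^{n}}\widehat q_{d,U}(\bm r)+\bigl(1-\widehat m_{d,U}(\varnothing)\bigr).
\end{align}
Since $\widehat m_{d,U}(\varnothing)$ is exactly the collision-free sum of $\widehat q_{d,U}$, this equals one. All other strings carry zero mass. This holds identically for each fixed collection of colorings, so $\widetilde q_{d,U}$ is a normalized (possibly negative) signed function.

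Next, the prefix marginals. Fix a prefix $\mathbf a=(a_1,\ldots,a_\ell)\in\{0,1,\ldots,M\}^\ell$ and consider $\widetilde m(\mathbf a):=\sum_{r_{\ell+1},\ldots,r_n}\widetilde q_{d,U}(\mathbf a,r_{\ell+1},\ldots,r_n)$. The support of $\widetilde q_{d,U}$ is $[M]_{\neq}^{n}\cup\{0^n\}$, so I split into four cases.
\begin{itemize}
\item[(i)] $\ell=0$: the marginal is the total mass, which equals $1$.
\item[(ii)] $\mathbf a$ is a nonempty prefix of zeros, $\mathbf a=0^\ell$ with $\ell\ge1$: the only supported completion is $0^n$, so the marginal is $1-\widehat m_{d,U}(\varnothing)$.
\item[(iii)] $\mathbf a\in[M]_{\neq}^{\ell}$ with $\ell\ge1$: completions containing $0$ or repeated labels vanish, so the marginal equals $\widehat m_{d,U}(\mathbf a)$.
\item[(iv)] Any other prefix, namely one mixing $0$ with nonzero labels or containing a repeated nonzero label: no supported completion exists, so the marginal is $0$.
\end{itemize}
Deciding which case applies costs $O(n)$ operations, for example using a boolean array over $[M]$ that is reused across queries.

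For the cost, case (iii) is directly covered by Proposition~\ref{prop:randomized-marginal-complexity}: after $MN^2\varepsilon^{-2}\delta^{-1}2^{O(d)}$ preprocessing, each such query costs $N^3\varepsilon^{-2}\delta^{-1}2^{O(d)}$. For case (ii), I would compute $\widehat m_{d,U}(\varnothing)$ once during preprocessing, as one extra query at the empty prefix, and cache it. This adds only an additive $N^3\varepsilon^{-2}\delta^{-1}2^{O(d)}$ to preprocessing, and $n\le N\le M$ keeps it within the stated $MN^2\varepsilon^{-2}\delta^{-1}2^{O(d)}$ bound. Cases (i) and (iv) are constant time. Exactness follows because the colorings are held fixed, so all queries refer to the same function $\widetilde q_{d,U}$.

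The only step that needs care is checking that the empty prefix is indeed admissible in Proposition~\ref{prop:randomized-marginal-complexity}. It is, since $\ell=0$ is included in the definition~\eqref{eq:Qcf-marginal-identity}.
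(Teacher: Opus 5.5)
Your proposal is correct and matches the paper's proof essentially step for step. It uses the same normalization identity, the same split of prefixes into empty, all-zero, collision-free nonzero, and mixed or repeated cases, and the same caching of $\widehat m_{d,U}(\varnothing)$, whose query cost is absorbed into preprocessing via $N\le M$.
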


\begin{proof}
Since $\widehat m_{d,U}(\varnothing)$ is the total collision-free signed mass, Eq.~\eqref{eq:q-tilde-definition} gives
\begin{align}
 \sum_{\bm r}\widetilde q_{d,U}(\bm r)
 &=\widehat m_{d,U}(\varnothing)
   +1-\widehat m_{d,U}(\varnothing)=1.
 \label{eq:q-tilde-normalization}
\end{align}
We evaluate $\widehat m_{d,U}(\varnothing)$ once using Proposition~\ref{prop:randomized-marginal-complexity} and reuse it, with its cost absorbed into preprocessing because $N\leq M$.

The empty-prefix marginal of $\widetilde q_{d,U}$ is one.
For $1\leq\ell\leq n$, a prefix $0^\ell$ has marginal $1-\widehat m_{d,U}(\varnothing)$, while a collision-free nonzero prefix $\mathbf a$ has marginal $\widehat m_{d,U}(\mathbf a)$.
Every other prefix has marginal zero: a prefix containing both zero and a nonzero symbol cannot extend to $0^n$, and a prefix with repeated nonzero symbols has no collision-free completion.
Thus the stored residual mass and the exact prefix evaluations from Proposition~\ref{prop:randomized-marginal-complexity} give all required marginals within the stated cost.
\end{proof}

We next bound the error of $\widetilde q_{d,U}$ relative to $\widetilde p_U$.
Because both functions have total mass one and are supported on the collision-free strings together with $0^n$, the error at the dummy outcome is controlled by their difference on the collision-free sector.

\begin{lemma}[Error after collision collapse]
\label{lem:collapse-error}
For every $U\in\mathrm U(M)$ and every fixed collection of colorings,
\begin{align}
 \norm{\widetilde p_U-\widetilde q_{d,U}}_1
 &\leq 2\widehat\Delta_{\mathrm{cf}}(U).
 \label{eq:collapse-error}
\end{align}
\end{lemma}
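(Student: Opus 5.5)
The plan is to split the $L^1$ norm over the two parts of the common support, namely the collision-free strings $[M]_{\neq}^{n}$ and the dummy string $0^n$. All other strings in $\{0,1,\ldots,M\}^n$ contribute nothing, since both $\widetilde p_U$ and $\widetilde q_{d,U}$ vanish there by Eqs.~\eqref{eq:q-tilde-definition} and~\eqref{eq:p-tilde-definition}. This gives
\begin{align}
 \norm{\widetilde p_U-\widetilde q_{d,U}}_1
 &=
 \sum_{\bm r\in[M]_{\neq}^{n}}\abs{p_U(\bm r)-\widehat q_{d,U}(\bm r)}
 +\abs{p_{\mathrm{coll}}(U)-\bigl(1-\widehat m_{d,U}(\varnothing)\bigr)}.
\end{align}
By the definition in Eq.~\eqref{eq:randomized-Delta-cf-definition}, the first term is exactly $\widehat\Delta_{\mathrm{cf}}(U)$.

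For the dummy term, I would use normalization of $p_U$, established after Eq.~\eqref{eq:ordered-pushforward}. Since $p_U$ sums to one over $[M]^n$, Eq.~\eqref{eq:collision-probability} gives
\begin{align}
 p_{\mathrm{coll}}(U)=1-\sum_{\bm r\in[M]_{\neq}^{n}}p_U(\bm r).
\end{align}
The prefix marginal at the empty prefix is $\widehat m_{d,U}(\varnothing)=\sum_{\bm r\in[M]_{\neq}^{n}}\widehat q_{d,U}(\bm r)$. Substituting both expressions, the dummy error becomes
\begin{align}
 \abs{\sum_{\bm r\in[M]_{\neq}^{n}}\bigl(\widehat q_{d,U}(\bm r)-p_U(\bm r)\bigr)}.
\end{align}
By the triangle inequality this is at most $\widehat\Delta_{\mathrm{cf}}(U)$. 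Adding the two contributions gives the bound $2\widehat\Delta_{\mathrm{cf}}(U)$.

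The argument is pointwise in $U$ and in the fixed colorings, so no probabilistic input is needed. The only point that needs care, and it is minor, is identifying $\widehat m_{d,U}(\varnothing)$ precisely as the full collision-free sum of $\widehat q_{d,U}$. This is the $\ell=0$ case of Eq.~\eqref{eq:Qcf-marginal-identity} with $q_{d,U}$ replaced by $\widehat q_{d,U}$. One should also check that $p_U$ is a genuine probability distribution on $[M]^n$, so that the collision mass is exactly the complement of the collision-free mass.
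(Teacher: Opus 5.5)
Your proposal is correct and follows the same approach as the paper's proof. You split the $L^1$ norm into the collision-free sector, which gives exactly $\widehat\Delta_{\mathrm{cf}}(U)$, and the dummy outcome $0^n$, whose error equals $\bigl|\sum_{\bm r\in[M]_{\neq}^{n}}(\widehat q_{d,U}(\bm r)-p_U(\bm r))\bigr|\leq\widehat\Delta_{\mathrm{cf}}(U)$ by normalization of $p_U$ and the triangle inequality.
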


\begin{proof}
On collision-free strings, the $L^1$ error is exactly $\widehat\Delta_{\mathrm{cf}}(U)$.
By Eqs.~\eqref{eq:q-tilde-definition} and~\eqref{eq:p-tilde-definition}, the remaining contribution satisfies
\begin{align}
 \abs{\widetilde p_U(0^n)-\widetilde q_{d,U}(0^n)}
 &=
 \abs{\sum_{\bm r\in[M]_{\neq}^{n}}
\bigl(\widehat q_{d,U}(\bm r)-p_U(\bm r)\bigr)}
 \nonumber\\
 &\leq \widehat\Delta_{\mathrm{cf}}(U).
 \label{eq:dummy-error}
\end{align}
Adding the collision-free and dummy contributions proves Eq.~\eqref{eq:collapse-error}.
\end{proof}

Finally, we relate this ordered-string construction to the collision-collapsed Fock space used in Theorem~\ref{thm:main}.
Analogously to the collision-collapsing map $\mathsf C$ in Eq.~\eqref{eq:collision-collapse-definition}, define
\begin{align}
 \widehat{\mathsf C}:
 \{0,1,\ldots,M\}^n
 &\longrightarrow \mathcal T_{M,n}^{\mathrm{cf}}\cup\{c\}
\end{align}
by
\begin{align}
 \widehat{\mathsf C}(\bm r)
 &:={}
 \begin{cases}
 \sort(\bm r),&\bm r\in[M]_{\neq}^{n},\\
 c,&\text{otherwise},
 \end{cases}
 \label{eq:ordered-to-collapsed-fock-map}
\end{align}
where $\sort$ is the sorting map in Eq.~\eqref{eq:sort-definition} and $c$ is the dummy collision symbol introduced in Sec.~\ref{sec:preliminaries}.
The pushforward of $\widetilde p_U$ under $\widehat{\mathsf C}$ is then exactly the collision-collapsed target distribution:
\begin{align}
 (\widehat{\mathsf C})_{\#}\widetilde p_U
 &=P_{U,\mathrm{coll}}^{N,n}.
 \label{eq:ptilde-pushes-to-collapsed-target}
\end{align}
In Sec.~\ref{sec:sampling}, we apply the signed marginal sampling lemma to $\widetilde q_{d,U}$ and use this map to obtain the final sampler on collision-collapsed Fock outcomes.


\section{Classical sampling algorithm and proof of the main theorem}
\label{sec:sampling}

This section completes the construction of the sampling algorithm and proves Theorem~\ref{thm:main}.
We first use the exact prefix marginals of $\widetilde q_{d,U}$ in Eq.~\eqref{eq:q-tilde-definition} to construct a proper sampler, and then map its output to the collision-collapsed Fock space.
The error and marginal-evaluation bounds in Sec.~\ref{sec:marginals} then give the claimed accuracy and runtime guarantees.

\subsection{Signed marginal sampling}

We begin with the signed marginal sampling lemma of Ref.~\cite{bremner2017achieving}.
The procedure sequentially constructs a proper distribution from the prefix marginals of a normalized signed function.
At each step, it evaluates the marginals associated with all one-symbol extensions of the current prefix, truncates negative values to zero, and normalizes the remaining nonnegative weights to sample the next symbol.

\begin{lemma}[Signed marginal sampler~\cite{bremner2017achieving, cirstoiu2024fourier}]
\label{lem:signed-marginal-sampler}
Let $p$ be a probability distribution on a finite product space $\mathcal X^n$, and let $q:\mathcal X^n\rightarrow\mathbb R$ be a normalized signed function such that $\sum_{\mathbf x\in\mathcal X^n}q(\mathbf x)= 1$.
Assume that all prefix marginals of $q$ can be computed by an oracle.  If
\begin{align}
 \norm{p-q}_1
 &\leq
 \xi,
\end{align}
then there is a randomized classical algorithm using $O(n\abs{\mathcal X})$ calls to this marginal oracle that samples from a proper probability distribution $q^{\mathrm{samp}}$ satisfying
\begin{align}
 \norm{q^{\mathrm{samp}}-p}_1
 &\leq
 2\xi.
 \label{eq:signed-sampler-l1-bound}
\end{align}
\end{lemma}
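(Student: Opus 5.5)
We would prove Lemma~\ref{lem:signed-marginal-sampler} by analyzing the sequential truncate-and-renormalize procedure of Ref.~\cite{bremner2017achieving} directly, by induction on the prefix length. For a prefix $\mathbf x_{<k}=(x_1,\ldots,x_{k-1})$ write $q(\mathbf x_{<k})$ and $p(\mathbf x_{<k})$ for the prefix marginals, with $q(\varnothing)=p(\varnothing)=1$. At step $k$ the algorithm queries the $\abs{\mathcal X}$ marginals $q(\mathbf x_{<k},y)$ for $y\in\mathcal X$ and sets $w(y)=\max\{q(\mathbf x_{<k},y),0\}$. If $W=\sum_y w(y)>0$, it samples $x_k$ with probability $w(x_k)/W$. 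Otherwise it samples $x_k$ uniformly, or by any fixed rule. This uses $n\abs{\mathcal X}$ oracle calls, plus the one trivial call at the empty prefix, giving the claimed $O(n\abs{\mathcal X})$ query count. The output distribution $q^{\mathrm{samp}}(\mathbf x)=\prod_k \pi_k(x_k\mid\mathbf x_{<k})$ is manifestly a proper probability distribution.

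For the error bound, define the per-level discrepancy
\begin{align}
 D_k:=\sum_{\mathbf x_{\leq k}}\abs{q^{\mathrm{samp}}(\mathbf x_{\leq k})-p(\mathbf x_{\leq k})}.
\end{align}
The aim is to show $D_n\leq 2\xi$. We do this by comparing $q^{\mathrm{samp}}$ to the auxiliary nonnegative prefix function $q^+(\mathbf x_{\leq k})$ and telescoping.

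The cleanest route, following Ref.~\cite{cirstoiu2024fourier}, is a one-step inequality. For a fixed prefix with sampled weight $s=q^{\mathrm{samp}}(\mathbf x_{<k})$ and true weight $p_0=p(\mathbf x_{<k})$, the extension contributes
\begin{align}
\sum_y\abs{s\,\pi_k(y)-p(\mathbf x_{<k},y)}\leq \abs{s-p_0}+\sum_y\abs{w(y)\tfrac{p_0}{W}-p(\mathbf x_{<k},y)},
\end{align}
using $\sum_y \pi_k(y)=1$. We then bound the second term by relating the normalization $W$ to $\sum_y q(\mathbf x_{<k},y)=q(\mathbf x_{<k})$. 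Using $\abs{\max(a,0)-b}\leq\abs{a-b}$ for $b\geq0$, together with $\abs{W-p_0}\leq\sum_y\abs{w(y)-p(\mathbf x_{<k},y)}$, the renormalization error is at most the truncation error.

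Summing over prefixes gives a recursion of the form $D_k\leq D_{k-1}+2\,(\text{local }L^1 \text{ error at level }k)$. We must avoid this telescoping to $2n\xi$, so the local errors should not be summed naively. Instead we use the standard trick of measuring errors through $q$'s own marginals: the prefix marginals of $q$ satisfy $\sum_{\mathbf x_{\leq k}}\abs{q(\mathbf x_{\leq k})-p(\mathbf x_{\leq k})}\leq\xi$ at every level by the triangle inequality. We then compare $q^{\mathrm{samp}}(\mathbf x)$ to $q(\mathbf x)$ through the ratio structure $q^{\mathrm{samp}}(\mathbf x)=\prod_k w(\mathbf x_{\leq k})/W(\mathbf x_{<k})$.

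The main obstacle is obtaining $2\xi$ rather than $2n\xi$. We would attack it by showing directly that
\begin{align}
\sum_{\mathbf x}\abs{q^{\mathrm{samp}}(\mathbf x)-q^{+}(\mathbf x)}\leq \sum_{\mathbf x} q^-(\mathbf x)\cdot(\text{const}),
\end{align}
where the total negative mass satisfies $\sum q^-\leq\xi/2$, since $p\geq0$ and $\norm{p-q}_1\leq\xi$. Failing that, we would cite the precise statement of Ref.~\cite{bremner2017achieving} (their Lemma on sampling from approximate signed marginals) and verify that its hypotheses are met: normalization $\sum q=1$ and exact prefix oracles. This is acceptable since the lemma is attributed to those references.
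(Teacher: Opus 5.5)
The paper gives no proof of this lemma; it imports it from Refs.~\cite{bremner2017achieving, cirstoiu2024fourier}. Your fallback of citing it is therefore consistent with the paper. As a self-contained argument, however, your proposal stops exactly at the decisive step, so there is a genuine gap. The level-by-level recursion $D_k\le D_{k-1}+2(\text{local error})$ yields only $2n\xi$, because it charges the same bottom-level discrepancy once at every level. The inequality you then hope to prove, $\sum_{\mathbf x}\lvert q^{\mathrm{samp}}(\mathbf x)-q^+(\mathbf x)\rvert\le\mathrm{const}\cdot\sum_{\mathbf x}q^-(\mathbf x)$, is stated as a goal with no argument. It is the right target, and it even holds with constant $1$, but the reason it holds is missing.

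The missing idea is the pointwise domination $q^{\mathrm{samp}}\le q^+$, where $q^{\pm}:=\max\{\pm q,0\}$. Every prefix the sampler can reach has $q(\mathbf x_{\le k})>0$, because $q(\varnothing)=1$ and only children with positive marginal are ever selected. Hence along any reachable path
\begin{align*}
W(\mathbf x_{<k})=\sum_y[q(\mathbf x_{<k},y)]_+\ge\sum_y q(\mathbf x_{<k},y)=q(\mathbf x_{<k})>0,
\end{align*}
so your uniform fallback is never triggered. For any $\mathbf x$ in the support of $q^{\mathrm{samp}}$, this gives
\begin{align*}
q^{\mathrm{samp}}(\mathbf x)=\prod_{k=1}^n\frac{q(\mathbf x_{\le k})}{W(\mathbf x_{<k})}\le\prod_{k=1}^n\frac{q(\mathbf x_{\le k})}{q(\mathbf x_{<k})}=q(\mathbf x),
\end{align*}
while $q^{\mathrm{samp}}$ vanishes elsewhere, so $q^{\mathrm{samp}}\le q^+$ everywhere.

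Now write $\nu:=\sum_{\mathbf x}q^-(\mathbf x)$. Normalization of both functions gives $\lVert q^{\mathrm{samp}}-q^+\rVert_1=\sum_{\mathbf x}q^+(\mathbf x)-1=\nu$. Since $p\ge0$, one has $\lvert q^+(\mathbf x)-p(\mathbf x)\rvert=\lvert q(\mathbf x)-p(\mathbf x)\rvert-q^-(\mathbf x)$ pointwise, hence $\lVert q^+-p\rVert_1=\lVert q-p\rVert_1-\nu$. The triangle inequality then yields $\lVert q^{\mathrm{samp}}-p\rVert_1\le\lVert q-p\rVert_1\le\xi$, which is stronger than the stated $2\xi$. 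No induction over levels is needed, and your algorithm description and query count are fine as they stand.
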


We apply this lemma to $\widetilde p_U$ and $\widetilde q_{d,U}$ from Sec.~\ref{sec:marginals}.
For this application, we fix $U$ and the full collection of colorings used to construct $\widehat q_{d,U}$, so every queried prefix marginal belongs to the same signed function $\widetilde q_{d,U}$.
We denote the resulting ordered-string output distribution, conditional on this collection, by $q_{d,U}^{\mathrm{samp}}$.

\begin{lemma}[Proper sampler for the collision-collapsed distribution]
\label{lem:proper-collapsed-sampler}
For every fixed unitary $U\in\mathrm U(M)$ and fixed collection of colorings, the signed marginal sampler uses $O(Mn)$ prefix-marginal queries to $\widetilde q_{d,U}$ and samples from a proper probability distribution $q_{d,U}^{\mathrm{samp}}$ on $\{0,1,\ldots,M\}^n$ satisfying
\begin{align}
 d_{\mathrm{TV}}\!\left(q_{d,U}^{\mathrm{samp}},\widetilde p_U\right)
 &\leq 2\widehat\Delta_{\mathrm{cf}}(U).
 \label{eq:proper-collapsed-sampler-error}
\end{align}
\end{lemma}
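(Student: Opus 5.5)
The plan is to apply Lemma~\ref{lem:signed-marginal-sampler} directly, with $\mathcal X=\{0,1,\ldots,M\}$, $p=\widetilde p_U$ and $q=\widetilde q_{d,U}$, holding $U$ and the full collection of colorings fixed throughout.

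First, I check the hypotheses of the lemma. The function $\widetilde p_U$ is a genuine probability distribution on $\mathcal X^n$. It is nonnegative, since $p_U\geq 0$ and $p_{\mathrm{coll}}(U)\geq 0$. Its total mass is
\[
\sum_{\bm r\in[M]_{\neq}^n}p_U(\bm r)+p_{\mathrm{coll}}(U)=\sum_{\bm r\in[M]^n}p_U(\bm r)=1,
\]
where the last equality holds because $p_U$ is a probability distribution on $[M]^n$ by Eq.~\eqref{eq:ordered-pushforward}. By Lemma~\ref{lem:q-tilde-marginals}, $\widetilde q_{d,U}$ is a normalized signed function. The same lemma shows that all of its prefix marginals can be evaluated exactly, and this evaluation serves as the marginal oracle the sampler needs. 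Because the colorings are frozen, every query refers to one and the same signed function, so the oracle is consistent across the whole sequential procedure.

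Second, I supply the $L^1$ hypothesis. Lemma~\ref{lem:collapse-error} gives $\norm{\widetilde p_U-\widetilde q_{d,U}}_1\leq 2\widehat\Delta_{\mathrm{cf}}(U)$, so I set $\xi=2\widehat\Delta_{\mathrm{cf}}(U)$. Lemma~\ref{lem:signed-marginal-sampler} then produces a proper distribution $q^{\mathrm{samp}}_{d,U}$ on $\{0,1,\ldots,M\}^n$ with
\[
\norm{q^{\mathrm{samp}}_{d,U}-\widetilde p_U}_1\leq 2\xi=4\widehat\Delta_{\mathrm{cf}}(U).
\]
Using $d_{\mathrm{TV}}=\tfrac12\norm{\cdot}_1$ from Eq.~\eqref{eq:tv-definition}, this becomes $d_{\mathrm{TV}}(q^{\mathrm{samp}}_{d,U},\widetilde p_U)\leq 2\widehat\Delta_{\mathrm{cf}}(U)$, which is exactly Eq.~\eqref{eq:proper-collapsed-sampler-error}.

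Third, I count queries. The lemma uses $O(n\abs{\mathcal X})=O(n(M+1))=O(Mn)$ oracle calls, which is the claimed query bound.

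There is no real obstacle here; the argument is essentially bookkeeping. The only points needing care are that the colorings stay fixed, so that the queried marginals come from a single normalized signed function, and the factor-of-two conversion between the $L^1$ norm and total variation distance.
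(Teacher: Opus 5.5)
Your proposal is correct and matches the paper's proof: you apply Lemma~\ref{lem:signed-marginal-sampler} with $\xi=2\widehat\Delta_{\mathrm{cf}}(U)$ from Lemma~\ref{lem:collapse-error}, halve the resulting $L^1$ bound to get the total variation bound, and count $O(n(M+1))=O(Mn)$ queries. Your explicit check that $\widetilde p_U$ is a genuine probability distribution is a detail the paper leaves implicit.
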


\begin{proof}
By Lemma~\ref{lem:q-tilde-marginals}, $\widetilde q_{d,U}$ is normalized and has exactly computable prefix marginals, while Lemma~\ref{lem:collapse-error} gives $\norm{\widetilde p_U-\widetilde q_{d,U}}_1\leq2\widehat\Delta_{\mathrm{cf}}(U)$.
Applying Lemma~\ref{lem:signed-marginal-sampler} with $\xi=2\widehat\Delta_{\mathrm{cf}}(U)$ therefore gives
\begin{align}
 \norm{q_{d,U}^{\mathrm{samp}}-\widetilde p_U}_1
 &\leq4\widehat\Delta_{\mathrm{cf}}(U).
\end{align}
Dividing by two gives Eq.~\eqref{eq:proper-collapsed-sampler-error}.
Since the alphabet is $\{0,1,\ldots,M\}$, the number of marginal queries is $O(n(M+1))=O(Mn)$.
\end{proof}

\subsection{Returning to collision-collapsed Fock outcomes}

We now apply the map $\widehat{\mathsf C}$ in Eq.~\eqref{eq:ordered-to-collapsed-fock-map} to convert the ordered-string samples into collision-collapsed Fock outcomes.
For each fixed collection of colorings, the resulting output distribution is $(\widehat{\mathsf C})_{\#}q_{d,U}^{\mathrm{samp}}$.
The algorithm draws this collection at random when constructing $\widehat q_{d,U}$, so its output distribution is
\begin{align}
 \mathcal A_d(U)
 &:={}
 \E_{\mathrm{col}}\!\left[
 (\widehat{\mathsf C})_{\#}q_{d,U}^{\mathrm{samp}}\right].
 \label{eq:final-collapsed-sampler-distribution}
\end{align}
This is the distribution $\mathcal A_d(U)$ appearing in Theorem~\ref{thm:main}.
The following lemma transfers the conditional sampling bound to this distribution.

\begin{lemma}[Final sampler error bound]
\label{lem:final-pushforward-bound}
Let $\mathcal A_d(U)$ be the output distribution in Eq.~\eqref{eq:final-collapsed-sampler-distribution}.
For every fixed unitary $U\in\mathrm U(M)$,
\begin{align}
 d_{\mathrm{TV}}\!\left(
 \mathcal A_d(U),P_{U,\mathrm{coll}}^{N,n}\right)
 &\leq2\E_{\mathrm{col}}\widehat\Delta_{\mathrm{cf}}(U).
 \label{eq:final-pushforward-bound}
\end{align}
\end{lemma}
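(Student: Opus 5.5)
The plan is to combine three ingredients: convexity of total variation distance under mixtures, its contraction under the deterministic map $\widehat{\mathsf C}$, and the conditional bound of Lemma~\ref{lem:proper-collapsed-sampler}. Throughout, fix $U\in\mathrm U(M)$.

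\emph{Step 1 (pushforward of the ideal distribution).} By Eq.~\eqref{eq:ptilde-pushes-to-collapsed-target}, the target can be written as $P_{U,\mathrm{coll}}^{N,n}=(\widehat{\mathsf C})_{\#}\widetilde p_U$. This identity deserves a brief check. On collision-free strings, $\widehat{\mathsf C}$ acts as $\sort$, and Eq.~\eqref{eq:ordered-pushforward} restricted to $[M]^n_{\neq}$ gives the collision-free masses $P_U^{N,n}(\bm t)$. The string $0^n$ carries mass $p_{\mathrm{coll}}(U)$ and is sent to $c$. By Eq.~\eqref{eq:ordered-pushforward}, $p_{\mathrm{coll}}(U)$ equals the total Fock mass outside $\mathcal T^{\mathrm{cf}}_{M,n}$, because a string has a collision if and only if its sorted version does.

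\emph{Step 2 (convexity).} By definition, $\mathcal A_d(U)$ is the $\E_{\mathrm{col}}$-mixture of the distributions $(\widehat{\mathsf C})_{\#}q^{\mathrm{samp}}_{d,U}$, while the target does not depend on the colorings. Since $d_{\mathrm{TV}}$ is half an $L^1$ norm, the triangle inequality applied under the expectation gives
\begin{align}
d_{\mathrm{TV}}\!\left(\mathcal A_d(U),P_{U,\mathrm{coll}}^{N,n}\right)\le \E_{\mathrm{col}}\, d_{\mathrm{TV}}\!\left((\widehat{\mathsf C})_{\#}q^{\mathrm{samp}}_{d,U},(\widehat{\mathsf C})_{\#}\widetilde p_U\right).
\end{align}
The coloring collection ranges over a finite set, or at worst a measurable family, so the expectation is a finite convex combination and this step is legitimate.

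\emph{Step 3 (contraction and conditional bound).} Deterministic postprocessing cannot increase total variation distance, exactly as in Eq.~\eqref{eq:pushforward-contraction}. Hence each integrand is at most $d_{\mathrm{TV}}(q^{\mathrm{samp}}_{d,U},\widetilde p_U)$. By Lemma~\ref{lem:proper-collapsed-sampler}, this is in turn at most $2\widehat\Delta_{\mathrm{cf}}(U)$ for each fixed collection of colorings. Taking $\E_{\mathrm{col}}$ then yields Eq.~\eqref{eq:final-pushforward-bound}.

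There is no real obstacle here. The only point needing care is the order of operations: the conditional bound has to be applied for each fixed coloring before averaging. Applying it only after mixing would not work, because $\widehat\Delta_{\mathrm{cf}}(U)$ depends on the colorings and the sampler never sees their average.
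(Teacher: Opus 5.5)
Your proposal is correct and follows the paper's proof. Both arguments write the target as $(\widehat{\mathsf C})_{\#}\widetilde p_U$, move the coloring average outside using convexity of total variation distance in its first argument, apply contraction under $\widehat{\mathsf C}$, and then invoke Lemma~\ref{lem:proper-collapsed-sampler} for each fixed collection of colorings; your explicit check of Eq.~\eqref{eq:ptilde-pushes-to-collapsed-target}, which the paper only asserts, is also correct.
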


\begin{proof}
By Eq.~\eqref{eq:ptilde-pushes-to-collapsed-target}, the same map sends $\widetilde p_U$ to $P_{U,\mathrm{coll}}^{N,n}$.
Total variation distance is convex in its first argument and cannot increase under deterministic postprocessing.
Applying these properties and Lemma~\ref{lem:proper-collapsed-sampler} gives
\begin{align}
 d_{\mathrm{TV}}\!\left(\mathcal A_d(U),P_{U,\mathrm{coll}}^{N,n}\right)
 &\leq\E_{\mathrm{col}}\!\left[
 d_{\mathrm{TV}}\!\left(q_{d,U}^{\mathrm{samp}},\widetilde p_U\right)\right]
 \nonumber \\
 &\leq2\E_{\mathrm{col}}\widehat\Delta_{\mathrm{cf}}(U).
\end{align}
\end{proof}

\subsection{The sampling algorithm}

Combining the preceding ingredients gives Algorithm~\ref{alg:fixed-loss-sampler}, which generates one output sample for given $U,M,N,n,\varepsilon$, and $\delta$ satisfying the assumptions of Theorem~\ref{thm:main}.
Each execution draws a fresh collection of colorings and keeps it fixed throughout the marginal queries.
Note that for $d=0$, no such coloring is required.

\begin{algorithm}[H]
\caption{Sampler for fixed-loss collision-collapsed BosonSampling}
\label{alg:fixed-loss-sampler}
\begin{algorithmic}[1]
\STATE Choose the truncation degree $d$ according to Theorem~\ref{thm:main}.
\STATE Draw the full collection of colorings defining $\widehat q_{d,U}$ and perform the preprocessing for exact prefix-marginal evaluation of $\widetilde q_{d,U}$ in Lemma~\ref{lem:q-tilde-marginals}.
\STATE Keeping this collection fixed, generate an ordered string $\bm r$ one symbol at a time using these prefix marginals and the signed marginal sampler in Lemma~\ref{lem:proper-collapsed-sampler}.
\STATE Return $\sort(\bm r)$ if $\bm r\in[M]_{\neq}^{n}$, and the collision symbol $c$ otherwise.
\end{algorithmic}
\end{algorithm}

\subsection{Proof of the main theorem}

We now combine the final sampler bound with Proposition~\ref{prop:randomized-marginal-complexity} to prove the accuracy guarantee, and then account for the cost of generating one sample.

\begin{proof}[Proof of Theorem~\ref{thm:main}]
Squaring the bound in Lemma~\ref{lem:final-pushforward-bound} and taking expectation over $U$ gives
\begin{align}
 \E_U\!\left[
 d_{\mathrm{TV}}\!\left(\mathcal A_d(U),P_{U,\mathrm{coll}}^{N,n}\right)^2
 \right]
 &\leq4\E_U\!\left[
 \left(\E_{\mathrm{col}}\widehat\Delta_{\mathrm{cf}}(U)\right)^2
 \right]
 \nonumber\\
 &\leq4\E_{U,\mathrm{col}}\!\left[\widehat\Delta_{\mathrm{cf}}(U)^2\right]
 \nonumber\\
 &\leq4C_HR_d(n,N)+\frac{\varepsilon^2\delta}{2}.
 \label{eq:averaged-randomized-error}
\end{align}
Here, the second inequality is Jensen's inequality over the colorings, and the last follows from Proposition~\ref{prop:randomized-marginal-complexity} under $M\geq C_0n^2N$.

The coloring contribution in Eq.~\eqref{eq:averaged-randomized-error} is at most $\varepsilon^2\delta/2$.
To bound the truncation contribution by the same amount, set $C=8C_H$ in Theorem~\ref{thm:main}.
If $d<n$, the degree condition in Eq.~\eqref{eq:main-degree-condition} and Proposition~\ref{prop:gaussian-truncation} give
\begin{align}
 R_d(n,N)&\leq\frac{\varepsilon^2\delta}{8C_H}.
 \label{eq:Rd-choice-main-proof}
\end{align}
For $d=n$, the truncation is exact and $R_d(n,N)=0$, so the same bound holds.
Hence, the second moment in Eq.~\eqref{eq:averaged-randomized-error} is at most $\varepsilon^2\delta$.
Applying Markov's inequality to the squared total variation distance gives
\begin{align}
 &\Prob_{U\sim\Haar(M)}\!\left[
 d_{\mathrm{TV}}\!\left(\mathcal A_d(U),P_{U,\mathrm{coll}}^{N,n}\right)>\varepsilon
 \right]
 \leq\delta.
 \label{eq:main-tv-completion}
\end{align}
This proves Eq.~\eqref{eq:main-average-case}.
The probability is over $U$ alone, since $\mathcal A_d(U)$ already includes the coloring randomness.

It remains to verify the runtime.
By Lemma~\ref{lem:q-tilde-marginals}, preprocessing the marginal oracle for $\widetilde q_{d,U}$ costs $MN^2\varepsilon^{-2}\delta^{-1}2^{O(d)}$ arithmetic operations, including the evaluation of the remaining signed mass.
Each prefix marginal then costs $N^3\varepsilon^{-2}\delta^{-1}2^{O(d)}$ arithmetic operations.
Lemma~\ref{lem:proper-collapsed-sampler} requires $O(Mn)$ such queries.
Combining the preprocessing and query costs gives $MnN^3\varepsilon^{-2}\delta^{-1}2^{O(d)}$, where the query cost dominates preprocessing since $nN\geq1$.
The costs of the sequential choices and the final sorting map are also absorbed into this bound.
Thus the total arithmetic runtime per output sample is as stated in Theorem~\ref{thm:main}.
\end{proof}

\section{Full Fock-output approximation under the bosonic birthday bound}
\label{sec:full-fock}

We now extend the guarantee to the full Fock-output distribution.
To do so, we map the dummy collision outcome $c$ to one fixed Fock outcome and leave collision-free output labels unchanged.
The additional error is bounded by the total collision probability, which can be controlled by the bosonic birthday bound under an additional condition on the number of modes.

We first recall the bosonic birthday bound established in Refs.~\cite{aaronson2011computational, arkhipov2012bosonic}.

\begin{lemma}[Bosonic birthday bound~\cite{aaronson2011computational, arkhipov2012bosonic}]
\label{lem:bosonic-birthday-bound}
Assume that $M>2n^2$.
For fixed-loss BosonSampling, the expected total probability of collision outcomes over Haar-random interferometers satisfies
\begin{align}
 \E_{U\sim\Haar(M)}\!\left[p_{\mathrm{coll}}(U)\right]
 &\leq\frac{2n^2}{M}.
 \label{eq:expected-collision-probability}
\end{align}
Consequently, for every $\varepsilon>0$,
\begin{align}
 \Prob_{U\sim\Haar(M)}\!\left[
 p_{\mathrm{coll}}(U)>\frac{\varepsilon}{2}
 \right]
 &\leq\frac{4n^2}{M\varepsilon}.
 \label{eq:collision-markov}
\end{align}
\end{lemma}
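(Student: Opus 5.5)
The plan is to reduce the fixed-loss collision probability to that of ordinary (lossless) $n$-photon BosonSampling and then invoke the standard bosonic birthday bound. By Eq.~\eqref{eq:fock-fixed-loss}, $P_U^{N,n}$ is the uniform average over $S\subseteq[N]$, $\abs{S}=n$, of the lossless distributions $P_U^{S}(\bm t)=\abs{\Per(U_{\bm t,S})}^2/\bm t!$. Hence $p_{\mathrm{coll}}(U)$, which is the total mass of $P_U^{N,n}$ on $\mathcal T_{M,n}\setminus\mathcal T_{M,n}^{\mathrm{cf}}$ by Eq.~\eqref{eq:ordered-pushforward}, equals $\binom Nn^{-1}\sum_S p^S_{\mathrm{coll}}(U)$. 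Here $p^S_{\mathrm{coll}}(U)$ is the collision probability of $n$ photons injected into the input modes $S$. By linearity of expectation it then suffices to bound $\E_U[p^S_{\mathrm{coll}}(U)]$ for each fixed $S$. By left-right invariance of the Haar measure (permuting columns), this expectation does not depend on $S$, so we may take $S=[n]$.

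For a fixed $S$, the key step is to average over $U$ and compute the expected output distribution exactly. Averaging over Haar $U$ (equivalently, applying a Haar-random unitary to the $n$-photon input) gives the maximally mixed state on the symmetric subspace of $n$ bosons in $M$ modes. Under that state every Fock outcome $\bm t\in\mathcal T_{M,n}$ has probability $1/\binom{M+n-1}{n}$. This is the bosonic birthday paradox of Refs.~\cite{aaronson2011computational, arkhipov2012bosonic}, and I would cite it rather than rederive it. The expected collision-free probability is therefore
\begin{align}
\frac{\binom Mn}{\binom{M+n-1}{n}}=\prod_{i=0}^{n-1}\frac{M-i}{M+i}\geq 1-\sum_{i=0}^{n-1}\frac{2i}{M+i}\geq 1-\frac{n^2}{M},
\end{align}
using $\prod(1-x_i)\geq 1-\sum x_i$. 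This gives $\E_U[p^S_{\mathrm{coll}}]\leq n^2/M\leq 2n^2/M$. The hypothesis $M>2n^2$ is used only to keep this bound below one, which also matches the form stated in the cited references.

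The second claim, Eq.~\eqref{eq:collision-markov}, follows from Markov's inequality applied to the nonnegative variable $p_{\mathrm{coll}}(U)$. This gives $\Prob[p_{\mathrm{coll}}>\varepsilon/2]\leq (2/\varepsilon)\,\E[p_{\mathrm{coll}}]\leq 4n^2/(M\varepsilon)$.

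The only step requiring care is justifying the uniform-over-$\mathcal T_{M,n}$ average, that is, Schur's lemma on the symmetric subspace. Since that result is exactly the content of the cited bosonic birthday bound, the main work in the proof is the averaging-over-$S$ reduction together with this counting estimate.
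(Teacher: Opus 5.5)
Your proof is correct and follows the same route as the paper: you write the fixed-loss collision probability as the uniform average over surviving subsets $S$ of the lossless collision probabilities, bound each by the bosonic birthday bound, average over $S$, and finish with Markov's inequality. The only difference is that you derive the per-subset bound from the uniform Haar-averaged Fock distribution and the product $\prod_{i=0}^{n-1}(M-i)/(M+i)$, which even gives the sharper $n^2/M$, whereas the paper simply cites $2n^2/M$ for each subset.
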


\begin{proof}
For each fixed collision-free $n$-photon input subset $S\subseteq[N]$, the bosonic birthday bound gives $\E_U[p_{\mathrm{coll}}(U\mid S)]\leq2n^2/M$ when $M>2n^2$~\cite{aaronson2011computational, arkhipov2012bosonic}.
The fixed-loss input is the uniform mixture over these subsets.
Averaging the bound over $S$ gives Eq.~\eqref{eq:expected-collision-probability}, and Markov's inequality then gives Eq.~\eqref{eq:collision-markov}.
\end{proof}

Combining this bound with the collision-collapsed sampler gives the following approximation guarantee for the full Fock-output distribution $P_U^{N,n}$ in Eq.~\eqref{eq:fock-fixed-loss}.

\begin{corollary}[Full fixed-loss Fock-output approximation]
\label{cor:full-fock-output}
Let $C_0,C$ be the constants in Theorem~\ref{thm:main}.
Let $1\leq n<N$ and $\varepsilon,\delta\in(0,1)$, and suppose
$M\geq C_0n^2N$ and $M\geq C_Bn^2/(\varepsilon\delta)$ for a sufficiently large universal constant $C_B$.
Choose $d\in\{0,\ldots,n\}$ so that either $d=n$ or
\begin{align}
 \frac{(n/N)^{d+1}}{1-n/N}
 &\leq\frac{\varepsilon^2\delta}{8C}.
 \label{eq:full-output-degree-condition}
\end{align}
Then there exists a randomized classical algorithm that samples from a distribution $Q_U$ on the full Fock-output space $\mathcal T_{M,n}$ satisfying
\begin{align}
 \Prob_{U\sim\Haar(M)}\!\left[
 d_{\mathrm{TV}}\!\left(Q_U,P_U^{N,n}\right)\leq\varepsilon
 \right]
 &\geq1-\delta,
 \label{eq:full-fock-output-guarantee}
\end{align}
with runtime $MnN^3\varepsilon^{-2}\delta^{-1}2^{O(d)}$ arithmetic operations per output sample.
\end{corollary}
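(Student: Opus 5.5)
We run the collision-collapsed sampler of Theorem~\ref{thm:main} with rescaled error parameters, and post-process its output by a fixed map $\Pi:\mathcal T_{M,n}^{\mathrm{cf}}\cup\{c\}\to\mathcal T_{M,n}$. The map $\Pi$ is the identity on collision-free outcomes and sends $c$ to one fixed collision outcome, say $(1,1,\ldots,1)$. Set $Q_U:=\Pi_\#\mathcal A_d(U)$. Since postprocessing costs $O(n)$, the runtime is that of Theorem~\ref{thm:main}.

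The error splits by the triangle inequality:
\begin{align}
 d_{\mathrm{TV}}(Q_U,P_U^{N,n})
 \leq d_{\mathrm{TV}}\!\left(\Pi_\#\mathcal A_d(U),\Pi_\#P_{U,\mathrm{coll}}^{N,n}\right)
 + d_{\mathrm{TV}}\!\left(\Pi_\#P_{U,\mathrm{coll}}^{N,n},P_U^{N,n}\right).
\end{align}
The first term is at most $d_{\mathrm{TV}}(\mathcal A_d(U),P_{U,\mathrm{coll}}^{N,n})$ by data processing.

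For the second term, note that $\Pi_\#P_{U,\mathrm{coll}}^{N,n}$ agrees with $P_U^{N,n}$ on collision-free outcomes. Its only difference is that the whole collision mass $p_{\mathrm{coll}}(U)$ is moved onto a single collision outcome. The $\ell_1$ discrepancy is therefore at most $2p_{\mathrm{coll}}(U)$, so the second term is at most $p_{\mathrm{coll}}(U)$. Here $p_{\mathrm{coll}}(U)$ equals the collision mass of $P_U^{N,n}$, by the pushforward identity in Eq.~\eqref{eq:ordered-pushforward}.

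Now fix the parameters.

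\emph{First term.} Apply Theorem~\ref{thm:main} with accuracy $\varepsilon/2$ and failure probability $\delta/2$. Its degree condition then reads $\frac{(n/N)^{d+1}}{1-n/N}\le \frac{\varepsilon^2\delta}{8C}$, which is exactly Eq.~\eqref{eq:full-output-degree-condition}. The case $d=n$ is covered as before, since the truncation is exact and $R_d=0$. This gives $d_{\mathrm{TV}}(\mathcal A_d(U),P_{U,\mathrm{coll}}^{N,n})\le\varepsilon/2$ outside an event of probability $\delta/2$. The runtime becomes $Mn N^3(\varepsilon/2)^{-2}(\delta/2)^{-1}2^{O(d)}$, the same as before up to constants.

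\emph{Second term.} Use Lemma~\ref{lem:bosonic-birthday-bound}. We need $M>2n^2$, which follows from $M\ge C_Bn^2/(\varepsilon\delta)$ once $C_B\ge 2$, because $\varepsilon\delta<1$. Markov's inequality then gives $\Pr[p_{\mathrm{coll}}(U)>\varepsilon/2]\le 4n^2/(M\varepsilon)\le 4\delta/C_B$. This is at most $\delta/2$ for $C_B\ge 8$.

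\emph{Conclusion.} A union bound over the two bad events gives $d_{\mathrm{TV}}(Q_U,P_U^{N,n})\le\varepsilon$ with probability at least $1-\delta$.

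The main point needing care is that the collision-collapsed target genuinely coincides with $P_U^{N,n}$ on collision-free outcomes, so that only the collision sector contributes to the discrepancy. This follows from the definition $P^{N,n}_{U,\mathrm{coll}}=\mathsf C_\#P_U^{N,n}$. Beyond that, the proof is just checking that the constant rescalings stay compatible with the constants $C$ and $C_B$.
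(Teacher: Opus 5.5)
Your proposal is correct and follows essentially the same route as the paper: run the Theorem~\ref{thm:main} sampler with parameters $\varepsilon/2$ and $\delta/2$, relabel $c$ as a fixed Fock outcome, bound the distance from the relabeled ideal distribution to $P_U^{N,n}$ by $p_{\mathrm{coll}}(U)$, and finish with Lemma~\ref{lem:bosonic-birthday-bound} (with $C_B\geq 8$) and a union bound. Sending $c$ specifically to a collision outcome is immaterial; the paper allows any fixed $\bm t_0$, and the bound by $p_{\mathrm{coll}}(U)$ holds for any choice, including $n=1$, where $(1)$ is not a collision outcome and $p_{\mathrm{coll}}(U)=0$.
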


\begin{proof}
We first run the fixed-loss sampling algorithm of Sec.~\ref{sec:sampling} with degree $d$, accuracy parameter $\varepsilon/2$, and failure parameter $\delta/2$.
We then choose any fixed Fock outcome $\bm t_0\in\mathcal T_{M,n}$, and map the dummy outcome $c$ to $\bm t_0$ while leaving collision-free output labels unchanged.
Let $Q_U$ be the resulting output distribution, and let $\widehat P_U$ be the distribution obtained by applying the same map to $P_{U,\mathrm{coll}}^{N,n}$.
Since total variation distance cannot increase under this map,
\begin{align}
 d_{\mathrm{TV}}(Q_U,\widehat P_U)
 &\leq d_{\mathrm{TV}}\!\left(\mathcal A_d(U),P_{U,\mathrm{coll}}^{N,n}\right).
 \label{eq:full-corollary-first-term}
\end{align}
Moreover, moving the ideal collision mass to $\bm t_0$ gives
\begin{align}
 d_{\mathrm{TV}}(\widehat P_U,P_U^{N,n})
 &\leq p_{\mathrm{coll}}(U).
 \label{eq:full-corollary-collision-term}
\end{align}
The triangle inequality therefore bounds the full Fock-output error by the sum of the collision-collapsed sampling error and the ideal collision probability.

Specifically, note that Eq.~\eqref{eq:full-output-degree-condition} is the degree condition of Theorem~\ref{thm:main} with parameters $\varepsilon/2$ and $\delta/2$.
For $d=n$, the truncation error is zero.
Thus the proof of Theorem~\ref{thm:main} gives
\begin{align}
 \Prob_{U\sim\Haar(M)}\!\left[
 d_{\mathrm{TV}}\!\left(\mathcal A_d(U),P_{U,\mathrm{coll}}^{N,n}\right)
 >\frac{\varepsilon}{2}
 \right]
 &\leq\frac{\delta}{2}.
 \label{eq:full-corollary-collapsed-good}
\end{align}
By Lemma~\ref{lem:bosonic-birthday-bound}, choosing $C_B\geq8$ also gives
\begin{align}
 \Prob_U\!\left[p_{\mathrm{coll}}(U)>\frac{\varepsilon}{2}\right]
 &\leq\frac{\delta}{2}.
 \label{eq:full-corollary-birthday-good}
\end{align}
Outside these two exceptional events, Eqs.~\eqref{eq:full-corollary-first-term} and~\eqref{eq:full-corollary-collision-term} imply
\begin{align}
 d_{\mathrm{TV}}(Q_U,P_U^{N,n})
 &\leq d_{\mathrm{TV}}(Q_U,\widehat P_U)
       +d_{\mathrm{TV}}(\widehat P_U,P_U^{N,n})
 \nonumber\\
 &\leq\frac{\varepsilon}{2}+\frac{\varepsilon}{2}=\varepsilon.
\end{align}
Therefore, a union bound proves Eq.~\eqref{eq:full-fock-output-guarantee}.
Rescaling the accuracy and failure parameters changes the sampling cost only by a constant factor, and the final relabeling is absorbed into the stated runtime.
\end{proof}

\section{Lossy BosonSampling with uniform transmission loss}
\label{sec:extensions}

This section extends our fixed-loss simulation result to the uniform-transmission-loss model, in which each photon is transmitted independently with the same transmission rate $\eta\in(0,1)$.
We obtain the extension by conditioning on the number of surviving photons and applying the fixed-loss sampler within each sector.

Let $S_\eta\sim\operatorname{Bin}(N,\eta)$ be the number of surviving photons, and let $s$ denote its realized value.
Conditioned on $S_\eta=s$, all surviving input subsets of size $s$ are equally likely, each with probability $1/\binom Ns$.
Consequently, the conditional output distribution is the fixed-loss distribution with $s$ surviving photons.
We use its collision-collapsed version $P_{U,\mathrm{coll}}^{N,s}$ as the target within each sector, with $P_{U,\mathrm{coll}}^{N,0}$ denoting the point mass on the vacuum.

To distinguish outcomes from different sectors, let $c_s$ be the dummy collision symbol in the $s$-photon sector and define
\begin{align}
 \mathcal Y_{M,s}^{\mathrm{coll}}
 &:={}\mathcal T_{M,s}^{\mathrm{cf}}\cup\{c_s\},
 \qquad 1\leq s\leq N.
 \label{eq:uniform-sector-outcome-spaces}
\end{align}
For $s=0$, set $\mathcal Y_{M,0}^{\mathrm{coll}}:=\{\mathrm{vac}\}$.
The common collision-collapsed outcome space is
\begin{align}
 \mathcal Y_{M,\leq N}^{\mathrm{coll}}
 &:={}\bigsqcup_{s=0}^{N}\mathcal Y_{M,s}^{\mathrm{coll}}.
 \label{eq:uniform-common-outcome-space}
\end{align}
We extend each sector distribution by zero outside its sector, using $c_s$ as the collision symbol for $s\geq1$.
The collision-collapsed target for uniform transmission loss is then the binomial mixture
\begin{align}
 P_{U,\eta,\mathrm{coll}}^{N}
 &:=\sum_{s=0}^{N}\Prob[S_\eta=s]P_{U,\mathrm{coll}}^{N,s}.
 \label{eq:collapsed-binomial-sector-decomposition}
\end{align}

To control the runtime, we apply the fixed-loss sampler only when the surviving photon number is at most a cutoff $s_*$.
For larger photon numbers, we return the corresponding dummy collision symbol.
These sectors contribute at most their total binomial weight $\Prob[S_\eta>s_*]$ to the approximation error.

\begin{theorem}[Uniform transmission loss via fixed-survivor-number sectors]
\label{thm:uniform-transmission}
Let $C_0,C$ be the constants in Theorem~\ref{thm:main}.
Let $N\geq2$, $\eta,\varepsilon,\delta\in(0,1)$, $S_\eta\sim\operatorname{Bin}(N,\eta)$, and $s_*\in\{1,\ldots,N-1\}$.
Suppose $M\geq C_0(s_*)^2N$ and
\begin{align}
 \Prob[S_\eta>s_*]&\leq\frac{\varepsilon}{2}.
\end{align}
Choose any degree $d_*\in\{0,\ldots,s_*-1\}$ satisfying
\begin{align}
 \frac{(s_*/N)^{d_*+1}}{1-s_*/N}
 &\leq\frac{\varepsilon^2\delta}{4C}
 \label{eq:uniform-degree-condition}
\end{align}
if such a value exists; otherwise, set $d_*=s_*$.
Then there exists a randomized classical algorithm that runs in $Ms_*N^3\varepsilon^{-2}\delta^{-1}2^{O(d_*)}$ arithmetic operations per output sample and samples from a distribution $\mathcal A_\eta(U)$ on $\mathcal Y_{M,\leq N}^{\mathrm{coll}}$ satisfying
\begin{align}
 \Prob_{U\sim\Haar(M)}\!\left[
 d_{\mathrm{TV}}\!\left(\mathcal A_\eta(U),P_{U,\eta,\mathrm{coll}}^{N}\right)
 \leq\varepsilon
 \right]
 &\geq1-\delta.
 \label{eq:uniform-average-case}
\end{align}
\end{theorem}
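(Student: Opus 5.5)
The plan is to reduce to Theorem~\ref{thm:main} sector by sector, and to aggregate the sector errors through the second-moment bound of Eq.~\eqref{eq:averaged-randomized-error} instead of a union bound over sectors. The algorithm first draws $s\sim\operatorname{Bin}(N,\eta)$, which costs $O(N)$ operations. If $s=0$ it outputs $\mathrm{vac}$, and if $s>s_*$ it outputs the dummy symbol $c_s$. If $1\leq s\leq s_*$, it runs Algorithm~\ref{alg:fixed-loss-sampler} with $n=s$, accuracy parameter $\varepsilon/2$, failure parameter $\delta$, and degree $d_s:=d_*$ if $d_*\leq s-1$, and $d_s:=s$ otherwise. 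The output distribution is then $\mathcal A_\eta(U)=\Prob[S_\eta=0]\,\delta_{\mathrm{vac}}+\sum_{s=1}^{s_*}\Prob[S_\eta=s]\,\mathcal A_{d_s}^{(s)}(U)+\sum_{s>s_*}\Prob[S_\eta=s]\,\delta_{c_s}$. Since all sectors live on disjoint pieces of $\mathcal Y_{M,\leq N}^{\mathrm{coll}}$, comparing with Eq.~\eqref{eq:collapsed-binomial-sector-decomposition} gives
\begin{align}
d_{\mathrm{TV}}\!\left(\mathcal A_\eta(U),P_{U,\eta,\mathrm{coll}}^{N}\right)
\leq D(U)+\Prob[S_\eta>s_*],
\qquad
D(U):=\sum_{s=1}^{s_*}\Prob[S_\eta=s]\,d_{\mathrm{TV}}\!\left(\mathcal A_{d_s}^{(s)}(U),P_{U,\mathrm{coll}}^{N,s}\right).
\end{align}
The vacuum sector is exact, and the sectors with $s>s_*$ contribute at most their total weight. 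By hypothesis, the second term is at most $\varepsilon/2$.

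Next I check that each sector with $1\leq s\leq s_*$ meets the hypotheses of Theorem~\ref{thm:main} at parameters $(\varepsilon/2,\delta)$, uniformly in $s$.
\begin{itemize}
\item The mode condition holds, since $M\geq C_0 s_*^2N\geq C_0s^2N$. Also $s\leq s_*<N$.
\item The degree condition holds. The map $s\mapsto (s/N)^{d+1}/(1-s/N)$ is increasing, so Eq.~\eqref{eq:uniform-degree-condition} gives $(s/N)^{d_*+1}/(1-s/N)\leq (\varepsilon/2)^2\delta/C$, which is exactly the degree condition of Theorem~\ref{thm:main} with $\varepsilon\to\varepsilon/2$. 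If instead $d_s=s$, the truncation is exact and $R_{d_s}(s,N)=0$.
\end{itemize}
Rather than invoking the high-probability conclusion of Theorem~\ref{thm:main} for each $s$, which would require a union bound over up to $s_*$ sectors, I use the intermediate bound from its proof. That is, Eq.~\eqref{eq:averaged-randomized-error} with the choice $C=8C_H$ gives
\begin{align}
\E_U\!\left[d_{\mathrm{TV}}\!\left(\mathcal A_{d_s}^{(s)}(U),P_{U,\mathrm{coll}}^{N,s}\right)^2\right]\leq \frac{\varepsilon^2\delta}{4}
\end{align}
for every $1\leq s\leq s_*$.

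The aggregation step, which is the only real subtlety, is Jensen's inequality over the sub-probability weights $\Prob[S_\eta=s]$, whose total is at most one:
\begin{align}
\E_U[D(U)^2]\leq\sum_{s=1}^{s_*}\Prob[S_\eta=s]\,\E_U\!\left[d_{\mathrm{TV}}\!\left(\mathcal A_{d_s}^{(s)}(U),P_{U,\mathrm{coll}}^{N,s}\right)^2\right]\leq\frac{\varepsilon^2\delta}{4}.
\end{align}
Markov's inequality then gives $\Prob_U[D(U)>\varepsilon/2]\leq\delta$. Outside this event the total variation distance is at most $\varepsilon/2+\varepsilon/2=\varepsilon$, which proves Eq.~\eqref{eq:uniform-average-case}. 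Because the same Haar $U$ is shared by all sectors, it matters that this argument averages the squared errors over $s$ before applying Markov, rather than applying Markov separately in each sector.

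For the runtime, each invoked sector uses $s\leq s_*$ and $d_s\leq d_*$; when $d_s=s\leq d_*$ we also have $2^{O(s)}\leq 2^{O(d_*)}$. Theorem~\ref{thm:main} therefore bounds the cost by $MsN^3(\varepsilon/2)^{-2}\delta^{-1}2^{O(d_s)}\leq Ms_*N^3\varepsilon^{-2}\delta^{-1}2^{O(d_*)}$. The binomial draw and the dummy outputs are absorbed into this bound.
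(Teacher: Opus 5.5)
Your proposal is correct and follows essentially the same route as the paper: per-sector application of Eq.~\eqref{eq:averaged-randomized-error} at accuracy $\varepsilon/2$ with $C=8C_H$ and $d_s=\min\{d_*,s\}$, Jensen over the binomial weights, then a single Markov step together with the tail $\Prob[S_\eta>s_*]\leq\varepsilon/2$. The only deviation is that you route the $s=1$ sector through the fixed-loss sampler, whereas the paper samples it exactly. This is harmless because $M\geq C_0N$ and the degree bound still cover that sector.
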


\begin{proof}
The algorithm first samples $s\sim\operatorname{Bin}(N,\eta)$ using $N$ independent Bernoulli trials.
If $s=0$, it outputs the vacuum; if $s=1$, it samples the one-photon distribution exactly.
For $2\leq s\leq s_*$, it runs the fixed-loss sampler with $s$ surviving photons, degree $d_s:=\min\{d_*,s\}$, accuracy parameter $\varepsilon/2$, and failure parameter $\delta$.
Denote its output distribution, including the coloring randomness, by $\mathcal A_{d_s}^{(s)}(U)$.
For $s>s_*$, the algorithm returns $c_s$.
This defines $\mathcal A_\eta(U)$ with the same binomial sector weights as the target distribution.

We bound the error of this mixture directly.
For $2\leq s\leq s_*$, write
\begin{align}
 \Delta_s(U)
 &:={}d_{\mathrm{TV}}\!\left(\mathcal A_{d_s}^{(s)}(U),P_{U,\mathrm{coll}}^{N,s}\right).
 \label{eq:uniform-sector-collapsed-error}
\end{align}
The sectors $s=0,1$ are sampled exactly, while the error in each sector above $s_*$ is at most one.
Since the sector outcome spaces are disjoint,
\begin{align}
 d_{\mathrm{TV}}\!\left(\mathcal A_\eta(U),P_{U,\eta,\mathrm{coll}}^{N}\right)
 &\leq\Prob[S_\eta>s_*]+\Delta_\eta(U),
 \label{eq:uniform-deterministic-error}
\end{align}
where
\begin{align}
 \Delta_\eta(U)
 &:={}\sum_{s=2}^{s_*}\Prob[S_\eta=s]\Delta_s(U).
 \label{eq:uniform-weighted-truncation}
\end{align}

For $2\leq s\leq s_*$, $M\geq C_0(s_*)^2N$ implies $M\geq C_0s^2N$.
Applying Eq.~\eqref{eq:averaged-randomized-error} with accuracy parameter $\varepsilon/2$ gives
\begin{align}
 \E_U[\Delta_s(U)^2]
 &\leq4C_HR_{d_s}(s,N)+\frac{\varepsilon^2\delta}{8}.
 \label{eq:uniform-sector-second-moment}
\end{align}
If $d_s=s$, the truncation error $R_{d_s}(s,N)$ is zero.
Otherwise, $d_s=d_*<s$, and
$R_{d_s}(s,N)\leq(s_*/N)^{d_*+1}/(1-s_*/N)$.
Equation~\eqref{eq:uniform-degree-condition}, with $C=8C_H$ as in the proof of Theorem~\ref{thm:main}, therefore bounds the truncation contribution by $\varepsilon^2\delta/8$.
Thus $\E_U[\Delta_s(U)^2]\leq\varepsilon^2\delta/4$ for every such sector, including the full-degree case.

The weights in Eq.~\eqref{eq:uniform-weighted-truncation} sum to at most one.
Applying Jensen's inequality, with the remaining weight placed at zero, gives
\begin{align}
 \E_U[\Delta_\eta(U)^2]
 &\leq\sum_{s=2}^{s_*}\Prob[S_\eta=s]\E_U[\Delta_s(U)^2]
 \leq\frac{\varepsilon^2\delta}{4}.
 \label{eq:uniform-weighted-second-moment}
\end{align}
Markov's inequality then yields
\begin{align}
 \Prob_U\!\left[\Delta_\eta(U)>\frac{\varepsilon}{2}\right]
 &\leq\delta.
 \label{eq:uniform-truncation-good-event}
\end{align}
Together with the binomial-tail assumption and Eq.~\eqref{eq:uniform-deterministic-error}, this proves Eq.~\eqref{eq:uniform-average-case}.

Finally, each execution samples only one sector.
For $2\leq s\leq s_*$, the fixed-loss cost is bounded by $Ms_*N^3\varepsilon^{-2}\delta^{-1}2^{O(d_*)}$ because $s\leq s_*$ and $d_s\leq d_*$.
The exact vacuum and one-photon cases, the dummy output, and the $O(N)$ Bernoulli trials are also absorbed into this bound.
\end{proof}

For a transmission rate bounded away from one, a cutoff proportional to $N$ makes the binomial tail exponentially small.
The following corollary uses this observation to obtain polynomial runtime for fixed and decreasing transmission rates.

\begin{corollary}[Polynomial-time simulation under uniform transmission loss]
\label{cor:constant-transmission}
Let $\varepsilon,\delta\in(0,1)$ and suppose $M\geq C_0N^3$.
For any fixed transmission rate $\eta\in(0,1)$, collision-collapsed uniform-transmission-loss BosonSampling admits a classical simulation satisfying Eq.~\eqref{eq:uniform-average-case} with runtime $\operatorname{poly}(M,N,\varepsilon^{-1},\delta^{-1})$.
\end{corollary}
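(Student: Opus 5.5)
We would obtain the corollary by applying Theorem~\ref{thm:uniform-transmission} with a cutoff $s_*$ proportional to $N$. Fix $\eta\in(0,1)$ and set $\eta':=(1+\eta)/2\in(\eta,1)$. Take $s_*:=\lceil \eta' N\rceil$, capped at $N-1$.

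\textbf{Binomial tail.} Hoeffding's inequality gives
\begin{align}
\Prob[S_\eta>s_*]\leq\exp\!\left(-2N(\eta'-\eta)^2\right)=\exp\!\left(-N(1-\eta)^2/2\right).
\end{align}
We need this to be at most $\varepsilon/2$, which holds once $N\geq 2(1-\eta)^{-2}\log(2/\varepsilon)$. The main obstacle is the opposite regime of small $N$ relative to $\log(1/\varepsilon)$. There we choose $s_*=N-1$ (or handle $s=N$ by exact simulation). In fact, whenever $N=O_\eta(\log(1/\varepsilon))$, brute-force exact computation of all probabilities costs $M^{O(N)}N^{O(N)}$. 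This is quasi-polynomial, not polynomial, so we must argue more carefully.

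\textbf{Small-$N$ regime.} Since $\eta<1$ is fixed, $\Prob[S_\eta=N]=\eta^N$. Setting $s_*=N-1$ then requires $\eta^N\leq\varepsilon/2$, which holds in the same regime up to constants. For the remaining case $N<\log(2/\varepsilon)/\log(1/\eta)$, we run the sector $s=N$ with full degree $d=n$. The runtime is $2^{O(N)}=\poly(\varepsilon^{-1})$, because $N=O_\eta(\log \varepsilon^{-1})$. To make this rigorous we would extend Theorem~\ref{thm:uniform-transmission} to allow $s_*=N$. The only change is that the sector $s=N$ needs no truncation (degree $N$, $R=0$), and the marginal-evaluation cost $2^{O(N)}$ from Proposition~\ref{prop:randomized-marginal-complexity} is then polynomial.

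\textbf{Degree condition.} In the main regime $s_*/N\leq \eta''<1$ for a constant $\eta''$ depending only on $\eta$, for instance $\eta''=(1+\eta')/2$, which absorbs the ceiling when $N$ is large. Then Eq.~\eqref{eq:uniform-degree-condition} is satisfied by
\begin{align}
d_*=O\!\left(1+\frac{\log(\varepsilon^{-1}\delta^{-1}(1-\eta'')^{-1})}{\log(1/\eta'')}\right)=O_\eta\!\left(1+\log(\varepsilon^{-1}\delta^{-1})\right).
\end{align}
If this value exceeds $s_*-1$, we take $d_*=s_*$, which only lowers the cost $2^{O(d_*)}$. Hence $2^{O(d_*)}=(\varepsilon^{-1}\delta^{-1})^{O_\eta(1)}$.

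\textbf{Mode condition and runtime.} The hypothesis $M\geq C_0N^3$ implies $M\geq C_0 s_*^2N$, since $s_*\leq N$. Substituting into the runtime $Ms_*N^3\varepsilon^{-2}\delta^{-1}2^{O(d_*)}$ gives $\poly(M,N,\varepsilon^{-1},\delta^{-1})$, with polynomial degree depending only on $\eta$. Equation~\eqref{eq:uniform-average-case} then follows directly from Theorem~\ref{thm:uniform-transmission}.
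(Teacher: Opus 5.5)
Your proposal is correct and follows essentially the same route as the paper. When $\varepsilon$ is not exponentially small in $N$, you take a cutoff $s_*$ proportional to $N$ with a Chernoff/Hoeffding tail bound and logarithmic degree; otherwise you impose no cutoff and use full degree in every sector (including $s=N$, which the hypothesis $M\ge C_0N^3$ covers), where the $2^{O(N)}$ cost is polynomial because $N=O_\eta(\log\varepsilon^{-1})$. Your intermediate band with $s_*=N-1$ is harmless but unnecessary, since the no-cutoff full-degree argument already covers the whole small-$N$ regime.
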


\begin{proof}
For a given $\eta\in(0,1)$, choose a constant $\rho\in(\eta,1)$ and set $s_*:=\lfloor\rho N\rfloor$.
A Chernoff bound gives a constant $c_\eta>0$ such that
$\Prob[S_\eta>s_*]\leq e^{-c_\eta N}$ for sufficiently large $N$.
This cutoff therefore satisfies the tail condition of Theorem~\ref{thm:uniform-transmission} whenever $\varepsilon\geq2e^{-c_\eta N}$.
Since $s_*/N\leq\rho<1$, choosing the smallest degree allowed by the theorem, including the full-degree fallback, gives
$d_*=O(1+\log(\varepsilon^{-1}\delta^{-1}))$.
The runtime is consequently polynomial in $M,N,\varepsilon^{-1}$, and $\delta^{-1}$, with the polynomial degree allowed to depend on the fixed transmission rate.

For the complementary regime $\varepsilon<2e^{-c_\eta N}$, we sample $s$ without imposing a cutoff and use the full degree $d=s$ in every sector with $s\geq2$, with accuracy parameter $\varepsilon/2$ and failure parameter $\delta$ as above.
Proposition~\ref{prop:randomized-marginal-complexity} also permits $s=N$, and $M\geq C_0N^3$ ensures the Haar--Gaussian comparison condition in every sector.
All truncation errors then vanish, while the coloring errors remain controlled by the chosen accuracy and failure parameters.
The same weighted second-moment argument therefore gives the required guarantee without any binomial-tail contribution.
The total cost is at most $MN^4\varepsilon^{-2}\delta^{-1}2^{O(N)}$.
Since $\varepsilon<2e^{-c_\eta N}$ implies $N=O(1+\log\varepsilon^{-1})$, this cost is in $\operatorname{poly}(M,N,\varepsilon^{-1},\delta^{-1})$.

\end{proof}

We now conclude with the analogous extension to the full Fock-output distribution.

\begin{remark}[Full Fock-output approximation under uniform transmission loss]
Although the simulation guarantee in Theorem~\ref{thm:uniform-transmission} is collision-collapsed, we can achieve an analogous full Fock-output approximation guarantee by mapping each dummy collision symbol $c_s$ to a fixed Fock outcome in its sector, as in Corollary~\ref{cor:full-fock-output}.
In addition to $M = \Omega((s_*)^2N)$, a sufficient collision-suppression condition is
\begin{align}
 M
 =
 \Omega\left(\frac{\E[S_\eta^2]}{\varepsilon\delta}\right).
\end{align}
Under this additional condition, the binomially averaged collision probability is small with probability at least $1-\delta$ over $U\sim\Haar(M)$, by applying the bosonic birthday bound conditionally in retained survivor-number sectors and then using Markov's inequality.
\end{remark}

\section{Conclusion}
\label{sec:discussion}

In this work, we developed a polynomial-time classical simulation algorithm for collision-collapsed fixed-loss BosonSampling over Haar-random interferometers in the dilute-mode regime $M=\Omega(n^2N)$.
For every fixed surviving fraction $n/N\in(0,1)$, the algorithm produces samples from a distribution within total variation distance $\varepsilon$ of the target distribution for at least a $1-\delta$ fraction of Haar-random interferometers, with runtime $\operatorname{poly}(M,N,\varepsilon^{-1},\delta^{-1})$.
The central technical ingredients are a Hoeffding decomposition of the fixed-loss output probabilities, which exposes a controlled low-degree approximation, and an efficient randomized color-coding procedure for evaluating the required prefix marginals.
Under the additional bosonic-birthday-bound condition $M=\Omega(n^2/(\varepsilon\delta))$, the simulation guarantee extends from the collision-collapsed distribution to the full Fock-output distribution.
We further generalize the fixed-loss analysis to the uniform-transmission-loss model, establishing analogous polynomial-time classical simulability for every fixed transmission probability $\eta\in(0,1)$.




We conclude with several open directions.
The first is to extend the full Fock-output guarantee to the saturated-mode regime, where collisions need not be rare~\cite{bouland2026complexity, mhiri2026boson}.
Such an extension requires both resolving individual collision outcomes and controlling the approximation error beyond the dilute-mode condition used in the Haar--Gaussian comparison (Lemma~\ref{lem:rectangular-domination}).
This would bring the analysis closer to experimental settings in which the number of modes is comparable to the surviving photon number~\cite{zhong2020quantum, zhong2021phase, madsen2022quantum, deng2023gaussian, liu2026gaussian}.

Beyond relaxing the dilute-mode restriction, another direction is to incorporate additional experimentally relevant conditions into the simulation framework, including partial distinguishability noise~\cite{renema2018efficient, renema2018classical, moylett2019classically, shchesnovich2019noise, renema2020simulability, shi2022effect, van2024efficient} or non-ideal measurements beyond photon-number-resolving detection~\cite{quesada2018gaussian, bressanini2024gaussian, bulmer2022threshold, bradler2021certain, kolarovszki2026generative, kurkin2026universality, go2026threshold}.
Extending the algorithm to these more realistic settings could further strengthen classical benchmarks for photonic quantum-advantage experiments.


\acknowledgements
This work was supported by the Korean government (Ministry of Science and ICT~(MSIT)), the NRF grants funded by the Korea government~(MSIT)~(Nos.~RS-2024-00413957 and RS-2024-00438415),~and the Institute of Information \& Communications Technology Planning \& Evaluation (IITP) grant funded by the Korea government (MSIT) (IITP-2025-RS-2020-II201606 and  IITP-2025-RS-2024-00437191).
Generative AI assisted with analytical cross-checks and manuscript editing.
The authors reviewed and edited all outputs and take full responsibility for the content.

\bibliographystyle{unsrt}
\bibliography{reference.bib}

\onecolumngrid
\appendix

\section{Rectangular Haar--Gaussian density domination (proof of Lemma~\ref{lem:rectangular-domination})}\label{app:haar-details}


We first recall the exact density $p_{\mathrm H}$ of rectangular truncations of Haar unitary matrices.
With respect to Lebesgue measure on $\mathbb C^{n\times N}$, this density, evaluated at $Y\in\mathbb C^{n\times N}$, is given by~\cite{jiang2009approximation}
\begin{align}
p_{\mathrm H}(Y)
=
\pi^{-nN}M^{-nN}
\prod_{j=0}^{n-1}
\frac{(M-j-1)!}{(M-N-j-1)!}
\det\!\left(I_n-\frac{YY^\dagger}{M}\right)^{M-n-N}
\mathbf 1_{\{YY^\dagger \prec MI_n\}} ,
\label{eq:rectangular-truncation-density}
\end{align}
where $\mathbf 1$ denotes the indicator function. 
The formula is valid provided that $n+N\le M$, which is implied by our assumption $M\geq C_0n^2N$ for sufficiently large $C_0$.
Here $YY^\dagger \prec MI_n$ means that all eigenvalues of $YY^\dagger$ are strictly smaller than $M$; the boundary has measure zero and is irrelevant for the density comparison.
Also, the i.i.d.\ standard complex Gaussian density $p_{\mathrm G}$ is given by
\begin{align}
p_{\mathrm G}(Y)
=
\prod_{a=1}^{n}
\prod_{b=1}^{N}
\pi^{-1}e^{-\abs{Y_{a,b}}^2}
=
\pi^{-nN}\exp\!\left[-\Tr(YY^\dagger)\right].
\label{eq:rectangular-gaussian-density}
\end{align}
On the support of $p_{\mathrm H}$, the likelihood ratio is
\begin{align}
\frac{p_{\mathrm H}(Y)}{p_{\mathrm G}(Y)}
&=
M^{-nN}
\prod_{j=0}^{n-1}
\frac{(M-j-1)!}{(M-N-j-1)!}
\exp \left[\Tr (YY^\dagger)\right]
\det\!\left(I_n-\frac{1}{M}YY^\dagger\right)^{M-n-N}.
\label{eq:rectangular-likelihood-ratio}
\end{align}
Here, the normalizing factor can be written as
\begin{align}
M^{-nN}
\prod_{j=0}^{n-1}
\frac{(M-j-1)!}{(M-N-j-1)!}
&=
\prod_{j=0}^{n-1}
\prod_{r=1}^{N}
\left(1-\frac{j+r}{M}\right).
\label{eq:normalizing-ratio-bound}
\end{align}

Let $\lambda_1,\ldots,\lambda_n$ be the eigenvalues of $YY^\dagger$, and set $a:=n+N$.
On the support of the Haar density, $0\leq\lambda_i<M$.
We then obtain
\begin{align}
\log
\frac{p_{\mathrm H}(Y)}{p_{\mathrm G}(Y)}
= \sum_{j=0}^{n-1}
\sum_{r=1}^{N}
\log\!\left(1-\frac{j+r}{M}\right) + 
\sum_{i=1}^{n}
\left[
\lambda_i
+
(M - n - N)\log\!\left(1-\frac{\lambda_i}{M}\right)
\right]. \label{eq:log-ratio-equality}
\end{align}
To bound the latter term, define, for $0\leq\lambda<M$,
\begin{align}
f(\lambda)
&:=
\lambda
+
(M-a)\log\!\left(1-\frac{\lambda}{M}\right)  ,
\label{eq:eigenvalue-function}
\end{align}
whose derivative is given by
\begin{align}
f'(\lambda)
&=
1-\frac{M-a}{M-\lambda}
=
\frac{a-\lambda}{M-\lambda}.
\label{eq:eigenvalue-function-derivative}
\end{align}
Thus, $f$ is maximized at $\lambda=a$ whenever $a<M$, which is guaranteed by choosing $C_0$ sufficiently large in the condition $M\geq C_0 n^2N$.
Then, it holds that
\begin{align}
\log\frac{p_{\mathrm H}(Y)}{p_{\mathrm G}(Y)}
&\leq
\sum_{j=0}^{n-1}
\sum_{r=1}^{N}
\log\!\left(1-\frac{j+r}{M}\right)
+
nf(a)
=
\sum_{k=1}^{\infty}
\frac{1}{kM^k}
\left[
\frac{na^{k+1}}{k+1}
-
\sum_{j=0}^{n-1}
\sum_{r=1}^{N}
(j+r)^k
\right].
\label{eq:likelihood-power-series}
\end{align}
The equality follows from the power-series expansion $\log(1-x)=-\sum_{k=1}^{\infty}x^k/k$, which is absolutely convergent for $|x|<1$. 
Indeed, since $M\geq2a$ and $j+r\leq a-1$, we have
\begin{align}
\sum_{j=0}^{n-1}\sum_{r=1}^{N}
\log\!\left(1-\frac{j+r}{M}\right)
&=
-\sum_{k=1}^{\infty}
\frac{1}{kM^k}
\sum_{j=0}^{n-1}\sum_{r=1}^{N}(j+r)^k,\label{eq:log-expansion}
\end{align}
Moreover, using the same expansion, 
\begin{align}
f(a)
=
a-(M-a)
\sum_{k=1}^{\infty} \frac{a^k}{kM^k}
=
a - \sum_{k=1}^{\infty}\frac{a^k}{kM^{k-1}}
+ \sum_{k=1}^{\infty}\frac{a^{k+1}}{kM^k}
=
\sum_{k=1}^{\infty} \frac{a^{k+1}}{k(k+1)M^k}.
\end{align}
Multiplying the expansion of $f(a)$ by $n$ and combining the two expressions then gives
\begin{align}
\sum_{j=0}^{n-1}\sum_{r=1}^{N}
\log\!\left(1-\frac{j+r}{M}\right)
+nf(a)
=
\sum_{k=1}^{\infty}
\frac{1}{kM^k}
\left[
\frac{na^{k+1}}{k+1}
-
\sum_{j=0}^{n-1}\sum_{r=1}^{N}(j+r)^k
\right].
\end{align}

Next, for every $j\in\{0,\ldots,n-1\}$ and $k\geq1$, monotonicity of $x\mapsto(j+x)^k$ implies 
\begin{align}
\int_{r-1}^{r}(j+x)^k dx
\leq
(j+r)^k
\qquad \Rightarrow \qquad
\sum_{r=1}^{N}(j+r)^k
\geq
\int_0^N(j+x)^k\,dx
=
\frac{(N+j)^{k+1}-j^{k+1}}{k+1}.
\end{align}
Consequently, we have
\begin{align}
\frac{a^{k+1}}{k+1}
-
\sum_{r=1}^{N}(j+r)^k
&\leq
\frac{
a^{k+1}-(N+j)^{k+1}+j^{k+1}
}{k+1}
\leq
(n-j)a^k+\frac{j}{k+1}a^k
\leq
na^k.
\end{align}
In the second inequality, we used the mean-value theorem and
$a-(N+j)=n-j$, together with $j^{k+1}\leq ja^k$.
Summing over $j$ therefore yields
\begin{align}
\frac{na^{k+1}}{k+1}
-
\sum_{j=0}^{n-1}
\sum_{r=1}^{N}(j+r)^k
&\leq
n^2a^k.
\end{align}
Substituting this bound into Eq.~\eqref{eq:likelihood-power-series}, we obtain
\begin{align}
\log\frac{p_{\mathrm H}(Y)}{p_{\mathrm G}(Y)}
&\leq
n^2
\sum_{k=1}^{\infty}
\frac{1}{k}
\left(\frac{a}{M}\right)^k
=
n^2\log\!\left(\frac{1}{1-a/M}\right).
\label{eq:likelihood-log-bound}
\end{align}

Finally, for $M \geq C_0n^2N$ with sufficiently large $C_0$, we have $M\geq2a$, which implies that 
\begin{align}
\log\!\left(\frac{1}{1-a/M}\right)
&\leq
\frac{2a}{M}.
\end{align}
Then it follows that
\begin{align}
\log\frac{p_{\mathrm H}(Y)}{p_{\mathrm G}(Y)}
&\leq
\frac{2n^2(n+N)}{M}
\leq
\frac{4n^2N}{M},
\label{eq:improved-likelihood-log-bound}
\end{align}
where the last inequality uses $n\leq N$.
Under the mode-number condition $M\geq C_0n^2N$, we therefore have
\begin{align}
\frac{p_{\mathrm H}(Y)}{p_{\mathrm G}(Y)}
&\leq
\exp\!\left(\frac{4}{C_0}\right).
\label{eq:likelihood-ratio-final-bound}
\end{align}
After increasing $C_0$ if necessary, the right-hand side is bounded by a universal constant, which we denote by $C_H$.
This proves Lemma~\ref{lem:rectangular-domination}.

\section{Randomized marginal evaluation (proof of Proposition~\ref{prop:randomized-marginal-complexity})}
\label{app:randomized-marginals}

In this appendix, we prove Proposition~\ref{prop:randomized-marginal-complexity}.
We first construct, from one random coloring, a real signed function whose expectation over the coloring is $q_{d,U}$.
For a fixed coloring, we derive its exact prefix marginals in Lemma~\ref{lem:colored-exact-prefix} and bound the corresponding preprocessing and evaluation costs in Lemma~\ref{lem:colored-one-realization-cost}.
We then determine in Lemma~\ref{lem:colored-average-error} how many independent colorings to be averaged are necessary to obtain $\widehat q_{d,U}$ with the desired error bound.
Multiplying the single-coloring preprocessing and evaluation costs by this number of colorings yields the claimed complexity bounds.
The coefficient identity used in the exact prefix formula is proved in the final subsection.

We begin with the centered expansion of $q_{d,U}$ to identify the terms to which color coding will be applied.
Put $V:=\sqrt M\,U_{[M],[N]}$ and $B_{uv}^{(l)}:=V_{u,l}\overline{V_{v,l}}-\delta_{uv}$ for $u,v\in[M]$.
We first assume $1\leq d\leq n$; the case $d=0$ is treated in the proof of Proposition~\ref{prop:randomized-marginal-complexity}.
For nonnegative integers $m,s$, write $(m)_s=m(m-1)\cdots(m-s+1)$, with $(m)_0=1$ and $(m)_s=0$ for $s>m$.
For a collision-free string $\bm r$, Proposition~\ref{prop:exact-decomposition} and Lemma~\ref{lem:centered-expansion} give the following expansion of the target function $q_{d,U}$:
\begin{align}
q_{d,U}(\bm r)
={}&\frac1{M^n}
+\frac1{M^n}\sum_{j=1}^d\frac1{(N)_j}
\sum_{\substack{J\subseteq[N],\,R\subseteq[n]\\\abs{J}=\abs{R}=j}}
\sum_{\sigma,\tau\in\operatorname{Bij}(J,R)}
\prod_{l\in J} B_{r_{\sigma(l)},r_{\tau(l)}}^{(l)}.
\label{eq:uncolored-q-expansion}
\end{align}

Here $J$ selects $j$ input columns, $R$ selects $j$ positions in $\bm r$, and the two bijections specify the row indices in each centered factor.
Thus, for $l\in J$, the values $\sigma(l)$ and $\tau(l)$ are positions in the output string, whereas $r_{\sigma(l)}$ and $r_{\tau(l)}$ are the corresponding physical-mode labels.
Because $\bm r$ is collision-free, the modes $\{r_a:a\in R\}$ are distinct and $\delta_{r_a,r_b}=\delta_{ab}$; hence the algebraic identity in Lemma~\ref{lem:centered-expansion} applies to these rows of $V$.
Each choice of $(J,R,\sigma,\tau)$ determines one product in Eq.~\eqref{eq:uncolored-q-expansion}.
The Hoeffding coefficient from Proposition~\ref{prop:exact-decomposition} and the normalization in Lemma~\ref{lem:centered-expansion} give the common coefficient of these order-$j$ products:
\begin{align}
\frac{\binom nj}{\binom Nj}\frac{(n-j)!}{n!M^n}
&=\frac1{M^n(N)_j}.
\label{eq:colored-coefficient-normalization}
\end{align}

The prefix marginal sums over all collision-free completions, and the physical-mode set $\{r_a:a\in R\}$ ranges over subsets of $[M]$.
Separately enumerating this set and the input set $J$ can therefore involve $\binom Mj\binom Nj$ choices even before summing over the two bijections.
A direct evaluation through order $d$ therefore has a bound of the form $(MN)^{O(d)}$, whose exponent grows with the truncation degree.

To avoid enumerating all selected input and output sets, we use random color coding~\cite{alon1995color, alon2008biomolecular}.
Specifically, for each order $j$, we randomly assign $j$ colors to all input columns and physical output modes, and retain only those products for which the selected input columns have distinct colors and, likewise, the selected physical output modes have distinct colors.
Dividing each retained summand by the probability of satisfying these color conditions preserves its original coefficient in expectation.
The color constraints allow us to evaluate the retained sums in time $\poly(M,N)2^{O(d)}$, at the cost of introducing randomness into the resulting signed approximation to $q_{d,U}$.
We derive exact expressions for its prefix marginals together with their evaluation cost, and then bound the additional randomized error by averaging over independent colorings, yielding the claimed complexity and error guarantees.

\subsection{Constructing the randomized signed function}

For each order $1\leq j\leq d$, assign independently to every input column $l\in[N]$ a uniformly random color $h_j(l)\in[j]$, and to every physical output mode $u\in[M]$ a uniformly random color $g_j(u)\in[j]$.
One draw of the entire family of maps $\{(h_j,g_j)\}_{j=1}^d$ is called a \textit{coloring}.
A set $J$ of $j$ input columns is called $h_j$-\textit{colorful} if $h_j$ is injective on $J$, or equivalently if $h_j(J)=[j]$.
A set of $j$ output modes is called $g_j$-\textit{colorful} under the analogous condition.
The modes $\{r_a:a\in R\}$ are distinct because $\bm r$ is collision-free, so their colors are independent and uniform.
For either selected set, exactly $j!$ of the $j^j$ color assignments are colorful.
Since the maps $h_j$ and $g_j$ are independent, we obtain
\begin{align}
\Prob_{\mathrm{col}}\!\left[
h_j(J)=[j],\ 
g_j(\{r_a:a\in R\})=[j]
\right]
&=\left(\frac{j!}{j^j}\right)^2.
\label{eq:colored-retention-probability}
\end{align}

Each map $g_j:[M]\to[j]$ is drawn once on the physical-mode labels and is subsequently evaluated at the modes appearing in every output string and every prefix query.
For a fixed coloring, retain in Eq.~\eqref{eq:uncolored-q-expansion} the summands whose selected input and output sets are both colorful.
Define the corresponding signed function by
\begin{align}
\widehat q_{d,U}^{(1)}(\bm r)
={}&\frac1{M^n}
+\frac1{M^n}\sum_{j=1}^d
\frac{j^{2j}}{(j!)^2(N)_j}
\sum_{\substack{J\subseteq[N],\,\abs{J}=j\\h_j(J)=[j]}}
\sum_{\substack{R\subseteq[n],\,\abs{R}=j\\
g_j(\{r_a:a\in R\})=[j]}}
\sum_{\sigma,\tau\in\operatorname{Bij}(J,R)}
\prod_{l\in J}
B_{r_{\sigma(l)},r_{\tau(l)}}^{(l)}.
\label{eq:colored-q-one}
\end{align}
The factor $j^{2j}/(j!)^2$ is the reciprocal of the retention probability in Eq.~\eqref{eq:colored-retention-probability}.
Here, $\widehat q_{d,U}^{(1)}$ is real, because complex conjugation exchanges $\sigma$ and $\tau$, and the sum contains both ordered pairs.

For fixed $U,\bm r$, write the restricted sums in Eq.~\eqref{eq:colored-q-one} as unrestricted sums multiplied by the two color indicators.
By linearity of expectation, Eq.~\eqref{eq:colored-retention-probability} gives
\begin{align}
\E_{\mathrm{col}}\!\left[\widehat q_{d,U}^{(1)}(\bm r)\right] 
&=\frac1{M^n}
+\frac1{M^n}\sum_{j=1}^d\frac{j^{2j}}{(j!)^2(N)_j}
\sum_{\substack{J\subseteq[N],\,R\subseteq[n]\\\abs{J}=\abs{R}=j}}
\sum_{\sigma,\tau\in\operatorname{Bij}(J,R)}
\E_{\mathrm{col}}\!\left[
\mathbf 1_{\{h_j(J)=[j]\}}
\mathbf 1_{\{g_j(\{r_a:a\in R\})=[j]\}}
\right]
\prod_{l\in J}B_{r_{\sigma(l)},r_{\tau(l)}}^{(l)}
\nonumber\\
&=\frac1{M^n}
+\frac1{M^n}\sum_{j=1}^d\frac{j^{2j}}{(j!)^2(N)_j}
\left(\frac{j!}{j^j}\right)^2
\sum_{\substack{J\subseteq[N],\,R\subseteq[n]\\\abs{J}=\abs{R}=j}}
\sum_{\sigma,\tau\in\operatorname{Bij}(J,R)}
\prod_{l\in J}B_{r_{\sigma(l)},r_{\tau(l)}}^{(l)}
\nonumber\\
&=q_{d,U}(\bm r).  \label{eq:unbiasedness}
\end{align}

In the next two subsections, we fix one coloring $\{(h_j,g_j)\}_{j=1}^d$ and compute the exact prefix marginals of the corresponding function $\widehat q_{d,U}^{(1)}$ in Eq.~\eqref{eq:colored-q-one}.
The sampler will use $\widehat q_{d,U}$, obtained later by averaging functions generated by independent colorings.
By linearity, its marginal at any prefix is the average of the exact marginals computed for these colorings.
We use the same collection of colorings for every prefix query, so all queried marginals belong to this single function $\widehat q_{d,U}$.

\subsection{Exact prefix marginal for a fixed coloring}

To sum Eq.~\eqref{eq:colored-q-one} over the completions of a prefix, we group its products by the selected physical modes.
The products in each group can then be summed by a coefficient identity, while a completion count determines how often each product occurs.

Fix a collision-free prefix $\mathbf a=(a_1,\ldots,a_\ell)$, where $0\leq\ell\leq n$, and write
$\Omega_{\mathbf a}:=[M]\setminus\{a_1,\ldots,a_\ell\}$ for the physical modes available to the suffix $(r_{\ell+1},\ldots,r_n)$.
For a collision-free string $\bm r=(a_1,\ldots,a_\ell,r_{\ell+1},\ldots,r_n)$ extending the prefix and a set $R\subseteq[n]$ of $j$ selected positions, put $\Omega:=\{r_s:s\in R\}$ (here $r_i = a_i$ when $i \in [\ell])$.
The selected positions may lie in either the prefix or the suffix, so $\Omega$ may contain modes already present in the prefix.
In expressions indexed by $\Omega$, we use $\sigma,\tau:J\to\Omega$ for the bijections obtained by replacing each selected position by its physical mode, so the corresponding product is $\prod_{l\in J}B_{\sigma(l),\tau(l)}^{(l)}$.
For fixed $J,\Omega,\sigma,\tau$, every completion containing $\Omega$ contributes the same product, regardless of the positions of these modes or the other entries.
Hence, summing this product over such completions amounts to multiplying it by their number.

Let $p:=\abs{\Omega\cap\{a_1,\ldots,a_\ell\}}=\abs{\Omega\setminus\Omega_{\mathbf a}}$ be the number of selected modes already present in the prefix.
For fixed $j$, the number of completions containing $\Omega$ depends on this set only through $p$.
Thus, for each $p$, we need the sum of the matrix-dependent products over all colorful $J,\Omega$ and their two bijections; this sum can then be multiplied by their common completion count.

Fix an order $j$ and abbreviate $h=h_j$ and $g=g_j$.
To evaluate the sum required for the marginal, we use coefficient extraction with formal variables that track the selected colors and the number of selected prefix modes.
Specifically, we introduce commuting formal variables $\bm x = (x_1,\ldots,x_j)$ for the output colors and $\bm y  = (y_1,\ldots,y_j)$ for the input colors, together with one factor of $z$ for each selected prefix mode.
We impose $x_c^2=y_c^2=0$ and discard powers of $z$ above degree $j$~\cite{koutis2008faster, feinsilver2011zeons}; these are formal multiplication rules rather than numerical assignments.
These operations preserve the coefficients we seek, since multiplication cannot remove a repeated color variable or decrease the $z$-degree, and therefore introduce no additional approximation error.
For a monomial $w$, we write $[w]F$ for its coefficient in $F$.

The coefficient identity below expresses the required sums for marginal through one inverse determinant.
Its permanent expansion, proved in the final subsection of this appendix, accounts for both the selected sets and the two bijections~\cite{goulden1992immanants, chabaud2022quantum, macmahon2004combinatory}.
Define
\begin{align}
\mathcal P_j(\bm x,\bm y,z)
:={}&
\det\!\left[
\delta_{uv}
-x_{g(u)}z^{\mathbf 1_{\{u\notin\Omega_{\mathbf a}\}}}
\sum_{l=1}^N y_{h(l)}B_{uv}^{(l)}
\right]_{u,v\in[M]}^{-1}.
\label{eq:colored-prefix-generating-polynomial}
\end{align}
Here $u\notin\Omega_{\mathbf a}$ means that $u$ is a fixed prefix mode, so selecting it contributes one power of $z$.
The determinant has the form $1+Q$, where every monomial in $Q$ contains an $x$ variable.
A product of $j+1$ such monomials must repeat one of the $j$ variables $x_1,\ldots,x_j$, so $Q^{j+1}=0$.
The reciprocal is therefore the finite polynomial $\sum_{k=0}^j(-Q)^k$.

For $0\leq p\leq j$, the required coefficient is
\begin{align}
[x_1\cdots x_jy_1\cdots y_jz^p]\mathcal P_j
&={}
\sum_{\substack{\Omega\subseteq[M],\,\abs{\Omega}=j\\
\abs{\Omega\setminus\Omega_{\mathbf a}}=p,\ g(\Omega)=[j]}}
\sum_{\substack{J\subseteq[N],\,\abs{J}=j\\h(J)=[j]}}
\sum_{\sigma,\tau\in\operatorname{Bij}(J,\Omega)}
\prod_{l\in J} B_{\sigma(l),\tau(l)}^{(l)}.
\label{eq:colored-coefficient-identity}
\end{align}
The $x$- and $y$-coefficients select one mode and one column of each color, while $z^p$ selects products with exactly $p$ prefix modes.
The following lemma combines these coefficients with the completion count for each $j,p$ and the prefactors in Eq.~\eqref{eq:colored-q-one} to obtain the exact prefix marginal.

\begin{lemma}[Exact prefix formula]
\label{lem:colored-exact-prefix}
Fix a coloring $\{(h_j,g_j)\}_{j=1}^d$ and a collision-free prefix $\mathbf a=(a_1,\ldots,a_\ell)$.
With $\mathcal P_j$ defined in Eq.~\eqref{eq:colored-prefix-generating-polynomial}, the corresponding prefix marginal is
\begin{align}
&\sum_{\substack{\bm r\in[M]_{\neq}^{n}\\(r_1,\ldots,r_\ell)=\mathbf a}}
\widehat q_{d,U}^{(1)}(\bm r)
\nonumber\\
&\qquad={}
\frac{(M-\ell)_{n-\ell}}{M^n}
+\sum_{j=1}^d
\frac{j^{2j}}{(j!)^2M^n(N)_j}
\sum_{p=\max\{0,j-(n-\ell)\}}^{\min\{j,\ell\}}
(n-\ell)_{j-p}(M-\ell-j+p)_{n-\ell-j+p}
\cdot [x_1\cdots x_jy_1\cdots y_jz^p]\mathcal P_j.
\label{eq:colored-prefix-marginal-formula}
\end{align}
\end{lemma}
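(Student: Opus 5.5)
We prove Eq.~\eqref{eq:colored-prefix-marginal-formula} by summing Eq.~\eqref{eq:colored-q-one} term by term over completions, taking the coefficient identity Eq.~\eqref{eq:colored-coefficient-identity} as given (it is proved in the final subsection). We handle the constant term and each order $j$ separately.

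\textbf{Constant term.} The term $1/M^n$ is independent of $\bm r$. The number of collision-free completions of $\mathbf a$ is the number of injective assignments of positions $\ell+1,\ldots,n$ into $\Omega_{\mathbf a}$, namely $(M-\ell)_{n-\ell}$. This gives the first term.

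\textbf{Order-$j$ terms: reindexing by physical modes.} Fix $1\le j\le d$. For each completion $\bm r$, a summand is indexed by $(J,R,\sigma,\tau)$. Since $\bm r$ is collision-free, the map $s\mapsto r_s$ is injective on $R$. Hence $(R,\sigma,\tau)$ corresponds bijectively to $(\Omega,\sigma',\tau')$ with $\Omega=\{r_s:s\in R\}$ and $\sigma',\tau'\in\operatorname{Bij}(J,\Omega)$. Under this correspondence the product $\prod_{l\in J}B^{(l)}_{r_{\sigma(l)},r_{\tau(l)}}$ equals $\prod_{l\in J}B^{(l)}_{\sigma'(l),\tau'(l)}$, and the color condition $g_j(\{r_a:a\in R\})=[j]$ becomes $g(\Omega)=[j]$.

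Exchanging the sum over completions with the sum over $(J,\Omega,\sigma',\tau')$, the marginal becomes
\[
\sum_{\Omega,J,\sigma',\tau'} N_{\mathbf a}(\Omega)\prod_{l\in J}B^{(l)}_{\sigma'(l),\tau'(l)},
\]
where $N_{\mathbf a}(\Omega)$ is the number of collision-free completions $\bm r$ of $\mathbf a$ whose entries contain every mode of $\Omega$. Here we use that each such $\bm r$ determines a unique $R$, namely the set of positions at which the modes of $\Omega$ occur.

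\textbf{The completion count.} This is the main bookkeeping obstacle. Let $p=|\Omega\setminus\Omega_{\mathbf a}|$ be the number of modes of $\Omega$ already in the prefix. These $p$ modes appear in $\bm r$ automatically. The remaining $j-p$ modes of $\Omega$ lie in $\Omega_{\mathbf a}$ and must be placed injectively into the $n-\ell$ suffix positions, which can be done in $(n-\ell)_{j-p}$ ways. The other $n-\ell-(j-p)$ suffix positions are then filled injectively with modes from $\Omega_{\mathbf a}\setminus\Omega$. That set has $M-\ell-(j-p)$ elements, giving $(M-\ell-j+p)_{n-\ell-j+p}$ ways. Thus $N_{\mathbf a}(\Omega)=(n-\ell)_{j-p}(M-\ell-j+p)_{n-\ell-j+p}$, which depends on $\Omega$ only through $p$.

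The range of $p$ is forced by the constraints $p\le\ell$, $p\le j$, and $j-p\le n-\ell$. Outside this range the falling factorial $(n-\ell)_{j-p}$ vanishes, so restricting the sum to $\max\{0,j-(n-\ell)\}\le p\le\min\{j,\ell\}$ loses nothing.

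\textbf{Assembling.} Group the $\Omega$ by $p$. For each $p$, the inner sum over colorful $\Omega$ with $|\Omega\setminus\Omega_{\mathbf a}|=p$, colorful $J$, and bijections $\sigma',\tau'$ is exactly $[x_1\cdots x_jy_1\cdots y_jz^p]\mathcal P_j$ by Eq.~\eqref{eq:colored-coefficient-identity}. Multiplying by the prefactor $j^{2j}/((j!)^2M^n(N)_j)$ from Eq.~\eqref{eq:colored-q-one} and summing over $j$ yields Eq.~\eqref{eq:colored-prefix-marginal-formula}.

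The only delicate points are the bijective reindexing, which requires collision-freeness, and making sure no completion is double counted; both are handled by the uniqueness of $R$ noted above.
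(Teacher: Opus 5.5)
Your proposal is correct and follows the paper's proof essentially step for step: the constant-term count $(M-\ell)_{n-\ell}$, the reindexing $R\mapsto\Omega=\{r_s:s\in R\}$ justified by collision-freeness, the swap of summations giving the completion count $(n-\ell)_{j-p}(M-\ell-j+p)_{n-\ell-j+p}$ that depends only on $p$, and the final appeal to Eq.~\eqref{eq:colored-coefficient-identity}. One small wording point: for $p>\min\{j,\ell\}$ it is not the falling factorial that vanishes but the fact that no admissible $\Omega$ exists, although you already state these constraints correctly.
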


\begin{proof}
Substituting Eq.~\eqref{eq:colored-q-one} into the left-hand side of Eq.~\eqref{eq:colored-prefix-marginal-formula}, we first sum the constant term over the $(M-\ell)_{n-\ell}$ collision-free completions, which gives $(M-\ell)_{n-\ell}/M^n$.
For each $1\leq j\leq d$, the prefactor $j^{2j}/((j!)^2M^n(N)_j)$ is independent of the completion and can be taken outside its sum.
We now evaluate the remaining sums for fixed $j$.

For a completion $\bm r$ and a selected position set $R$, put $\Omega=\{r_s:s\in R\}$.
Since $\bm r$ is collision-free, composing the two bijections with $s\mapsto r_s$ replaces their values in $R$ by the corresponding modes in $\Omega$, preserving both the product and its color conditions.
Conversely, fix $\Omega$ and two bijections from $J$ to $\Omega$.
Then, for any completion containing $\Omega$, the unique positions of its modes determine $R=\{s:r_s\in\Omega\}$ and recover the two original bijections.
The first equality below replaces $R$ by $\Omega$ for each fixed completion.
The second reverses the order of summation, so that we sum first over $\Omega$ and then over the completions containing it:
\begin{align}
&\sum_{\substack{\bm r\in[M]_{\neq}^{n}\\(r_1,\ldots,r_\ell)=\mathbf a}}
\sum_{\substack{J\subseteq[N],\,\abs{J}=j\\h_j(J)=[j]}}
\sum_{\substack{R\subseteq[n],\,\abs{R}=j\\g_j(\{r_s:s\in R\})=[j]}}
\sum_{\sigma,\tau\in\operatorname{Bij}(J,R)}
\prod_{l\in J}B_{r_{\sigma(l)},r_{\tau(l)}}^{(l)}
\nonumber\\
&\quad={}
\sum_{\substack{\bm r\in[M]_{\neq}^{n}\\(r_1,\ldots,r_\ell)=\mathbf a}}
\sum_{\substack{J\subseteq[N],\,\abs{J}=j\\h_j(J)=[j]}}
\sum_{\substack{\Omega\subseteq\{r_1,\ldots,r_n\},\,\abs{\Omega}=j\\g_j(\Omega)=[j]}}
\sum_{\sigma,\tau\in\operatorname{Bij}(J,\Omega)}
\prod_{l\in J}B_{\sigma(l),\tau(l)}^{(l)}
\nonumber\\
&\quad={}
\sum_{\substack{\Omega\subseteq[M],\,\abs{\Omega}=j\\g_j(\Omega)=[j]}}
\left(\sum_{\substack{\bm r\in[M]_{\neq}^{n}\\(r_1,\ldots,r_\ell)=\mathbf a\\\Omega\subseteq\{r_1,\ldots,r_n\}}}1\right)
\sum_{\substack{J\subseteq[N],\,\abs{J}=j\\h_j(J)=[j]}}
\sum_{\sigma,\tau\in\operatorname{Bij}(J,\Omega)}
\prod_{l\in J}B_{\sigma(l),\tau(l)}^{(l)}.
\label{eq:colored-prefix-reindexing}
\end{align}
Here, the sum of $1$ counts the completions for each fixed $\Omega$; the remaining factors no longer depend on the completion.

For a fixed set $\Omega$ of size $j$, we now count strings $\bm r$ that begin with $\mathbf a$ and contain every mode in $\Omega$, without repetitions.
The $p=\abs{\Omega\cap\{a_1,\ldots,a_\ell\}}$ modes occurring in the prefix are already included.
We therefore only need to place the other $j-p$ modes in the $n-\ell$ suffix positions.
If there are fewer than $j-p$ suffix positions, no such string exists.
Otherwise, placing these distinct modes one at a time gives $n-\ell$ choices for the first position, $n-\ell-1$ for the next, and so on, yielding $(n-\ell)_{j-p}$ placements.

Once these modes have been placed, every mode in $\Omega$ is present in the string.
The remaining positions can therefore contain any modes that have not yet been used.
The used modes consist of all $\ell$ prefix modes and the $j-p$ modes just placed, leaving $M-\ell-j+p$ available modes.
Filling the remaining $n-\ell-j+p$ positions in order, without repeating a mode, gives $(M-\ell-j+p)_{n-\ell-j+p}$ possibilities.
Every admissible string is obtained exactly once by these two steps, so their number is $(n-\ell)_{j-p}(M-\ell-j+p)_{n-\ell-j+p}$.

For example, consider $M=5$, $n=3$, $\mathbf a=(1)$, and $\Omega=\{1,4\}$.
Mode $1$ is fixed by the prefix, while mode $4$ can occupy either of the two suffix positions.
For each placement of mode $4$, the remaining entry can be $2$, $3$, or $5$, yielding $2\cdot3=6$ collision-free strings extending $\mathbf a$ and containing $\Omega$.
Here $\ell=1$, $j=2$, and $p=1$, so the general formula gives
$(n-\ell)_{j-p}(M-\ell-j+p)_{n-\ell-j+p}=(2)_1(3)_1=6$,
in agreement with the direct count.

For fixed $j$, the completion count depends on $\Omega$ only through $p$ and is independent of $J,\sigma,\tau$.
We therefore group the sets $\Omega$ in Eq.~\eqref{eq:colored-prefix-reindexing} by $p$ and take this count outside the sums over sets and bijections.
The conditions $0\leq p\leq\min\{j,\ell\}$ and $j-p\leq n-\ell$ give the range of the resulting $p$-sum.
After this grouping, the sums over $\Omega$, $J$, and the two bijections have the form in Eq.~\eqref{eq:colored-coefficient-identity}.
Restoring the prefactor and the sum over $j$ from Eq.~\eqref{eq:colored-q-one}, together with the constant term, makes this correspondence explicit:
\begin{align}
\sum_{\substack{\bm r\in[M]_{\neq}^{n}\\(r_1,\ldots,r_\ell)=\mathbf a}}
\widehat q_{d,U}^{(1)}(\bm r)
={}
\frac{(M-\ell)_{n-\ell}}{M^n}
&+\sum_{j=1}^d\frac{j^{2j}}{(j!)^2M^n(N)_j}
\sum_{p=\max\{0,j-(n-\ell)\}}^{\min\{j,\ell\}}
(n-\ell)_{j-p}(M-\ell-j+p)_{n-\ell-j+p}
\nonumber\\
&\qquad\qquad\qquad\qquad\qquad\times
\!\!\!\!\!\!\sum_{\substack{\Omega\subseteq[M],\,\abs{\Omega}=j\\\abs{\Omega\setminus\Omega_{\mathbf a}}=p,\ g_j(\Omega)=[j]}}
\sum_{\substack{J\subseteq[N],\,\abs{J}=j\\h_j(J)=[j]}}
\sum_{\sigma,\tau\in\operatorname{Bij}(J,\Omega)}
\prod_{l\in J}B_{\sigma(l),\tau(l)}^{(l)}.
\label{eq:colored-prefix-expanded-marginal}
\end{align}
For each $j$ and $p$, the three innermost sums in Eq.~\eqref{eq:colored-prefix-expanded-marginal} are exactly the right-hand side of Eq.~\eqref{eq:colored-coefficient-identity}, with $h=h_j$ and $g=g_j$.
Replacing them by $[x_1\cdots x_jy_1\cdots y_jz^p]\mathcal P_j$ proves Eq.~\eqref{eq:colored-prefix-marginal-formula}.

\end{proof}

\subsection{Marginal-evaluation complexity for a fixed coloring}

Lemma~\ref{lem:colored-exact-prefix} represents the prefix marginals of $\widehat q_{d,U}^{(1)}$ in terms of coefficients of the inverse determinant $\mathcal P_j$.
For use in the sampling algorithm, we now seek an efficient procedure for evaluating these coefficients, and hence the marginals of $\widehat q_{d,U}^{(1)}$.
The following lemma provides such a procedure.



\begin{lemma}[Evaluation cost for a fixed coloring]
\label{lem:colored-one-realization-cost}
Assume $1\leq d\leq n$ and fix the coloring maps $\{(h_j,g_j)\}_{j=1}^d$.
After $MN^2 2^{O(d)}$ arithmetic operations of preprocessing, the exact prefix marginal of $\widehat q_{d,U}^{(1)}$ at any collision-free prefix can be evaluated using $N^3 2^{O(d)}$ arithmetic operations.
\end{lemma}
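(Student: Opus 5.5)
The plan is to reduce the $M\times M$ inverse determinant $\mathcal P_j$ in Eq.~\eqref{eq:colored-prefix-generating-polynomial} to an $N\times N$ determinant over a small commutative ring. We then compute that determinant with $O(j)$ matrix products. Finally we plug the extracted coefficients into Lemma~\ref{lem:colored-exact-prefix}.

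\textbf{Step 1: low-rank form of the matrix.} Fix $j$ and write $\mathcal R_j$ for the ring of polynomials in $\bm x,\bm y,z$ with $x_c^2=y_c^2=0$ and $z^{j+1}=0$. As a vector space it has dimension $4^j(j+1)$. Put $V=\sqrt M\,U_{[M],[N]}$, $Y=\operatorname{diag}(y_{h(l)})_{l\in[N]}$ and $s=\sum_{l}y_{h(l)}$. Let $D=\operatorname{diag}(x_{g(u)}z^{\mathbf 1\{u\notin\Omega_{\mathbf a}\}})_{u\in[M]}$. Since $B^{(l)}_{uv}=V_{u,l}\overline{V_{v,l}}-\delta_{uv}$, we have $\sum_l y_{h(l)}B^{(l)}=VYV^\dagger-sI$. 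The matrix in $\mathcal P_j$ is therefore $(I+sD)-DVYV^\dagger$.

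The diagonal factor $I+sD$ is invertible in $\mathcal R_j$, because $sD$ is nilpotent. Sylvester's identity then gives
\begin{align}
\mathcal P_j=\prod_{u\in[M]}(1+s\,D_{uu})^{-1}\,\det\!\left(I_N-Y W\right)^{-1},
\qquad
W:=V^\dagger\operatorname{diag}\!\left(\frac{D_{uu}}{1+sD_{uu}}\right)V .
\end{align}
The weight $D_{uu}/(1+sD_{uu})$ depends on $u$ only through the pair (output color $g(u)$, prefix flag). So $W=\sum_{c\in[j]}\sum_{f\in\{0,1\}}w_{c,f}\,G_{c,f}$, where $w_{c,f}\in\mathcal R_j$ and $G_{c,f}=\sum_{u:\,g(u)=c,\ \mathrm{flag}(u)=f}\overline{V_{u,\cdot}}^{\,T}V_{u,\cdot}$ is a complex $N\times N$ matrix.

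\textbf{Step 2: preprocessing and per-query update.} In preprocessing we compute, for every $j\le d$ and color $c$, the full color Gram matrices $G_c=\sum_{u:g(u)=c}\overline{V_{u,\cdot}}^{\,T}V_{u,\cdot}$. This costs $O(MN^2)$ per order, so $O(dMN^2)$ in total. We also record the class sizes $|g^{-1}(c)|$.

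At a query with prefix $\mathbf a$, we form the prefix parts $G_{c,1}$ from the at most $\ell\le n$ prefix rows, at cost $O(nN^2)$. We then set $G_{c,0}=G_c-G_{c,1}$. The scalar factor $\prod_u(1+sD_{uu})^{-1}$ equals $\prod_{c}(1+sx_c)^{-n_{c,0}}(1+sx_cz)^{-n_{c,1}}$ in terms of class sizes $n_{c,f}$. Each power is a binomial series truncated by nilpotency, so this factor costs $2^{O(j)}$ ring operations. Assembling $W$ costs $O(jN^2)$ ring operations.

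\textbf{Step 3: the inverse determinant (main obstacle).} $\mathcal R_j$ is not a field, so Gaussian elimination with pivoting is unavailable. Instead we use that $A:=YW$ has every entry divisible by some $y$ variable. Any product of $j+1$ such entries repeats a $y_c$, so $A^{j+1}=0$. Hence
\begin{align}
\det(I-A)^{-1}=\exp\!\Big(\sum_{k=1}^{j}\tfrac1k\operatorname{tr}(A^k)\Big),
\end{align}
where the exponential is a finite series of length at most $j+1$ by nilpotency, and division by $k$ is legitimate since the coefficients are real or complex.

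Computing $A^2,\dots,A^j$ takes $j$ multiplications of $N\times N$ matrices over $\mathcal R_j$, i.e.\ $O(jN^3)$ ring multiplications. A ring multiplication is a truncated subset convolution in $\bm x,\bm y$ tensored with a degree-$j$ convolution in $z$. Done naively over the monomial basis it costs $(4^j(j+1))^2=2^{O(j)}$ operations; a Koutis-type zeta transform would be faster, but it is unnecessary for the stated bound.

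We then read off the $j+1$ coefficients $[x_1\cdots x_jy_1\cdots y_jz^p]\mathcal P_j$. Summing over $j\le d$ gives $N^3 2^{O(d)}$ per query; the $O(nN^2)$ update is absorbed since $n\le N$.

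\textbf{Step 4: assembling the marginal.} Inserting these coefficients into Eq.~\eqref{eq:colored-prefix-marginal-formula} requires only $O(d^2)$ falling factorials and scalar products. This yields the exact prefix marginal of $\widehat q^{(1)}_{d,U}$ within the claimed $N^32^{O(d)}$ cost after $MN^22^{O(d)}$ preprocessing.

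The delicate points to verify carefully are two. First, Sylvester's identity and the inverse of $I+sD$ must be valid in the nilpotent ring. Second, discarding $z^{>j}$ must commute with every operation used; it does, since the truncation is a ring quotient.
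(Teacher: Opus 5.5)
Your proposal is correct and follows essentially the same route as the paper: a Sylvester reduction of the $M\times M$ inverse determinant to an $N\times N$ one, color-class Gram sums precomputed in $O(MN^2)$ per order with $O(nN^2)$ prefix corrections per query, and the finite trace--log/exponential expansion evaluated by naive monomial-basis multiplication at $2^{O(j)}$ cost. The differences are cosmetic: your weights $D_{uu}/(1+sD_{uu})$ collapse to $D_{uu}$ because $x_{g(u)}^2=0$ (this is the paper's observation $\mathsf D^{-1}\mathsf A=\mathsf A$), and you derive $A^{j+1}=0$ from the $y$ variables rather than the $x$ variables.
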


\begin{proof}

The proof proceeds in three steps.
First, we reduce the determinant defining $\mathcal P_j$ in Eq.~\eqref{eq:colored-prefix-generating-polynomial} from dimension $M$ to $N$ to lower the cost of the subsequent matrix operations.
Second, we precompute the sums shared across prefixes, so that a new query requires corrections only from its prefix modes.
Finally, we compute the required coefficients by matrix and polynomial multiplications and bound their cost.

\emph{Reduction of the determinant.---}
Fix an order $1\leq j\leq d$ and a collision-free prefix $\mathbf a=(a_1,\ldots,a_\ell)$, write $h=h_j$ and $g=g_j$, and abbreviate $s_u:=x_{g(u)}z^{\mathbf 1_{\{u\notin\Omega_{\mathbf a}\}}}$.
Since $\Omega_{\mathbf a}$ contains the modes outside the prefix, $s_u=x_{g(u)}$ outside the prefix and $s_u=x_{g(u)}z$ on the prefix.

Expanding $B_{uv}^{(l)}=V_{u,l}\overline{V_{v,l}}-\delta_{uv}$ separates the diagonal term from the products of entries of $V$.
Accordingly, write the matrix inside the determinant in Eq.~\eqref{eq:colored-prefix-generating-polynomial} as $\mathsf D-\mathsf A\mathsf F$.
Here $\mathsf D$ is the $M\times M$ diagonal matrix with $\mathsf D_{uv}:=\delta_{uv}(1+s_u\sum_{l=1}^Ny_{h(l)})$.
The matrices $\mathsf A$ and $\mathsf F$ have dimensions $M\times N$ and $N\times M$, respectively, and their entries are $\mathsf A_{u\alpha}:=s_uV_{u,\alpha}$ and $\mathsf F_{\alpha v}:=y_{h(\alpha)}\overline{V_{v,\alpha}}$, where $\alpha\in[N]$ indexes an input column.

Indeed, $(\mathsf A\mathsf F)_{uv}=s_u\sum_{\alpha=1}^Ny_{h(\alpha)}V_{u,\alpha}\overline{V_{v,\alpha}}$.
Because the sum involving $B_{uv}^{(l)}$ is subtracted in Eq.~\eqref{eq:colored-prefix-generating-polynomial}, the term $-\delta_{uv}$ in $B_{uv}^{(l)}$ produces the positive diagonal correction in $\mathsf D$, while the term $V_{u,l}\overline{V_{v,l}}$ produces $-\mathsf A\mathsf F$.
Moreover, $s_u^2=0$, so the inverse of the $u$th diagonal entry of $\mathsf D$ is $1-s_u\sum_{l=1}^Ny_{h(l)}$.
Multiplying this inverse by $\mathsf A_{u\alpha}=s_uV_{u,\alpha}$ leaves $\mathsf A_{u\alpha}$ unchanged, because the correction contains $s_u^2$.
Thus $\mathsf D^{-1}\mathsf A=\mathsf A$.

We now factor out $\mathsf D$ and apply Sylvester's determinant theorem, which holds for the commuting polynomial entries used here~\cite{chabaud2022quantum}.
Using $\mathsf D^{-1}\mathsf A=\mathsf A$ in the last equality gives
\begin{align}
\mathcal P_j
&=\det(\mathsf D)^{-1}\det(I_M-\mathsf D^{-1}\mathsf A\mathsf F)^{-1}
\nonumber\\
&=\det(\mathsf D)^{-1}\det(I_N-\mathsf F\mathsf D^{-1}\mathsf A)^{-1}
\nonumber\\
&=\det(\mathsf D)^{-1}\det(I_N-\mathsf F\mathsf A)^{-1}.
\label{eq:colored-sylvester-factorization}
\end{align}
The determinant now has dimension $N$, but its entries and the diagonal prefactor still involve all $M$ output modes.
Evaluating these expressions afresh would still require a sum or product over all $M$ modes for every prefix query.
We therefore isolate the contributions shared across prefixes, so that only sums over the prefix modes need to be computed anew.

\emph{Constructing the reduced matrix and diagonal factors.---}
To reuse these common contributions across prefixes, we precompute sums over all output modes in $O(MN^2)$ operations for each order $j$.
For each prefix query, we then combine the stored sums with corrections involving only the prefix modes to construct the entries of $\mathsf F\mathsf A$ and the factors of $\det(\mathsf D)^{-1}$ in $O(nN^2)$ operations.

We begin with the reduced matrix.
Direct multiplication gives
\begin{align}
(\mathsf F\mathsf A)_{\alpha\beta}
&=\sum_{u=1}^M\mathsf F_{\alpha u}\mathsf A_{u\beta}
=y_{h(\alpha)}\sum_{u=1}^Ms_uV_{u,\beta}\overline{V_{u,\alpha}}.
\end{align}
Here $\alpha,\beta\in[N]$ label the input columns, and the sum runs over the $M$ physical output modes.
To separate the sums that can be reused across queries, we group the modes whose $s_u$ contain the same color variable.

For a color $c\in[j]$, let $\Omega_c:=\{u\in[M]:g(u)=c\}$ be the set of all modes of color $c$.
Then the matrix entry can be rewritten as
\begin{align}\label{eq:colored-reduced-matrix-by-color}
(\mathsf F\mathsf A)_{\alpha\beta}
&=y_{h(\alpha)}\sum_{c=1}^j\sum_{u\in\Omega_c}s_uV_{u,\beta}\overline{V_{u,\alpha}}.
\end{align}
For $u\in\Omega_c$, the coloring fixes the variable in $s_u$ to $x_c$, while the prefix determines whether the additional factor $z$ is present.
The color classes $\Omega_c$ and the matrix-entry products remain fixed across prefix queries.

The diagonal prefactor can also be expressed through sums over the modes of each color.
Specifically, in expanding $\det(\mathsf D)^{-1}=\prod_{u=1}^M(1-s_u\sum_{l=1}^Ny_{h(l)})$, each factor contributes either its constant term $1$ or the term $-s_u\sum_{l=1}^Ny_{h(l)}$.
Suppose the second term is chosen from two factors corresponding to distinct modes $u,v\in\Omega_c$.
Then, both $s_u$ and $s_v$ contain $x_c$, so $s_us_v=0$ by $x_c^2=0$, regardless of whether these modes belong to the prefix.
Thus, within one color class, only the choice of $1$ from every factor and the choices of the second term from exactly one factor remain.
Summing these contributions gives
\begin{align}\label{eq:colored-diagonal-color-factor}
\prod_{u\in\Omega_c}\left(1-s_u\sum_{l=1}^Ny_{h(l)}\right)
&=1-\left(\sum_{u\in\Omega_c}s_u\right)\left(\sum_{l=1}^Ny_{h(l)}\right).
\end{align}
Hence, the product over all $M$ modes is determined by the $j$ sums $\sum_{u\in\Omega_c}s_u$, one for each color.

The sums required by Eqs.~\eqref{eq:colored-reduced-matrix-by-color} and~\eqref{eq:colored-diagonal-color-factor} both have the form $\sum_{u\in\Omega_c}s_uf(u)$; $f(u)=1$ for the diagonal factor and $f(u)=V_{u,\beta}\overline{V_{u,\alpha}}$ for the $(\alpha,\beta)$ entry of the reduced matrix.
We start by assigning $s_u=x_c$ to every mode of the class and correct the prefix modes, for which the required change is $x_cz-x_c=x_c(z-1)$.
Thus,
\begin{align}
\sum_{u\in\Omega_c}s_uf(u)
&=x_c\left\{\sum_{u\in\Omega_c}f(u)
+(z-1)\sum_{u\in\Omega_c\cap\{a_1,\ldots,a_\ell\}}f(u)\right\}.
\label{eq:colored-prefix-total-correction}
\end{align}
The first sum inside the braces is independent of the prefix, and only the second sum must be computed for each query.

The second sum runs over the prefix modes of color $c$.
Since the prefix is collision-free, the positions $t$ satisfying $g(a_t)=c$ enumerate each such mode exactly once, giving
\begin{align}
\sum_{u\in\Omega_c\cap\{a_1,\ldots,a_\ell\}}f(u)
&=\sum_{\substack{1\leq t\leq\ell\\g(a_t)=c}}f(a_t).
\label{eq:colored-prefix-index-sum}
\end{align}

For $f(u)=1$, the full color-class sum is $\abs{\Omega_c}$, and the prefix sum $\sum_{1\leq t\leq\ell:\,g(a_t)=c}1$ counts the prefix entries of color $c$.
Using these counts for the diagonal prefactor and the corresponding matrix-entry sums for $\mathsf F\mathsf A$, Eq.~\eqref{eq:colored-sylvester-factorization} becomes
\begin{align}
\mathcal P_j
={}&\prod_{c=1}^j\left[1-x_c\left(\sum_{l=1}^Ny_{h(l)}\right)
\left\{\abs{\Omega_c}+(z-1)\sum_{\substack{1\leq t\leq\ell\\g(a_t)=c}}1\right\}\right]
\nonumber\\
&\times\det\!\left[\delta_{\alpha\beta}-y_{h(\alpha)}\sum_{c=1}^jx_c
\left\{\sum_{u\in\Omega_c}V_{u,\beta}\overline{V_{u,\alpha}}
+(z-1)\sum_{\substack{1\leq t\leq\ell\\g(a_t)=c}}V_{a_t,\beta}\overline{V_{a_t,\alpha}}\right\}
\right]_{\alpha,\beta\in[N]}^{-1}.
\label{eq:colored-first-order-reduction}
\end{align}
In Eq.~\eqref{eq:colored-first-order-reduction}, the full color-class sums can be stored during preprocessing, while the prefix sums (depending on $\bm a$) are computed for each query, as depicted follows.
\begin{enumerate}
\item For fixed $j,h,g$, precompute $\abs{\Omega_c}$ and $\sum_{u\in\Omega_c}V_{u,\beta}\overline{V_{u,\alpha}}$ for all $c\in[j]$ and $\alpha,\beta\in[N]$.
Each mode $u$ contributes one to the count for color $g(u)$ and contributes $V_{u,\beta}\overline{V_{u,\alpha}}$ to that color's sum for each pair $(\alpha,\beta)$.
Processing each of the $M$ modes once therefore requires $O(MN^2)$ operations.
The coefficients of $\sum_{l=1}^Ny_{h(l)}$ are the numbers of input columns of each color and can be computed once in $O(N)$ operations, within the same bound.
The precomputed values are then reused for every prefix query.
\item For a given prefix, we count, for each color, the modes appearing in the prefix and compute the corresponding matrix-entry sums.
Since the prefix contains $\ell$ modes, this requires $O(\ell N^2)$ operations.
We form the expressions in braces in Eq.~\eqref{eq:colored-first-order-reduction} by combining each stored total with $(z-1)$ times the corresponding prefix count or sum.
This requires $O(jN^2)$ operations over the $j$ colors and $N^2$ input-column pairs and specifies the coefficients of the reduced-matrix entries and the individual diagonal factors.
Since $\ell,j\leq n$, their construction costs $O(nN^2)$ per prefix, without summing over all $M$ modes again.
\end{enumerate}

\emph{Computing the coefficients of $\mathcal P_j$.---}
The preceding step specifies the entries of the reduced matrix and the individual diagonal factors.
We now combine them to compute the at most $j+1$ coefficients of $x_1\cdots x_jy_1\cdots y_jz^p$ in $\mathcal P_j$ that determine the order-$j$ contribution to the marginal.
We express the inverse determinant through finitely many matrix and polynomial multiplications and show that this calculation costs $N^3 2^{O(j)}$ arithmetic operations.

We use the polynomial arithmetic introduced above, with complex coefficients and formal variables $x_1,\ldots,x_j,y_1,\ldots,y_j,z$.
Thus we retain only monomials with no repeated $x$ or $y$ variable and with $z$-degree at most $j$.

We use the trace--log identity to express the inverse determinant through matrix powers; the square-zero relations make the required power series finite.
Put $\mathsf L:=\mathsf F\mathsf A$.
The $N^2$ entries of $\mathsf L$ are polynomials, and each of their monomials contains one $x$ and one $y$ variable.
The entries of $\mathsf L^{j+1}$ are sums of products of $j+1$ entries of $\mathsf L$.
Each monomial in such a product contains $j+1$ $x$ variables, so at least one of the $j$ available variables repeats and the term vanishes.
Thus $\mathsf L^{j+1}=0$ under the stated rules.

The entries of $\mathsf L$ commute and have no constant term in the $x$ variables, so the formal identity $-\log\det(I_N-\mathsf L)=\sum_{r\geq1}\operatorname{tr}(\mathsf L^r)/r$ applies~\cite[Eq.~(1)]{goulden1992immanants}.
Here $\log$ denotes its formal power series about one, and every nonconstant term of $\det(I_N-\mathsf L)$ contains an $x$ variable.
As above, a product of $j+1$ terms containing $x$ variables vanishes, so the logarithm is a finite polynomial and requires no analytic convergence assumption.
The trace sum terminates at $r=j$ because $\mathsf L^{j+1}=0$, and its $(j+1)$st power also vanishes because each of its monomials contains an $x$ variable.
Exponentiating the identity therefore gives the finite expression
\begin{align}
\det(I_N-\mathsf L)^{-1}
&=\sum_{s=0}^j\frac1{s!}\left(\sum_{r=1}^j\frac{\operatorname{tr}(\mathsf L^r)}r\right)^s.
\label{eq:colored-finite-trace-expansion}
\end{align}
The inner sum is a scalar polynomial formed from the matrix traces, and the outer sum gives the inverse determinant.
We next bound the cost of evaluating this expression and multiplying it by the diagonal prefactor in Eq.~\eqref{eq:colored-first-order-reduction}.

Each surviving monomial is specified by a subset of the $x$ variables, a subset of the $y$ variables, and a $z$-degree between zero and $j$.
There are $2^j$ choices for each subset and $j+1$ choices for the degree, giving $(j+1)4^j$ possible monomials.
We store one complex coefficient for each, including zero when the monomial is absent.
Thus a matrix power requires $N^2$ such lists of coefficients, one for each entry, while a single polynomial such as its trace or $\mathcal P_j$ requires one list.
Monomials containing only some of the colors must also be retained, since their products can contribute to the required coefficients.

Polynomial addition adds the coefficients of matching monomials.
For multiplication, we consider each pair of monomials, one from each input polynomial.
We discard the product if an $x$ or $y$ variable is repeated or the total $z$-degree exceeds $j$.
Otherwise, we multiply the two complex coefficients and add the result to the coefficient of the resulting monomial.
All contributions to the same monomial are combined, so every intermediate polynomial still requires only $(j+1)4^j$ coefficients.

There are at most $((j+1)4^j)^2$ monomial pairs, each requiring at most one multiplication and one addition of complex numbers.
For each fixed choice of the $x$ and $y$ variables in the two input monomials, repeated variables are checked once before varying the $z$-degrees.
Including these checks, a polynomial multiplication costs $O(((j+1)4^j)^2)=O(j^2 16^j)$ arithmetic operations.
Polynomial addition and initialization each cost $O((j+1)4^j)$ and are included in this bound.

We compute the matrix powers successively using
\begin{align}
(\mathsf L^{r+1})_{\alpha\beta}
&=\sum_{l=1}^N(\mathsf L^r)_{\alpha l}\mathsf L_{l\beta}
\end{align}
for $1\leq r<j$.
Each of the $N^2$ entries requires $N$ polynomial products and their sum, so one matrix multiplication costs $O(j^2N^3 16^j)$ arithmetic operations.
We combine contributions to the same monomial in each entry after every multiplication, without retaining separate terms for all sequences of intermediate matrix indices.
Computing the powers through $\mathsf L^j$ requires at most $j-1$ matrix multiplications, for a total cost of $O(j^3N^3 16^j)$.

For each $r$, taking $\operatorname{tr}(\mathsf L^r)$ adds the $N$ diagonal polynomials of $\mathsf L^r$.
Combining these traces with weights $1/r$ gives the inner sum in Eq.~\eqref{eq:colored-finite-trace-expansion}, a single polynomial, at a cost of $O(j^2N4^j)$.
Its powers through order $j$ are computed successively using $O(j)$ polynomial multiplications, and adding them with weights $1/s!$ gives the inverse determinant.
Multiplying the $j$ factors of the diagonal prefactor and then multiplying by this inverse determinant requires another $O(j)$ polynomial multiplications and gives all retained coefficients of $\mathcal P_j$.
Since each polynomial multiplication costs $O(j^2 16^j)$, these products require $O(j^3 16^j)$ additional operations.
Together with the coefficient additions, the total cost therefore remains $O(j^3N^3 16^j)$.

The required coefficients of $x_1\cdots x_jy_1\cdots y_jz^p$ can now be read directly from the stored coefficients of $\mathcal P_j$.
There are at most $j+1$ of them, since the allowed values of $p$ lie between zero and $j$.
Multiplying them by the corresponding factors in Eq.~\eqref{eq:colored-prefix-marginal-formula} and summing over $p$ gives the order-$j$ contribution to the marginal without increasing the bound.

Together with the preceding construction, preprocessing costs $O(MN^2)$ once per order $j$, and each prefix query costs $O(nN^2+j^3N^3 16^j)$.
Summing over $1\leq j\leq d$ gives $O(dMN^2)\leq MN^2 2^{O(d)}$ preprocessing operations.
Similarly, summing the query costs gives
\begin{align}
\sum_{j=1}^d O(nN^2+j^3N^3 16^j)
&=O(dnN^2+d^4N^3 16^d)
=N^3 2^{O(d)},
\end{align}
where $n\leq N$ and polynomial factors in $d$ are absorbed into $2^{O(d)}$.
Adding the constant contribution in Eq.~\eqref{eq:colored-prefix-marginal-formula} does not increase this bound.
This proves Lemma~\ref{lem:colored-one-realization-cost}.

\end{proof}

\subsection{Error after averaging colorings}
\label{app:coloring-error}

Lemmas~\ref{lem:colored-exact-prefix} and~\ref{lem:colored-one-realization-cost} give the exact prefix marginals and their evaluation cost for one fixed coloring.
It remains to determine how many independent colorings needs to be averaged to achieve required error bound.
We first bound the error introduced by one coloring under the Gaussian measure, where orthogonality reduces the calculation to coefficient variances.
Averaging independent colorings reduces this error, which we then combine with the truncation error and transfer to the Haar-random unitary setting.
The resulting bound concerns $\widehat\Delta_{\mathrm{cf}}(U)$, the collision-free $L^1$ error of $\widehat q_{d,U}$ relative to $p_U$, defined in Eq.~\eqref{eq:randomized-Delta-cf-definition}.

\begin{lemma}[Averaged coloring error]
\label{lem:colored-average-error}
Under the assumptions of Proposition~\ref{prop:randomized-marginal-complexity}, let $1\leq d\leq n$ and $L:=\lceil8C_Hde^{2d}/(\varepsilon^2\delta)\rceil$.
Let $\widehat q_{d,U}$ be the average of $L$ copies of $\widehat q_{d,U}^{(1)}$ generated by independent colorings.
Then $\E_{\mathrm{col}}[\widehat q_{d,U}(\bm r)]=q_{d,U}(\bm r)$ for every collision-free string $\bm r$, and, if $M\geq C_0n^2N$,
\begin{align}
\E_{U,\mathrm{col}}\!\left[\widehat\Delta_{\mathrm{cf}}(U)^2\right]
&\leq C_HR_d(n,N)+\frac{\varepsilon^2\delta}{8}.
\end{align}
\end{lemma}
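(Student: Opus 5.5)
Unbiasedness is immediate: each of the $L$ copies satisfies $\E_{\mathrm{col}}[\widehat q_{d,U}^{(1)}(\bm r)]=q_{d,U}(\bm r)$ by Eq.~\eqref{eq:unbiasedness}, and averaging preserves this. For the error bound, I will first work pointwise for a fixed collision-free $\bm r$. I write $p_U-\widehat q_{d,U}=(p_U-q_{d,U})+(q_{d,U}-\widehat q_{d,U})$. Because $q_{d,U}$ is a deterministic function of $U$ and the coloring is unbiased, the cross term vanishes under $\E_{\mathrm{col}}$. This gives
\begin{align}
\E_{\mathrm{col}}\abs{p_U(\bm r)-\widehat q_{d,U}(\bm r)}^2
=\abs{p_U(\bm r)-q_{d,U}(\bm r)}^2+\frac1L\operatorname{Var}_{\mathrm{col}}\!\left[\widehat q_{d,U}^{(1)}(\bm r)\right],
\end{align}
using independence of the $L$ colorings. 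The first term is handled exactly as in Lemma~\ref{lem:pointwise-haar-error}. For the second, I bound $\operatorname{Var}_{\mathrm{col}}\leq\E_{\mathrm{col}}\abs{\widehat q_{d,U}^{(1)}(\bm r)}^2$ minus the constant part. It is cleaner to bound $\E_{\mathrm{col}}\abs{\widehat q^{(1)}-q}^2$ directly, and then pass to Haar by density domination: $\E_U[\cdot]\leq C_H\E_{X\sim p_{\mathrm G}}[\cdot]$, with $A_{\bm r}$ replaced by a Gaussian $X$. This is legitimate because the quantity is a nonnegative function of $A_{\bm r}$ after averaging over the coloring, and the colorings of the modes $r_1,\dots,r_n$ are i.i.d.\ regardless of which modes they are.

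The main step is the Gaussian second moment of the coloring fluctuation. Write $\widehat q^{(1)}-q=\frac1{M^n}\sum_{j=1}^d\frac1{(N)_j}\sum_{J,R}\xi_{J,R}\,T_{J,R}$. Here $T_{J,R}:=\sum_{\sigma,\tau}\prod_{l\in J}B^{(l)}_{\sigma(l),\tau(l)}$ and $\xi_{J,R}:=\frac{j^{2j}}{(j!)^2}\mathbf 1_{\mathrm{colorful}}-1$, which is independent of $X$ and has mean zero. By Eq.~\eqref{eq:bilinear-orthonormal} and independence of columns, the Gaussian expectation $\E_X[T_{J,R}\overline{T_{J',R'}}]$ vanishes unless $(J,R)=(J',R')$. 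Distinct $R$ give disjoint row-index patterns, and different orders $j$ differ in $J$. I also need $\E_X\abs{T_{J,R}}^2=(j!)^2$, counting pairs $(\sigma,\tau)$ as in Lemma~\ref{lem:component-norm}. Hence
\begin{align}
\E_{X,\mathrm{col}}\abs{\widehat q^{(1)}-q}^2=\frac1{M^{2n}}\sum_{j=1}^d\frac{\binom Nj\binom nj(j!)^2}{(N)_j^2}\,\E[\xi^2]
\leq\frac1{M^{2n}}\sum_{j=1}^d\frac{j^{2j}}{(j!)^2},
\end{align}
using $\E[\xi^2]\leq j^{2j}/(j!)^2$ and $\binom nj(j!)^2/((N)_j^2/\binom Nj)=\binom nj/\binom Nj\leq1$. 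Stirling gives $j^{2j}/(j!)^2\leq e^{2j}$, so the sum is at most $de^{2d}$. I expect verifying this cross-term orthogonality to be the only delicate point: the coloring weights multiply whole $(J,R)$ blocks, so orthogonality in $X$ still kills cross terms even though the $\xi$'s are correlated.

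Finally, I combine the pieces. I obtain $\E_{U,\mathrm{col}}\abs{p_U(\bm r)-\widehat q_{d,U}(\bm r)}^2\leq C_H\bigl(R_d(n,N)+de^{2d}/L\bigr)/M^{2n}$ for every collision-free $\bm r$. Applying Cauchy--Schwarz over $[M]^n_{\neq}$ as in Proposition~\ref{prop:collision-free-L1} yields $\E_{U,\mathrm{col}}[\widehat\Delta_{\mathrm{cf}}^2]\leq C_HR_d(n,N)+C_Hde^{2d}/L$. The choice $L=\lceil8C_Hde^{2d}/(\varepsilon^2\delta)\rceil$ then makes the second term at most $\varepsilon^2\delta/8$.
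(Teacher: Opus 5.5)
Your proof is correct and follows essentially the same route as the paper: Gaussian orthogonality of the centered products removes all cross terms at a fixed coloring. The per-product coloring variance $j^{2j}/(j!)^2-1$, summed against $\binom nj/\binom Nj$, gives at most $de^{2d}$. Unbiasedness kills the truncation--coloring cross term, and density domination plus Cauchy--Schwarz finishes the argument. The only cosmetic differences are that you group products into $(J,R)$ blocks and do the bias--variance split at fixed $U$ before dominating each piece separately, whereas the paper does the split under the Gaussian measure and transfers the combined error for each fixed collection of colorings.
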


\begin{proof}
Fix a collision-free string $\bm r$.
We first bound the mean-square error of $\widehat q_{d,U}^{(1)}(\bm r)-q_{d,U}(\bm r)$ under the Gaussian measure.
For this calculation, replace the scaled Haar submatrix $A_{\bm r}=\sqrt M\,U_{\bm r,[N]}$ by $X\sim\mathcal N_{\mathbb C}(0,1)^{n\times N}$ with independent standard complex Gaussian entries, drawn independently of the colorings.
Until we return to Haar submatrices below, we retain the notation $p_U(\bm r)$, $q_{d,U}(\bm r)$, $\widehat q_{d,U}^{(1)}(\bm r)$, and $\widehat q_{d,U}(\bm r)$ for their evaluations at $X$, with matrix expectations taken over $X$.

Since $\bm r$ is collision-free, $\delta_{r_a,r_b}=\delta_{ab}$.
Relabeling its physical rows by their positions, the factors in Eq.~\eqref{eq:colored-q-one} become $B_{ab}^{(l)}=X_{a,l}\overline{X_{b,l}}-\delta_{ab}$ for $a,b\in[n]$, while row $a$ retains the color $g_j(r_a)$.
The colors $g_j(r_1),\ldots,g_j(r_n)$ remain independent and uniform on $[j]$, so the same retention probabilities apply to these Gaussian evaluations.

Multiplying Eqs.~\eqref{eq:uncolored-q-expansion} and~\eqref{eq:colored-q-one} by $M^n$ removes their common factor $M^{-n}$.
We write both sums over the same indexed products and, for each $j,J,R$, put $I:=\mathbf 1_{\{h_j(J)=[j]\}}\mathbf 1_{\{g_j(\{r_a:a\in R\})=[j]\}}$.
This indicator does not depend on the two bijections $\sigma,\tau$.
Each order-$j$ product has coefficient $1/(N)_j$ in $M^nq_{d,U}$ and $j^{2j}I/((j!)^2(N)_j)$ in $M^n\widehat q_{d,U}^{(1)}$.
The constant terms are both one and cancel, leaving the difference of these coefficients.
For fixed coloring, these coefficient differences are constant under $\E_X$.

We now use Gaussian orthogonality to evaluate the squared error from these coefficient differences.
Equation~\eqref{eq:bilinear-orthonormal} gives $\E_X[B_{ab}^{(l)}\overline{B_{a'b'}^{(l)}}]=\delta_{aa'}\delta_{bb'}$ for one column.
For products using the same column set $J$, independence across columns gives
\begin{align}
\E_X\!\left[
\prod_{l\in J}B_{\sigma(l),\tau(l)}^{(l)}
\overline{\prod_{l\in J}B_{\sigma'(l),\tau'(l)}^{(l)}}
\right]
&=
\prod_{l\in J}
\delta_{\sigma(l),\sigma'(l)}
\delta_{\tau(l),\tau'(l)}.
\end{align}
Products with different column sets have zero inner product, since a column appearing in only one contributes an independent centered factor.
For equal column sets, the identity above forces $R,\sigma,\tau$ to agree for a nonzero inner product.
Thus all distinct indexed products, including those of different orders, are orthogonal and each has squared norm one.

For each fixed coloring, substituting the two expansions and applying this orthogonality gives
\begin{align}
M^{2n}\E_X\!\left[
\left|\widehat q_{d,U}^{(1)}(\bm r)-q_{d,U}(\bm r)\right|^2
\right]
&=
\E_X\!\left[
\left|
\sum_{j=1}^d
\sum_{\substack{J\subseteq[N],\,R\subseteq[n]\\\abs{J}=\abs{R}=j}}
\sum_{\sigma,\tau\in\operatorname{Bij}(J,R)}
\left(\frac{j^{2j}I}{(j!)^2(N)_j}-\frac1{(N)_j}\right)
\prod_{l\in J}B_{\sigma(l),\tau(l)}^{(l)}
\right|^2
\right]
\nonumber\\
&=
\sum_{j=1}^d
\sum_{\substack{J\subseteq[N],\,R\subseteq[n]\\\abs{J}=\abs{R}=j}}
\sum_{\sigma,\tau\in\operatorname{Bij}(J,R)}
\left|\frac{j^{2j}I}{(j!)^2(N)_j}-\frac1{(N)_j}\right|^2
\prod_{l\in J}\E_X\!\left[
\left|B_{\sigma(l),\tau(l)}^{(l)}\right|^2
\right]
\nonumber\\
&=
\sum_{j=1}^d
\sum_{\substack{J\subseteq[N],\,R\subseteq[n]\\\abs{J}=\abs{R}=j}}
\sum_{\sigma,\tau\in\operatorname{Bij}(J,R)}
\left|\frac{j^{2j}I}{(j!)^2(N)_j}-\frac1{(N)_j}\right|^2.
\label{eq:colored-gaussian-coefficient-error}
\end{align}
The coloring events for different products can be correlated, as their cross terms have already vanished under the Gaussian expectation.

To average these squared coefficient differences over the colorings, we use $\E_{\mathrm{col}}[I]=(j!/j^j)^2$ from Eq.~\eqref{eq:colored-retention-probability} and $I^2=I$.
The colored coefficient has mean $1/(N)_j$, and its variance is
\begin{align}
\E_{\mathrm{col}}\!\left[
\left|\frac{j^{2j}I}{(j!)^2(N)_j}-\frac1{(N)_j}\right|^2
\right]
&=\frac{j^{2j}-(j!)^2}{(j!)^2(N)_j^2}.
\end{align}
At order $j$, there are $\binom Nj$ choices of $J$, $\binom nj$ choices of $R$, and $(j!)^2$ choices of the two bijections.
The variance is the same for each such product, so taking the coloring expectation of Eq.~\eqref{eq:colored-gaussian-coefficient-error} gives
\begin{align}
M^{2n}\E_{X,\mathrm{col}}\!\left[
\left|\widehat q_{d,U}^{(1)}(\bm r)-q_{d,U}(\bm r)\right|^2
\right]
=\sum_{j=1}^d\binom Nj\binom nj(j!)^2
\frac{j^{2j}-(j!)^2}{(j!)^2(N)_j^2}
=\sum_{j=1}^d
\left[\frac{j^{2j}}{(j!)^2}-1\right]\frac{\binom nj}{\binom Nj}
\leq de^{2d}.
\label{eq:coloring-variance}
\end{align}
The second equality uses $(N)_j=\binom Nj j!$.
For the final bound, $\binom nj/\binom Nj\leq1$ and $j!\geq(j/e)^j$ give $j^{2j}/(j!)^2\leq e^{2j}\leq e^{2d}$.

We now pass from one coloring to the average of $L$ independent colorings and bound $\widehat q_{d,U}(\bm r)-q_{d,U}(\bm r)$.
For every fixed matrix $X$, these colorings give independent estimates with mean $q_{d,U}(\bm r)$ by Eq.~\eqref{eq:unbiasedness}.
Their average therefore has the one-copy variance divided by $L$.
Taking the expectation over $X$ and using Eq.~\eqref{eq:coloring-variance} gives
\begin{align}
M^{2n}\E_{X,\mathrm{col}}\!\left[
\left|\widehat q_{d,U}(\bm r)-q_{d,U}(\bm r)\right|^2
\right]
&\leq\frac{de^{2d}}{L}
\leq\frac{\varepsilon^2\delta}{8C_H}.
\nonumber
\end{align}
In the choice of $L$, the factor $C_H$ compensates for the subsequent Haar-density comparison, while the factor $8$ gives the required coloring-error contribution $\varepsilon^2\delta/8$.

To bound the total error $p_U(\bm r)-\widehat q_{d,U}(\bm r)$, we combine this coloring error with the truncation error.
At fixed $X$, $p_U(\bm r)-q_{d,U}(\bm r)$ is constant under the coloring expectation, whereas $q_{d,U}(\bm r)-\widehat q_{d,U}(\bm r)$ has mean zero.
Their cross term therefore vanishes under the coloring expectation, so expanding the squared total error and averaging over $X$ gives
\begin{align}
M^{2n}\E_{X,\mathrm{col}}
\left|p_U(\bm r)-\widehat q_{d,U}(\bm r)\right|^2
&=M^{2n}\E_X
\left|p_U(\bm r)-q_{d,U}(\bm r)\right|^2
+M^{2n}\E_{X,\mathrm{col}}
\left|q_{d,U}(\bm r)-\widehat q_{d,U}(\bm r)\right|^2
\nonumber\\
&\leq R_d(n,N)+\frac{\varepsilon^2\delta}{8C_H}.
\label{eq:colored-gaussian-error-split}
\end{align}
Proposition~\ref{prop:gaussian-truncation} gives the truncation contribution, and the preceding averaging bound gives the coloring contribution.

To transfer this pointwise bound to the Haar measure, fix the collection of colorings used to form $\widehat q_{d,U}$.
Both $p_U(\bm r)$ and $\widehat q_{d,U}(\bm r)$ are functions of $A_{\bm r}$ for fixed colorings, and their squared difference is nonnegative.
Under $M\geq C_0n^2N$, Lemma~\ref{lem:rectangular-domination} bounds its Haar expectation by $C_H$ times its Gaussian expectation for each fixed collection.
Averaging over the colorings and applying Eq.~\eqref{eq:colored-gaussian-error-split} gives
\begin{align}
\E_{U,\mathrm{col}}
\left|p_U(\bm r)-\widehat q_{d,U}(\bm r)\right|^2
&\leq
\frac{C_HR_d(n,N)+\varepsilon^2\delta/8}{M^{2n}}.
\label{eq:colored-pointwise-error}
\end{align}

Finally, we pass from the pointwise bound to the collision-free $L^1$ error $\widehat\Delta_{\mathrm{cf}}(U)$, which sums the absolute errors over the $(M)_n$ collision-free strings.
Applying Cauchy--Schwarz, taking expectations, and using Eq.~\eqref{eq:colored-pointwise-error} with $(M)_n\leq M^n$ gives
\begin{align}
\E_{U,\mathrm{col}}\!\left[\widehat\Delta_{\mathrm{cf}}(U)^2\right]
&\leq (M)_n\sum_{\bm r\in[M]_{\neq}^{n}}
\E_{U,\mathrm{col}}
\left|p_U(\bm r)-\widehat q_{d,U}(\bm r)\right|^2
\leq \frac{(M)_n^2}{M^{2n}}
\left(C_HR_d(n,N)+\frac{\varepsilon^2\delta}{8}\right)
\leq C_HR_d(n,N)+\frac{\varepsilon^2\delta}{8}.
\end{align}
The first factor $(M)_n$ comes from Cauchy--Schwarz, and the second comes from the number of strings in the sum.
This is Eq.~\eqref{eq:randomized-collision-free-error} and proves Lemma~\ref{lem:colored-average-error}.
\end{proof}

The proof of Proposition~\ref{prop:randomized-marginal-complexity} is now straightforward.

\begin{proof}[Proof of Proposition~\ref{prop:randomized-marginal-complexity}]
Assume first that $1\leq d\leq n$.
Equation~\eqref{eq:colored-q-one} defines a real signed function for each fixed coloring, and Lemma~\ref{lem:colored-exact-prefix} gives its exact prefix marginals.
Lemma~\ref{lem:colored-average-error} shows that averaging $L$ independent copies gives the unbiased function $\widehat q_{d,U}$ with the error bound in Eq.~\eqref{eq:randomized-collision-free-error}.

For one coloring, Lemma~\ref{lem:colored-one-realization-cost} gives a preprocessing cost of $MN^2 2^{O(d)}$ arithmetic operations.
After this preprocessing, the exact marginal of any collision-free prefix can be evaluated in $N^3 2^{O(d)}$ arithmetic operations using the same precomputed data.
For a given prefix, we evaluate its marginal for each of the $L$ fixed colorings and average the results.
By linearity, this gives the exact prefix marginal of $\widehat q_{d,U}$.
The choice in Lemma~\ref{lem:colored-average-error} satisfies $L=\varepsilon^{-2}\delta^{-1}2^{O(d)}$, and multiplying the single-coloring costs by $L$ therefore gives $MN^2\varepsilon^{-2}\delta^{-1}2^{O(d)}$ preprocessing operations and $N^3\varepsilon^{-2}\delta^{-1}2^{O(d)}$ operations per prefix query.

For $d=0$, no coloring is required, and $\widehat q_{0,U}(\bm r)=q_{0,U}(\bm r)=1/M^n$.
A collision-free prefix of length $\ell$ has $(M-\ell)_{n-\ell}$ completions, so its marginal is $(M-\ell)_{n-\ell}/M^n$, which can be evaluated within the stated complexity bounds.
Only the truncation error remains, and repeating the argument from the proof of Proposition~\ref{prop:collision-free-L1} gives $\E_U[\widehat\Delta_{\mathrm{cf}}(U)^2]\leq C_HR_0(n,N)$.
This proves Proposition~\ref{prop:randomized-marginal-complexity}.
\end{proof}

\subsection{Proof of the colored coefficient identity}
\label{app:colored-coefficient-identity-proof}

In the remainder of this appendix, we prove Eq.~\eqref{eq:colored-coefficient-identity}, which we use in the proof of Lemma~\ref{lem:colored-exact-prefix}.
We first express the sums over $J,\sigma,\tau$ as a coefficient of a permanent for fixed $\Omega$.
We then sum these permanents using an inverse-determinant formula~\cite{goulden1992immanants, chabaud2022quantum, macmahon2004combinatory} and extract the coefficient that imposes the output-color and prefix conditions.

Fix an order $j$ and a collision-free prefix $\mathbf a=(a_1,\ldots,a_\ell)$.
Write $h=h_j:[N]\to[j]$ and $g=g_j:[M]\to[j]$ for the input and output colorings, respectively, and recall that $\Omega_{\mathbf a}=[M]\setminus\{a_1,\ldots,a_\ell\}$.
For fixed $\Omega\subseteq[M]$ with $\abs{\Omega}=j$, consider the $j\times j$ matrix with entries $\sum_{l=1}^Ny_{h(l)}B_{uv}^{(l)}$ for $u,v\in\Omega$.
Its permanent sums over bijections $\pi:\Omega\to\Omega$ between the row and column mode labels.
Expanding each selected entry then chooses an input-column index $\lambda(u)\in[N]$ for each row $u$.
Summing over the maps $\lambda:\Omega\to[N]$ lists all these choices exactly once, giving
\begin{align}
\Per\!\left[
\sum_{l=1}^N y_{h(l)}B_{uv}^{(l)}
\right]_{u,v\in\Omega}
=
\sum_{\pi\in\operatorname{Bij}(\Omega,\Omega)}
\prod_{u\in\Omega}
\left(\sum_{l=1}^N y_{h(l)}B_{u,\pi(u)}^{(l)}\right)
=
\sum_{\pi\in\operatorname{Bij}(\Omega,\Omega)}
\sum_{\lambda:\Omega\to[N]}
\prod_{u\in\Omega}
y_{h(\lambda(u))}B_{u,\pi(u)}^{(\lambda(u))}.
\end{align}

The $y$-monomial in a term equals $y_1\cdots y_j$ precisely when the colors $h(\lambda(u))$, $u\in\Omega$, are all distinct.
If two values of $\lambda$ were equal, their colors would also be equal.
Every surviving $\lambda$ is therefore injective, so $J:=\lambda(\Omega)$ has size $j$, satisfies $h(J)=[j]$, and $\lambda:\Omega\to J$ is a bijection.

To recover the product in Eq.~\eqref{eq:colored-coefficient-identity}, we reindex $u \in \Omega$ by the selected input columns $l\in J$.
Specifically, for every surviving $\lambda$ and $\pi$, define $\sigma:=\lambda^{-1}:J\to\Omega$ and $\tau:=\pi\circ\lambda^{-1}:J\to\Omega$.
For each $l\in J$, the mode $\sigma(l)$ is the unique row that selected input column $l$, and $\tau(l)=\pi(\sigma(l))$ is the physical-mode label indexing the corresponding matrix column.
Reindexing the product by $l=\lambda(u)$ gives
\begin{align}
\prod_{u\in\Omega}B_{u,\pi(u)}^{(\lambda(u))}
&=\prod_{l\in J}B_{\sigma(l),\tau(l)}^{(l)}.
\end{align}
Every pair $\sigma,\tau\in\operatorname{Bij}(J,\Omega)$ recovers a unique original choice through $\lambda=\sigma^{-1}$ and $\pi=\tau\circ\sigma^{-1}$.
Hence, taking into account all possible choice of $J\subseteq[N]$ with $\abs{J}=j$ and $h(J)=[j]$, it follows that
\begin{align}\label{eq:y-coefficient-permanent}
[y_1\cdots y_j]
\Per\!\left[
\sum_{l=1}^N y_{h(l)}B_{uv}^{(l)}
\right]_{u,v\in\Omega}
&={}
\sum_{\substack{J\subseteq[N],\,\abs{J}=j\\h(J)=[j]}}
\sum_{\sigma,\tau\in\operatorname{Bij}(J,\Omega)}
\prod_{l\in J}B_{\sigma(l),\tau(l)}^{(l)}.
\end{align}

We next collect the permanents over all subsets $\Omega\subseteq[M]$.
To distinguish the sets before imposing the output-color condition, introduce temporary formal variables $t_1,\ldots,t_M$ that commute, where $t_u$ records the choice of mode $u$ and $t_u^2=0$ excludes repeated choices of that mode.
Each set $\Omega$ is then represented by $\prod_{u\in\Omega}t_u$.
These variables are used only in this proof, not in the evaluation procedure of Lemma~\ref{lem:colored-one-realization-cost}.

To rewrite this sum in inverse-determinant form, we invoke the square-free coefficient version of MacMahon's master theorem~\cite[Corollary~2.2]{goulden1992immanants} (see also Refs.~\cite{chabaud2022quantum, macmahon2004combinatory}).
Specifically, imposing the square-free relations $t_u^2=0$ in the theorem yields the identity
\begin{align}
\det\!\left[
\delta_{uv}-t_u\sum_{l=1}^Ny_{h(l)}B_{uv}^{(l)}
\right]_{u,v\in[M]}^{-1}
&={}
\sum_{\Omega\subseteq[M]}
\left(\prod_{u\in\Omega}t_u\right)
\Per\!\left[
\sum_{l=1}^Ny_{h(l)}B_{uv}^{(l)}
\right]_{u,v\in\Omega}.
\label{eq:colored-square-free-macmahon}
\end{align}
The term $\Omega=\varnothing$ is one by the convention $\Per(\varnothing)=1$.

We now substitute $t_u=x_{g(u)}z^{\mathbf 1_{\{u\notin\Omega_{\mathbf a}\}}}$ for each mode $u$.
The substituted expressions commute and square to zero because $x_{g(u)}^2=0$.
The substitution therefore preserves the defining relations $t_u^2=0$ and hence the identity above.
Then the left-hand side of Eq.~\eqref{eq:colored-square-free-macmahon} becomes $\mathcal P_j$ by Eq.~\eqref{eq:colored-prefix-generating-polynomial}, giving
\begin{align}
\mathcal P_j(\bm x,\bm y,z)
={}&
\sum_{\Omega\subseteq[M]}
\left(\prod_{u\in\Omega}x_{g(u)}\right)
z^{\abs{\Omega\setminus\Omega_{\mathbf a}}}
\Per\!\left[
\sum_{l=1}^Ny_{h(l)}B_{uv}^{(l)}
\right]_{u,v\in\Omega}.
\end{align}
Here the exponent of $z$ is $\sum_{u\in\Omega}\mathbf 1_{\{u\notin\Omega_{\mathbf a}\}}=\abs{\Omega\setminus\Omega_{\mathbf a}}$, the number of selected prefix modes.
If $\abs{\Omega}<j$, its product cannot contain all $j$ variables $x_1,\ldots,x_j$.
If $\abs{\Omega}>j$, two modes share a color and the product vanishes.
For $\abs{\Omega}=j$, the product equals $x_1\cdots x_j$ exactly when $g(\Omega)=[j]$, and otherwise vanishes.
Thus, for $0\leq p\leq j$, extracting $x_1\cdots x_jz^p$ retains precisely the sets with $\abs{\Omega}=j$, $g(\Omega)=[j]$, and $\abs{\Omega\setminus\Omega_{\mathbf a}}=p$.

Finally, for each such set $\Omega$, extracting $y_1\cdots y_j$ from the permanent gives the sums over $J$ and the two bijections, as established in Eq.~\eqref{eq:y-coefficient-permanent}.
Combining these coefficient extractions yields
\begin{align}
[x_1\cdots x_jy_1\cdots y_jz^p]\mathcal P_j
&={}
\sum_{\substack{\Omega\subseteq[M],\,\abs{\Omega}=j\\
\abs{\Omega\setminus\Omega_{\mathbf a}}=p,\ g(\Omega)=[j]}}
\sum_{\substack{J\subseteq[N],\,\abs{J}=j\\h(J)=[j]}}
\sum_{\sigma,\tau\in\operatorname{Bij}(J,\Omega)}
\prod_{l\in J}B_{\sigma(l),\tau(l)}^{(l)}   ,
\end{align}
thus proving Eq.~\eqref{eq:colored-coefficient-identity}.

\end{document}